\crefname{hypothesis}{Hypothesis}{Hypotheses}
\title{Concentration in Gossip Opinion Dynamics\\ over Random Graphs \thanks{
This work was funded by the Knut and Alice Wallenberg Foundation (Wallenberg Scholar Grant), the Swedish Research Council (Distinguished Professor Grant 2017-01078), and the Swedish Foundation for Strategic Research (CLAS
Grant RIT17-0046).}}
\author{Yu Xing\thanks{Division of Decision and Control Systems, School of Electrical Engineering and Computer Science, KTH Royal Institute of Technology, and Digital Futures, Stockholm, Sweden.
  (\email{yuxing2@kth.se}, \email{kallej@kth.se}).}
\and Karl H. Johansson\footnotemark[2]}
\DeclareMathOperator{\diag}{diag}
\newenvironment{thmbis}[1]
  {%
   \addtocounter{theorem}{-1}%
   \begin{theorem}}
  {\end{theorem}}
\newcommand{\RR}{\mathbb{R}}
\newcommand{\EE}{\mathbb{E}}
\newcommand{\PP}{\mathbb{P}}
\newcommand{\NN}{\mathbb{N}}
\newcommand{\eps}{\varepsilon}
\newcommand{\mtcc}{\mathcal{C}}
\newcommand{\mtce}{\mathcal{E}}
\newcommand{\mtcg}{\mathcal{G}}
\newcommand{\mtcs}{\mathcal{S}}
\newcommand{\mtcv}{\mathcal{V}}
\newcommand{\mtcx}{\mathcal{X}}
\newcommand{\Let}{: =}
\newcommand{\teL}{= :}
\newcommand{\bfl}{\mathbf{1}}
\newcommand{\bfo}{\mathbf{0}}
\newcommand{\bfx}{\mathbf{x}}
\newcommand{\tx}{\textup}
\newcommand{\tp}{\tx{T}}
\newcommand{\nr}{n_{\tx{r}}}
\newcommand{\ns}{n_{\tx{s}}}
\newcommand{\Psir}{\Psi^{(\tx{r})}}
\newcommand{\Psis}{\Psi^{(\tx{s})}}
\newcommand{\psir}{\psi^{(\tx{r})}}
\newcommand{\psis}{\psi^{(\tx{s})}}
\newcommand{\mtcvr}{\mtcv_{\tx{r}}}
\newcommand{\mtcvs}{\mtcv_{\tx{s}}}
\newcommand{\zs}{z^{(\tx{s})}}
\newcommand{\Xmtcg}{X^{\mtcg}}
\newcommand{\Qmtcg}{Q^{\mtcg}}
\newcommand{\Rmtcg}{R^{\mtcg}}
\newcommand{\barQg}{\bar{Q}^{\mtcg}}
\newcommand{\barRg}{\bar{R}^{\mtcg}}
\newcommand{\barMg}{\bar{M}^{\mtcg}}
\newcommand{\barUg}{\bar{U}^{\mtcg}}
\newcommand{\barLg}{\bar{L}^{\mtcg}}
\newcommand{\alphag}{\alpha^{\mtcg}}
\newcommand{\Wg}{W^{\mtcg}}
\newcommand{\barmtcg}{\bar{\mtcg}}
\newcommand{\barmtce}{\bar{\mtce}}
\newcommand{\xgn}{\bfx^{\mtcg,n}}
\newcommand{\EEg}{\EE_{\mtcg}}
\newcommand{\Smtcg}{S^{\mtcg}}
\newcommand{\Xst}{X^{*}}
\newcommand{\Qst}{Q^{*}}
\newcommand{\Rst}{R^{*}}
\newcommand{\barQst}{\bar{Q}^{*}}
\newcommand{\barRst}{\bar{R}^{*}}
\newcommand{\xstn}{\bfx^{*,n}}
\newcommand{\barMst}{\bar{M}^{*}}
\newcommand{\barUst}{\bar{U}^{*}}
\newcommand{\barLst}{\bar{L}^{*}}
\newcommand{\alphast}{\alpha^{*}}
\newcommand{\Wst}{W^{*}}
\newcommand{\trg}{\tx{RG}}
\newcommand{\trgs}{\tx{RG-S}}
\newcommand{\dir}{d_i^{(\tx{r})}}
\newcommand{\dis}{d_i^{(\tx{s})}}
\newcommand{\di}{d_i}
\newcommand{\Deltar}{\Delta_{\tx{r}}}
\newcommand{\Deltarr}{\Delta_{\tx{rr}}}
\newcommand{\Deltars}{\Delta_{\tx{rs}}}
\newcommand{\Deltasr}{\Delta_{\tx{sr}}}
\newcommand{\deltars}{\delta_{\tx{rs}}}
\begin{document}

\maketitle

\begin{abstract}
We study concentration inequalities in gossip opinion dynamics over random graphs. In the model, a network is generated from a random graph model with independent edges, and agents interact pairwise randomly over the network. During the process, regular agents average neighbors' opinions and then update, whereas stubborn agents do not change opinions. To approximate the original process, we introduce a gossip model over an expected graph, obtained by averaging all possible networks generated from the random graph model. Using concentration inequalities, we derive high-probability bounds for the distance between the expected final opinion vectors over the random graph and over the expected graph. 
Leveraging matrix perturbation results, we show how such concentration can help study the effect of network structure on the expected final opinions in two cases: (i) When the influence of stubborn agents is large, the expected final opinions polarize and are close to stubborn agents' opinions. 
(ii) When the influence of stubborn agents is small, the expected final opinions are close to each other. With the help of concentration inequalities for Markov chains, we obtain high-probability bounds for the distance between time-averaged opinions and the expected final opinions over the expected graph. In simulation, we validate the theoretical findings, and study a gossip model over a stochastic block model that has community structure. 
\end{abstract}

\begin{keywords}
opinion dynamics, social networks, random graphs, concentration
\end{keywords}

\begin{MSCcodes}
93A14, 91D30, 93E15, 60F10
\end{MSCcodes}

\begin{figure*}[t]
	\centering
	\subfigure[\label{fig:exp_consensus}Perfect consensus.]{\includegraphics[width=0.23\textwidth]{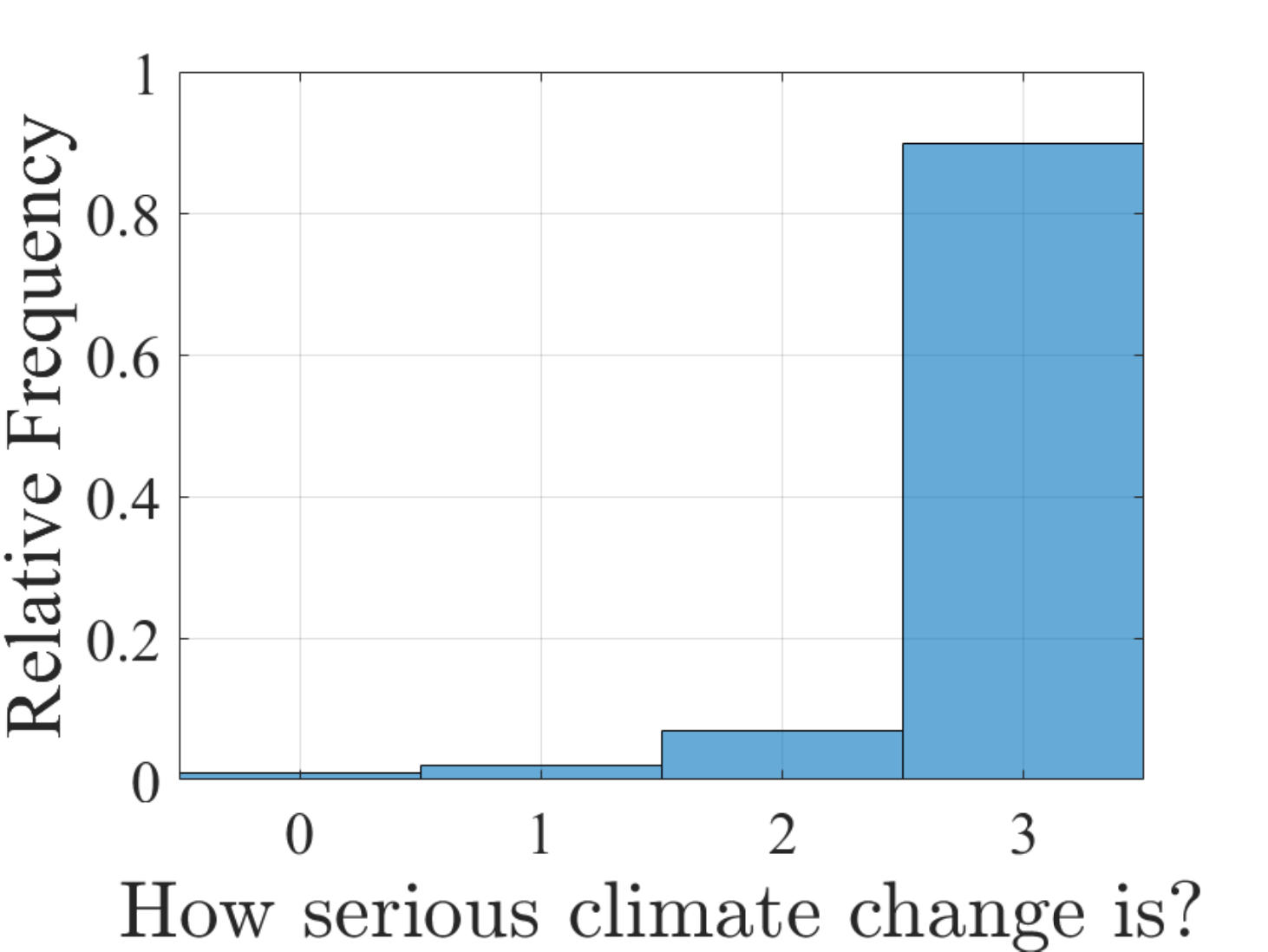}}	\subfigure[\label{fig:exp_polarization}Polarization.]{\includegraphics[width=0.23\textwidth]{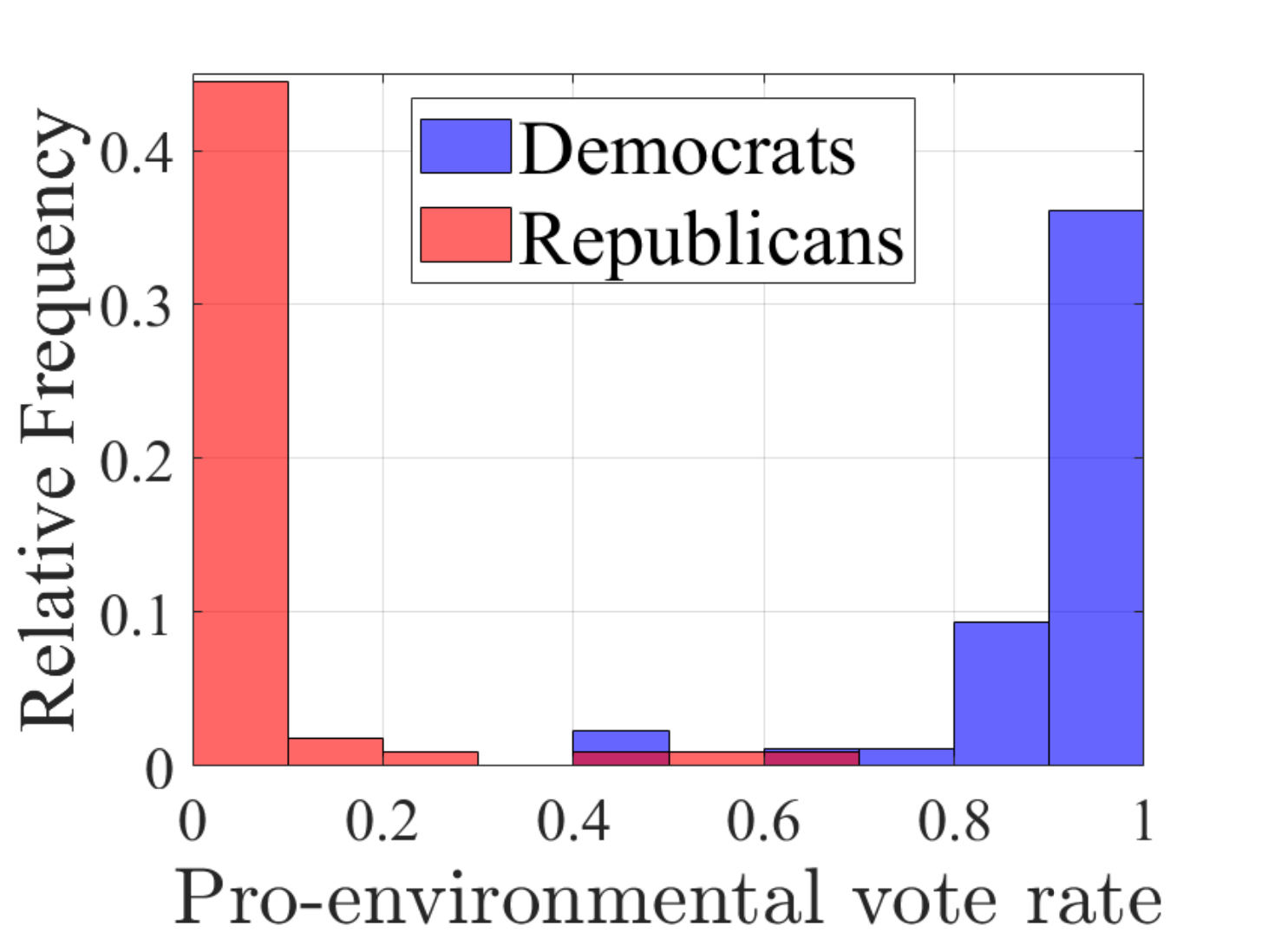} }    \subfigure[\label{fig:exp_trimodal}Clustering.]{\includegraphics[width=0.23\textwidth]{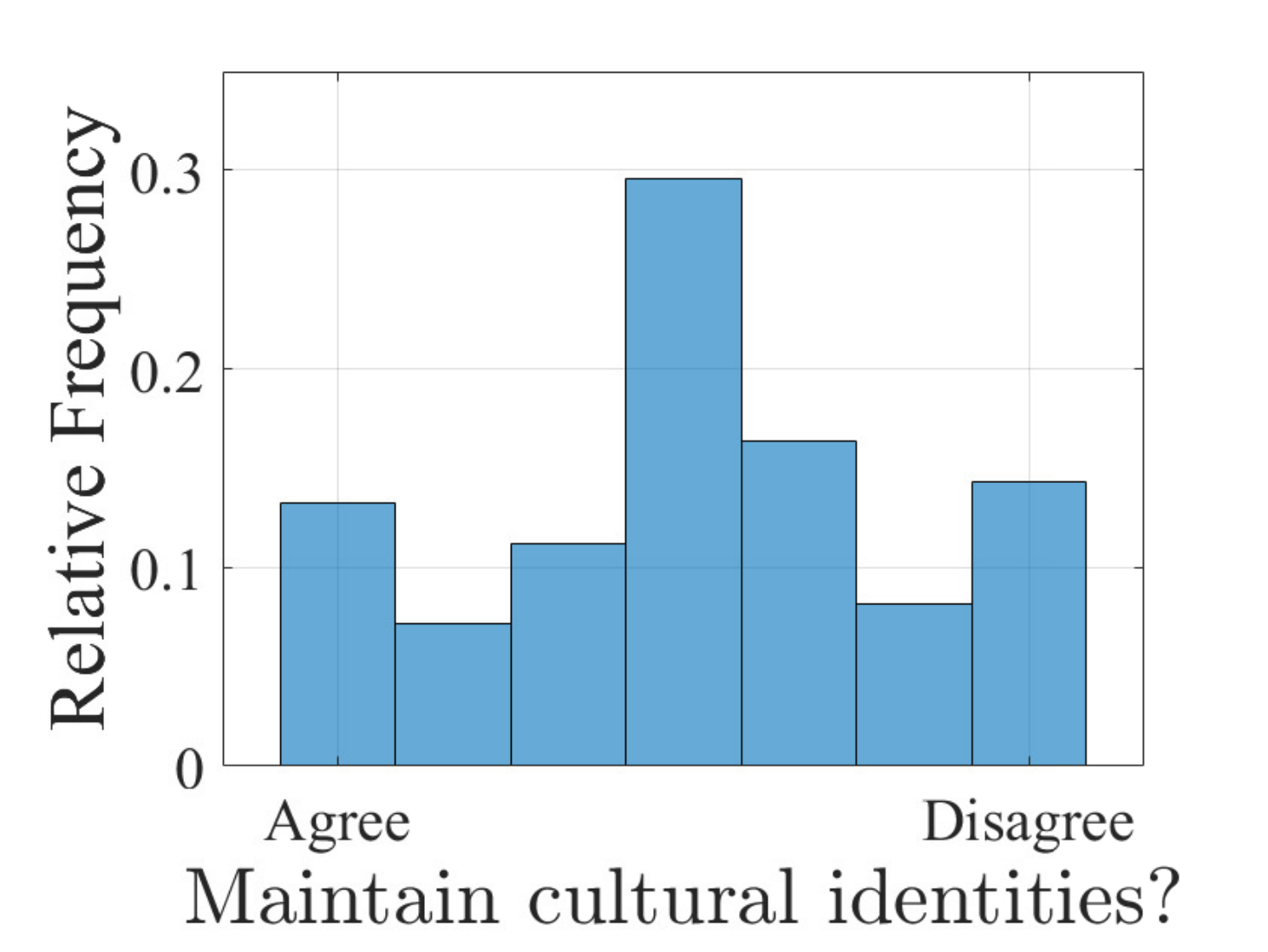} }	\subfigure[\label{fig:exp_pentamodal}Dissensus.]{\includegraphics[width=0.23\textwidth]{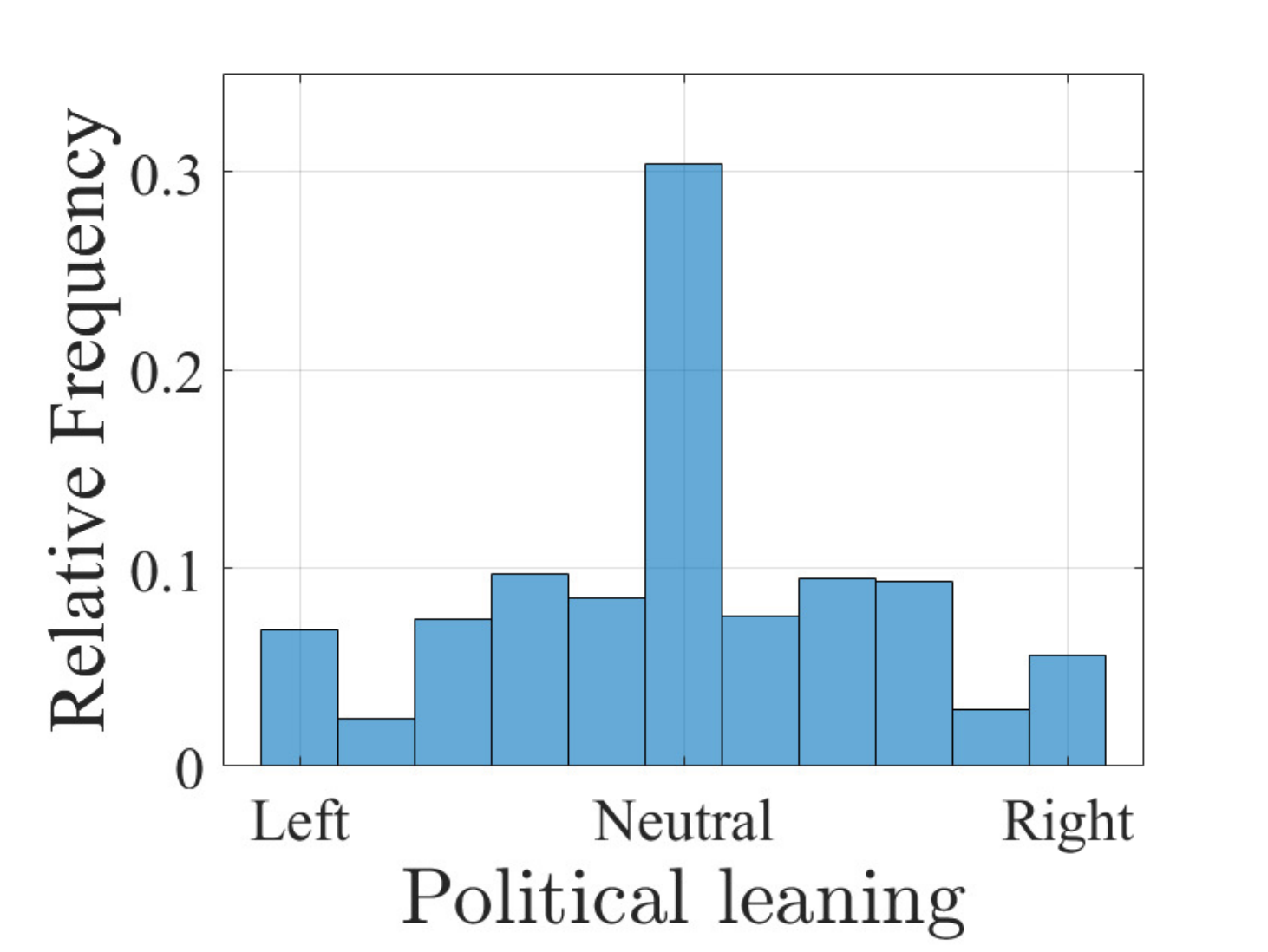} }
	\caption{\label{fig:exp}Different categories of opinion distributions (terminology from~\cite{devia2022framework}). (a) Perfect consensus in severity of climate change, where $0$ means ``don't know", $1$ ``not serious", $2$ ``fairly serious", and $3$ ``very serious". Almost all respondents in Spain of a survey regard climate change as a very serious problem~\cite{eurobarometer2020attitudes}. (b) Polarization of pro-enviormental votes on legislation from U.S. senators in 2015, where Democrats show high percentage of pro-environmental votes but Republicans show low percentage. The political elites hold extreme positions in line with their parties (Source: League of Conservation Voters)~\cite{dunlap2016political}.  (c) Clustering of opinions on whether people should maintain their distinct cultural identities~\cite{downey2001attitudinal}. Three clusters can be observed on the left, middle, and right, respectively. (d) Dissensus of French political opinions from European Social Survey 2012. Individual opinions are diverse, with most of them held by a non-negligible number of people.}
\end{figure*}

\section{Introduction}\label{sec:intro} 
Social opinion dynamics studies how interactions over networks shape individual opinion evolution, and has various applications~\cite{peralta2022opinion,zha2020opinion}. The last two decades have witnessed great developments in the study of opinion dynamics. Numerous mathematical approaches have been applied to modeling and analysis of such dynamics~\cite{castellano2009statistical,flache2017models,proskurnikov2017tutorial}. 
Most existing studies have focused on asymptotic behavior of opinion evolution and qualitative characterization of opinion distributions, such as consensus and polarization. An open problem is how to analyze the influence of specific network structure on the opinion evolution within a unified framework~\cite{flache2017models,proskurnikov2017tutorial}.
For example, community structure describes the property that subgroups of agents are connected densely with each other but loosely with other subgroups, which is often observed in reality~\cite{fortunato2010community,girvan2002community}. 
But how to quantify the relationship between the opinion evolution and the community structure is still not clear.
It is well-known that many network properties can be modeled by random graph models~\cite{bollobas1998random,newman2018networks,van2009random}. 
Combining random graph theory with the study of opinion dynamics can provide insight into linking microscopic agent updates to macroscopic system behaviors~\cite{flache2017models,proskurnikov2017tutorial} and offering quantitative predictions for real opinion evolution~\cite{friedkin2015problem}.

Let us consider a motivating example about diverse types of opinion distributions and how these distributions can be captured by a simple networked dynamical model.
\begin{example}\label{exmp:motivat}
Various types of opinion distributions can be observed in real-life scenarios. 
A common phenomenon is consensus, which occurs when individuals reach the same opinion on a particular issue, as shown in \Cref{fig:exp_consensus}. 
A group can diverge into two factions adopting opposite extreme views, which is known as polarization and illustrated by \Cref{fig:exp_polarization}.
Another type of opinion distributions is clustering, where individuals form two or more clusters, as demonstrated in \Cref{fig:exp_trimodal}.
Finally, dissensus can often be found in surveys~\cite{devia2022framework,flache2017models}, where most opinions are each held by a substantial number of people, as shown in \Cref{fig:exp_pentamodal}.

The rich opinion behaviors illustrated above can be captured by simple network models. 
In this paper, we study a gossip model with stubborn agents that is able to generate these behaviors. 
From this model, we can analytically quantify the influence of network structure and stubborn agents on final opinions of non-stubborn agents.
Consensus occurs if the stubborn agents have small influence, whereas polarization occurs if their influence is large. When the influence of stubborn agents is moderate, opinion distributions can exhibit multiple peaks, corresponding to community structure of the network. 
These results can be developed in a unified quantitative framework.
\end{example}

\subsection{Related Work}\label{subsec:relatedwork}

Individual opinions represent personal attitudes towards topics, events, or other persons, and can be modeled  by scalar or vector quantities~\cite{castellano2009statistical,proskurnikov2017tutorial}. Opinion dynamics describe how opinions evolve through interpersonal interactions. Continuous-state models are studied in this paper. The French--DeGroot (FD) model~\cite{degroot1974reaching
} shows how consensus is reached, where agents update by averaging their neighbors' opinions. Extensions of the model have been studied extensively~\cite{blondel2005convergence,cao2008reaching}. 
The gossip model generalizes the FD model by including random interactions between agents, and the model can exhibit various behavior such as consensus~\cite{boyd2006randomized,fagnani2008randomized}, disagreement, and opinion fluctuations~\cite{acemouglu2013opinion}. 
The Friedkin--Johnsen model~\cite{friedkin1990social} is another generalization of the FD model. It allows agents to be affected by their initial opinions, and generates long-term disagreement. Bounded confidence models (the Hegselmann--Krause model~\cite{hegselmann2002opinion} and the Deffuant--Weisbuch (DW) model~\cite{deffuant2000mixing}) explore how homophily influence shapes the opinion evolution. In these models, agents interact only with those who hold beliefs similar to them, and tend to form clusters. Models~\cite{altafini2012consensus,
shi2019dynamics}
with negative or antagonistic interactions, enlarging opinion difference, may end in polarization. 
In addition to interpersonal influences, stubborn agents also play crucial roles in opinion formation. These agents are assumed to never change opinions, representing opinion leaders and media sources. It has been shown that stubborn agents' opinions can determine the final opinions of the FD model~\cite{proskurnikov2017tutorial}. 
In the gossip model with stubborn agents, opinion fluctuations and long-term disagreement exist, but non-stubborn agents can have similar expected final opinions,  if the network is highly fluid~\cite{acemouglu2013opinion}. In contrast, for agents forming two communities connected to different stubborn agents, their final positions polarize if the influence of stubborn agents is large~\cite{como2016local}. The current paper revisits this classic model, and shows how to quantify the process in more detail with the help of random graph modeling.

Real networks often consist of numerous agents. To study large-scale group behavior, researchers have proposed macroscopic models which consider the evolution of opinion distributions. 
Eulerian approaches were introduced for analyzing bounded confidence models~\cite{canuto2012eulerian,kolarijani2021macroscopic,mirtabatabaei2014eulerian} and spatially distributed ordinary differential equations~\cite{nikitin2021continuation}. 
Graphon theory has been used recently for modeling heterogeneous large-scale networks, and the convergence of Euler approximations of mean-field games has been studied~\cite{bayraktar2022stationarity,caines2021graphon}.
Random graph theory is another framework for large-scale network modeling~\cite{bollobas1998random,bollobas2007phase,newman2018networks,newman2003structure,van2009random}. 
The field was founded by Erd{\H{o}}s and R{\'e}nyi~\cite{erdHos1960evolution} for studying probabilistic methods in graph theory. Since then various random graph models~\cite{barabasi1999emergence,watts1998collective} have been found to be useful in studying complex networks, such as small-world and scale-free networks~\cite{newman2018networks,newman2003structure,van2009random}. 
Random graphs have concentration properties; for instance, adjacency and Laplacian matrices can be close to their expectations~\cite{chung2011spectra,le2017concentration,tropp2015introduction}.
The influence of network structure on epidemics, dynamical systems, and search processes have been studied extensively~\cite{newman2018networks}. The stochastic block model (SBM) was introduced by~\cite{holland1983stochastic} to explain the generation of community structure.
Papers studying the influence of community structure on opinion evolution mainly focus on mean-field approximations and simulation (e.g., for the DW model~\cite{fennell2021generalized,gargiulo2010opinion}, the Sznajd model~\cite{si2009opinion}, three-state opinion models~\cite{oestereich2019three}, and a majority-vote model~\cite{peng2022majority}).

\subsection{Contribution}\label{subsec:contrib}
In this paper we study concentration in the gossip model over random graphs. We compare the model with a gossip model over an expected graph that is obtained by averaging all possible networks generated from the random graph model. We show that the expected final opinions of regular agents in the original model concentrate around those over the expected graph (\Cref{thm:concentration_states}). 
The distance between the two opinion vectors can be bounded by a quantity depending on the maximum and minimum expected degrees and stubborn-agent opinions. Using matrix perturbation theory, we study the effect of network structure and stubborn agents on the expected final opinions over the expected graph (\Cref{prop:profilexst}): (i)~When the influence of stubborn agents is large, regular agents hold final opinions close to stubborn agents.
(ii)~When the influence of stubborn agents is small, regular agents have final opinions close to each other. 
We obtain similar conclusions on the effect of network structure on the expected final opinions over the random graph (\Cref{thm_profileofxg}).
We also provide bounds for the distance between time-averaged opinions and the expected final opinions over the expected graph (\Cref{thm:concentration_states_time}). 

It is found that, unlike classic concentration results for adjacency and Laplacian matrices~\cite{chung2011spectra,le2017concentration,tropp2015introduction}, the concentration of expected final opinions depends on the relative magnitude of the maximum and minimum expected degrees in a random graph. 
Different from convergence and stability analysis~\cite{bauso2016opinion,bayraktar2022stationarity,caines2021graphon,canuto2012eulerian,mirtabatabaei2014eulerian}, the current paper quantifies the influence of network structure on opinion distributions. 
In particular, a unified framework is developed for approximating expected final opinions and time-averaged opinions (\Cref{thm:concentration_states,prop:profilexst,thm_profileofxg,thm:concentration_states_time}). 
Consequently, we can analyze the effect of network structure and stubborn agents on expected final opinions, provide conditions for the emergence of consensus~\cite{acemouglu2013opinion} and polarization~\cite{como2016local}, and establish correspondence between opinion evolution and community structure (see Section~\ref{sec:simul}). 
The gossip model over a two-community SBM is studied in the conference version~\cite{xing2022concentration}. The current paper studies concentration over general random graphs, explores the influence of network structure, and quantifies time-averaged opinions.

Because random graphs are widely used in modeling real networks~\cite{bollobas1998random,newman2018networks,newman2003structure,van2009random}, the current framework enables quantitative prediction of opinion evolution. 
More precisely, given a network, it is possible to establish random graph models from network properties, determine qualitative results for the evolution (e.g., whether polarization or consensus would happen), and then give high-probability bounds for the prediction. 
The obtained correspondence between community structure and agent opinions can inspire design of community detection methods based on state observations~\cite{schaub2020blind,xing2023community}. Suppose that the network is unknown but a trajectory of opinion evolution is available. It is possible to recover agent community labels by clustering agent states. Developing such a community detection algorithm is not done in this paper, but some further discussion on the problem is provided at the end of Section~\ref{sec:main}.

\subsection{Outline}\label{subsec:outline}

The paper is organized as follows. We describe the gossip model and random graph models in \Cref{sec:prelim}, and formulate the problem in \Cref{sec:problem}. \Cref{sec:mainresults} provides main results, \Cref{sec:simul} presents numerical experiments, and \Cref{sec:conclusions} concludes the paper. Proofs are provided in the Appendix.

\subsection*{Notation}

Denote the $n$-dimensional Euclidean space by $\mathbb{R}^n$, the set of $n\times m$ real matrices by $\mathbb{R}^{n\times m}$, the set of nonnegative integers by $\mathbb{N}$, and the set of positive integers by $\NN_+ = \NN\setminus\{0\}$. Denote the natural logarithm by $\log x$, $x>0$. 

Let $\mathbf{1}_n$ be the $n$-dimensional all-one vector, $e_i^{(n)}$ be the $n$-dimensional unit vector with $i$-th entry being one, $I_n$ be the $n\times n$ identity matrix, and $\bfo_{m,n}$ be the $m\times n$ all-zero matrix. For a vector $x\in \mathbb{R}^n$, denote its $i$-th entry by $x_i$, and for a matrix $A \in \mathbb{R}^{n\times n}$, denote its $(i,j)$-th entry by $a_{ij}$ or $[A]_{ij}$. 
Denote the Euclidean norm of a vector and the spectral norm of a matrix by $\|\cdot\|$.
Let $\rho(A)$ be the spectral radius of a square matrix  $A$.
For symmetric $A \in \RR^{n\times n}$, denote its eigenvalues by $\lambda_{\min}(A) \Let \lambda_1(A) \le \lambda_2(A) \le \cdots \le \lambda_n(A) \teL \lambda_{\max}(A)$. 
By $\diag(A_1,\dots,A_k)$ denote the diagonal or block diagonal matrix with $A_1$, $\dots$, $A_k$ on the diagonal.

The cardinality of a set $\mtcs$ is written as $|\mtcs|$. 
An event $A$ happens almost surely (a.s.) if $\PP\{A\}=1$. For a sequence of events $A_n$, we say $A_n$ happens with high probability (w.h.p.) if $\PP\{A_n\} \to 1$ as $n\to \infty$. 
For two sequences of real numbers, $f(n)$ and $g(n) > 0$, $n\in\NN$, we write $f(n) = O(g(n))$ if $|f(n)| \le C g(n)$ for all $n\in \NN$ and some $C > 0$, and write $f(n) = o(g(n))$ if $|f(n)|/g(n) \to 0$. 
Suppose $f(n) > 0$ for all $n\in\NN$. Write $f(n) = \omega(g(n))$ if $g(n) = o(f(n))$, and write $f(n) = \Omega(g(n))$ if $g(n) = O(f(n))$. 
For $x,y \in \RR$, denote their maximum by $x\vee y \Let\max\{x,y\}$ and their minimum by $x\wedge y\Let \min\{x,y\}$.
An undirected graph $\mtcg = (\mtcv, \mtce, A)$ has an agent set $\mtcv$, an edge set $\mtce$, and an adjacency matrix $A = [a_{ij}]$ with $a_{ij} = 1$ ($a_{ij} = 0$) if $\{i,j\} \in \mtce$ ($\{i,j\} \not\in \mtce$).

\section{Preliminaries}\label{sec:prelim}
In this section, we introduce network and dynamic models studied in the paper.  Section~\ref{subsec:graphmodel} describes a random graph model, and Section~\ref{subsec:opinionmodel} introduces the gossip model. We describe a random graph model with stubborn agents in Section~\ref{subsec:randomwithstubborn}, and the gossip model over random graphs in Section~\ref{subsec:gossipoverrandom}.

\subsection{Random Graph Model}\label{subsec:graphmodel}

In this subsection, we describe a random graph model motivated by capturing properties of real-world networks.
This random graph model assumes that edges in a network are generated independently~\cite{bollobas2007phase,chung2002connected}. 
\begin{definition}[Random graph model]\label{def:rg}
	Let $\mtcv = \{1,\dots,n\}$ with $n\in \NN_+$ be the set of agents and the symmetric matrix $\Psi = [\psi_{ij}] \in [0,1]^{n\times n}$ be the link probability matrix. In the random graph model $\trg(n, \Psi)$, an undirected random graph $\mtcg=(\mtcv,\mtce,A)$ without self-loops is constructed by adding an undirected edge $\{i,j\}$ to $\mtce$ with probability $\psi_{ij}$ independent of other agent pairs, for all $i,j\in \mtcv$ with $i\not=j$.
\end{definition}

The preceding definition is general and includes many classic examples.

\begin{example}~\label{exmp:rgs}\\\indent
	(i) When $\psi_{ij} \equiv \psi \in [0,1]$ for all $i,j \in \mtcv$, the random graph model is one version of the Erd{\H{o}}s--R{\'e}nyi model~\cite{van2009random}, where each edge exists with the same probability.
	
	(ii) Let $w = [w_1, \dots, w_n]^\tp \in \RR^n$ with $w_i \ge 0$ and $\max_i w_i^2 < \sum_k w_k$, and $\psi_{ij} = w_i w_j/ (\sum_{k} w_k)$. $\trg(n,\Psi)$ generates graphs with the expected degree sequence $w$~\cite{chung2002connected}.

        (iii) Assume that the agent set $\mtcv$ has $K \in \NN_+$ disjoint subsets called communities, $\mtcv_1$, $\dots$, $\mtcv_K$, and denote the community label of $i\in \mtcv_k$ by $\mtcc_i = k$, $1\le k \le K$. 
        Let the symmetric matrix $\Pi = [\pi_{ij}] \in [0,1]^{K\times K}$ be the link probability matrix for edges within and between communities. 
        $\trg(n,\Psi)$ with $\psi_{ij} = \pi_{\mtcc_i\mtcc_j}$, $i\not= j$, and $\psi_{ii} = 0$ is the SBM~\cite{holland1983stochastic}  that intuitively shows the formation of community structure.
\end{example}

\subsection{Gossip Model with Stubborn Agents}\label{subsec:opinionmodel}

In this subsection, we introduce the gossip model with stubborn agents and discuss its basic properties.  

A gossip model with stubborn agents (we call it ``the gossip model'' hereafter for short) is a random process evolving over a graph $\mtcg = (\mtcv, \mtce, A)$. 
The agent set $\mtcv$  contains regular agents $\mtcvr = \{1, \dots, \nr\}$ and stubborn agents $\mtcvs = \{1+\nr, \dots, \ns + \nr\}$, and the network size is $n = |\mtcv| = \nr+\ns$.
A regular agent~$i$ has opinion $X_i(t) \in \RR$ at time $t \in \NN$. 
A stubborn agent~$j$ has opinion $\zs_j$, and never changes it. 
Stacking the opinions, we denote the opinion vector of regular agents at time $t$ by $X(t) \in \RR^{\nr}$ and that of stubborn agents by $\zs \in \RR^{\ns}$ (for simplicity, we use $\zs_j$ to represent the opinion of~$j$, instead of $\zs_{j-\nr}$).
At each time, an edge is selected, and the two corresponding agents interact. 
The selection is modeled by an interaction probability matrix $W = [w_{ij}] \in \RR^{n\times n}$ depending on the adjacency matrix $A$, where $w_{ij} = w_{ji} = a_{ij}/\alpha$ and $\alpha = \sum_{i=1}^n \sum_{j=i+1}^n a_{ij}$ is the number of edges. An edge $\{i,j\}$ is selected with probability $w_{ij}$, independently of previous update. 
The two chosen agents are the only agents to update at time $t$.
If both~$i$ and~$j$ are regular, then $X_i(t+1) = X_j(t+1) = (X_i(t) + X_j(t))/2.$
If one of them is stubborn, say $j$, then $i$ updates as $X_i(t+1) = (X_i(t) + \zs_j)/2$. 
The update rule can be written as
\begin{align}\label{eq:gossipmodelo}
	X(t+1) = Q(t) X(t) + R(t) \zs.
\end{align}
Here $\{[Q(t)~R(t)]\}$ is a sequence of independent and identically distributed random matrices such that with probability~$w_{ij}$
\begin{small}
\begin{align}\label{eq:def_QR}
	[Q(t)~R(t)] = \begin{cases}
    [I_{n_r} - \frac12 (e_i^{(\nr)} - e_j^{(\nr)} )(e_i^{(\nr)}  - e_j^{(\nr)}) ^\tp,~\bfo_{\nr,\ns}], 
     &\tx{if } i, j \in \mtcvr,\\
    [I_{n_r} - \frac12 e_i^{(\nr)}(e_i^{(\nr)})^\tp,~\frac12 e_i^{(\nr)} (e_j^{(\ns)})^\tp], 
     &\tx{if } i \in \mtcvr,  j \in \mtcvs,
    \end{cases}
\end{align} 
\end{small}
where we use $e_j^{(\ns)}$ to represent $e^{(\ns)}_{j-\nr}$ for $j \in \mtcvs$ for notation simplicity.

Denote the expected interaction matrices by $\bar{Q} \Let \EE\{Q(t)\}$ and $\bar{R} \Let \EE\{R(t)\}$.
The following results~\cite{acemouglu2013opinion} (the paper studies the model in continuous time; see e.g.,~\cite{ravazzi2015ergodic,xing2023community} for analysis of discrete-time versions) indicate that the expected final opinions depend on the expected interaction matrices and opinions of stubborn agents.
\begin{proposition}[Stability and limit theorems]\label{prop:stability}
Suppose that $\mathcal{G}$ is connected and has at least one stubborn agent. The following results hold for the gossip model~\cref{eq:gossipmodelo}.\\\indent
(i) The model has a unique stationary distribution $\pi$ with mean $\bfx$, and $X(t)$ converges in distribution to $\pi$ as $t \to \infty$. The expected final opinions $\bfx$ satisfy that
\begin{align}\label{eq_expectation_limit}
    \bfx = \lim_{t \to \infty} \EE\{X(t)\} = (I-\bar{Q})^{-1}\bar{R} \zs.
\end{align}\indent
(ii) Denote the time-averaged opinions by $S(t) := \frac1t \sum_{i = 0}^{t - 1} X(i)$. Then \[ \lim_{t \to \infty} S(t) = \bfx \tx{ a.s.} \]
\end{proposition}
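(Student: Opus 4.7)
The plan is to anchor both statements on the spectral estimate $\rho(\bar{Q})<1$ and to read off the rest from deterministic and probabilistic consequences. Taking expectations in~\cref{eq:gossipmodelo} and iterating gives the deterministic recursion $\EE\{X(t)\}=\bar{Q}^{\,t}X(0)+\sum_{k=0}^{t-1}\bar{Q}^{\,k}\bar{R}\zs$, so $\rho(\bar{Q})<1$ immediately yields the limit~\cref{eq_expectation_limit} via a Neumann-series expansion.

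My first step would be to prove $\rho(\bar{Q})<1$ from the graph hypothesis. The definitions in~\cref{eq:def_QR} make every row of $[Q(t)\;R(t)]$ sum to one, so $[\bar{Q}\;\bar{R}]\bfl_{n}=\bfl_{\nr}$ and $\bar{Q}$ is a nonnegative substochastic matrix. For any regular agent $i$ adjacent to some stubborn agent, the $i$-th row of $\bar{R}$ contains a positive entry, pushing the $i$-th row sum of $\bar{Q}$ strictly below one. Connectedness of $\mtcg$ together with $\ns\ge 1$ means every regular agent reaches a stubborn agent through a path of length at most the diameter $D$, and a short induction on path length shows this deficit propagates under iteration, giving $\|\bar{Q}^{D}\bfl_{\nr}\|_{\infty}<1$. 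Since $\bar{Q}^{D}$ is nonnegative, this is equivalent to $\|\bar{Q}^{D}\|_{\infty}<1$, and hence $\rho(\bar{Q})\le \|\bar{Q}^{D}\|_{\infty}^{1/D}<1$.

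With this in hand I would complete part~(i) in three moves. First, tightness of $\{X(t)\}$ is automatic: each update in~\cref{eq:def_QR} is a convex combination of values already present, so $X(t)$ stays forever in the compact box $[m,M]^{\nr}$ with $m\Let\min\{\min_i X_i(0),\min_j \zs_j\}$ and $M\Let\max\{\max_i X_i(0),\max_j \zs_j\}$. Second, I would build an invariant distribution via the backward-product trick: since $\{(Q(k),R(k))\}$ is i.i.d., the variable $\tilde{X}(t)\Let Q(0)\cdots Q(t-1)X(0)+\sum_{k=0}^{t-1}Q(0)\cdots Q(k-1)R(k)\zs$ has the same law as $X(t)$, and nonnegativity of the $Q(k)$ combined with the identity $\EE\{\bfl^{\tp}Q(0)\cdots Q(k-1) v\}=\bfl^{\tp}\bar{Q}^{\,k}v\to 0$ makes $\tilde{X}(t)$ Cauchy in $L^{1}$, so it converges a.s.\ to a limit $X^{\infty}$ whose law $\pi$ is invariant and whose mean coincides with~\cref{eq_expectation_limit}. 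Third, uniqueness and $X(t)\Rightarrow \pi$ follow from a coupling argument: two copies driven by the same $(Q(k),R(k))$ satisfy $X(t)-X'(t)=Q(t-1)\cdots Q(0)(X(0)-X'(0))\to 0$ in $L^{1}$ by the same bound, forcing any two stationary distributions to coincide.

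For part~(ii), once $\pi$ is uniquely determined, I would invoke Birkhoff's ergodic theorem applied to the shift on the i.i.d.\ driving sequence: started from $\pi$, the chain is a stationary ergodic function of this shift, hence $S(t)\to\EE_{\pi}\{X\}=\bfx$ a.s.; for an arbitrary deterministic $X(0)$ the coupling above shows the stationary-started and original copies satisfy $\|X(t)-X^{\pi}(t)\|_{1}\to 0$, and since everything stays in the compact box, averaging removes the dependence on $X(0)$. The main obstacle throughout is bridging in-expectation estimates like $\bfl^{\tp}\bar{Q}^{\,k}v\to 0$ with the almost-sure statements about the random product $Q(0)\cdots Q(k-1)$; I plan to handle this by exploiting the compactness of the invariant box together with a Borel--Cantelli argument along a subsequence, rather than attempting pathwise spectral control of the random products themselves.
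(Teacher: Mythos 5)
The paper itself does not prove \Cref{prop:stability}: it is imported from the literature (the continuous-time result of~\cite{acemouglu2013opinion}, with discrete-time treatments in~\cite{ravazzi2015ergodic,xing2023community}), so there is no in-paper argument to compare against line by line. Your self-contained proof follows the standard route those references take for iterated-random-function models: establish $\rho(\barQ)<1$ from substochasticity plus reachability of a stubborn agent, read off~\cref{eq_expectation_limit} from the deterministic mean recursion, construct the invariant law via the backward product, and obtain uniqueness and the time average by coupling. The structure is sound and the key ideas are the right ones; what the citation-based approach buys the paper is brevity, while your argument makes explicit exactly which hypotheses (connectivity, $\ns\ge 1$) enter where. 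Three details deserve tightening. First, ``Cauchy in $L^{1}$, so it converges a.s.''\ is not a valid implication as written; what you actually have is $\sum_{k}\EE\{\|Q(0)\cdots Q(k-1)R(k)\zs\|\}<\infty$ from the geometric decay of $\bfl^{\tp}\barQ^{k}\bfl$, which yields a.s.\ absolute convergence of the series by monotone convergence. The leading term $Q(0)\cdots Q(t-1)X(0)$ is handled cleanly by noting that each $Q(k)$ in~\cref{eq:def_QR} is symmetric and substochastic, so $\bfl^{\tp}Q(0)\cdots Q(t-1)\bfl$ is nonincreasing with expectation $\bfl^{\tp}\barQ^{t}\bfl\to 0$, hence tends to zero a.s.; this also replaces the Borel--Cantelli subsequence device you invoke at the end, and the same monotonicity gives the a.s.\ (not merely $L^{1}$) decay of the coupling error needed for the Ces\`aro average in part~(ii). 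Second, Birkhoff's theorem requires ergodicity of the stationary chain, which you assert but do not justify; the clean argument is that the stationary version, realized through the backward product as a measurable function of the past of the i.i.d.\ driving sequence, is a factor of a Bernoulli shift, and factors of ergodic systems are ergodic. Third, in the reachability step note that a path in $\mtcg$ from a regular agent to a stubborn agent may pass through other stubborn agents; truncating at the first stubborn vertex produces the path inside the regular subgraph that your induction on the substochasticity deficit actually needs.
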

The results show that agent opinions converge in distribution to a stationary distribution, although they may fluctuate a.s.~\cite{acemouglu2013opinion}. Also, the time-averaged opinion vector $S(t)$ converges to $\mathbf{x}$, which characterizes the average final positions of regular agents.

\subsection{Random Graphs with Stubborn Agents}\label{subsec:randomwithstubborn}

To study the interplay between network structure and stubborn agents, we introduce the following definition of random graph model with stubborn agents. 
\begin{definition}[Random graph with stubborn agents, RG-S]\label{def:rg_s}~\\\indent
	Let $\mtcvr = \{1, \dots, \nr\}$ be the set of regular agents, $\mtcvs = \{1+\nr,\dots, \nr+\ns\}$ be the set of stubborn agents, and $n = \nr + \ns$ be the network size, where $\nr,\ns \in \NN_+$. Let the symmetric matrix $\Psir = [\psir_{ij}] \in [0,1]^{\nr\times \nr}$ be the link probability matrix for edges between regular agents, and $\Psis = [\psis_{ij}] \in [0,1]^{\nr\times \ns}$ be the link probability matrix for edges between regular and stubborn agents. 
 
    In the random graph model with stubborn agents $\trgs(\nr,\ns,\Psir,\Psis)$, a random graph $\mtcg=(\mtcv,\mtce,A)$ with $\mtcv = \{1,\dots,n\}$ is constructed according to the following rule: (i) A random graph for the regular agents is generated from $\trg(\nr,\Psir)$. \\(ii) For each regular agent $i \in \mtcvr$ and stubborn agent $j \in \mtcvs$, the edge $\{i,j\}$ is added to $\mtce$ with probability $\psis_{i,j-\nr}$, independent of other agent pairs.
\end{definition}
The RG-S includes stubborn agents in the network, and the link probability matrix $\Psis$ captures the influence strength of stubborn agents on regular agents. 

\subsection{Gossip Model over Random Graphs}\label{subsec:gossipoverrandom}
The previous subsections described the random graph models and the gossip model. In this subsection, we bring these models together. Suppose that a random graph $\mtcg$ is constructed from an RG-S, and over a realization of $\mtcg$ the gossip model takes place: 
\begin{align}\label{eq:gossipmodel}
	\Xmtcg(t+1) = \Qmtcg(t) \Xmtcg(t) + \Rmtcg(t) \zs, 
\end{align}
where $\Xmtcg(t)$ is the opinion vector and the superscript $\mtcg$ highlights the dependence of the process on $\mtcg$. Here $[\Qmtcg(t)~\Rmtcg(t)]$ has the expression given in~\eqref{eq:def_QR} but its distribution is defined by the interaction probability matrix $\Wg = A/\alphag$, where $A$ is the adjacency matrix of $\mtcg$ and $\alphag$ is the number of edges in $\mtcg$.

\begin{figure}[tbp]
  \centering
  \includegraphics[scale=0.32]{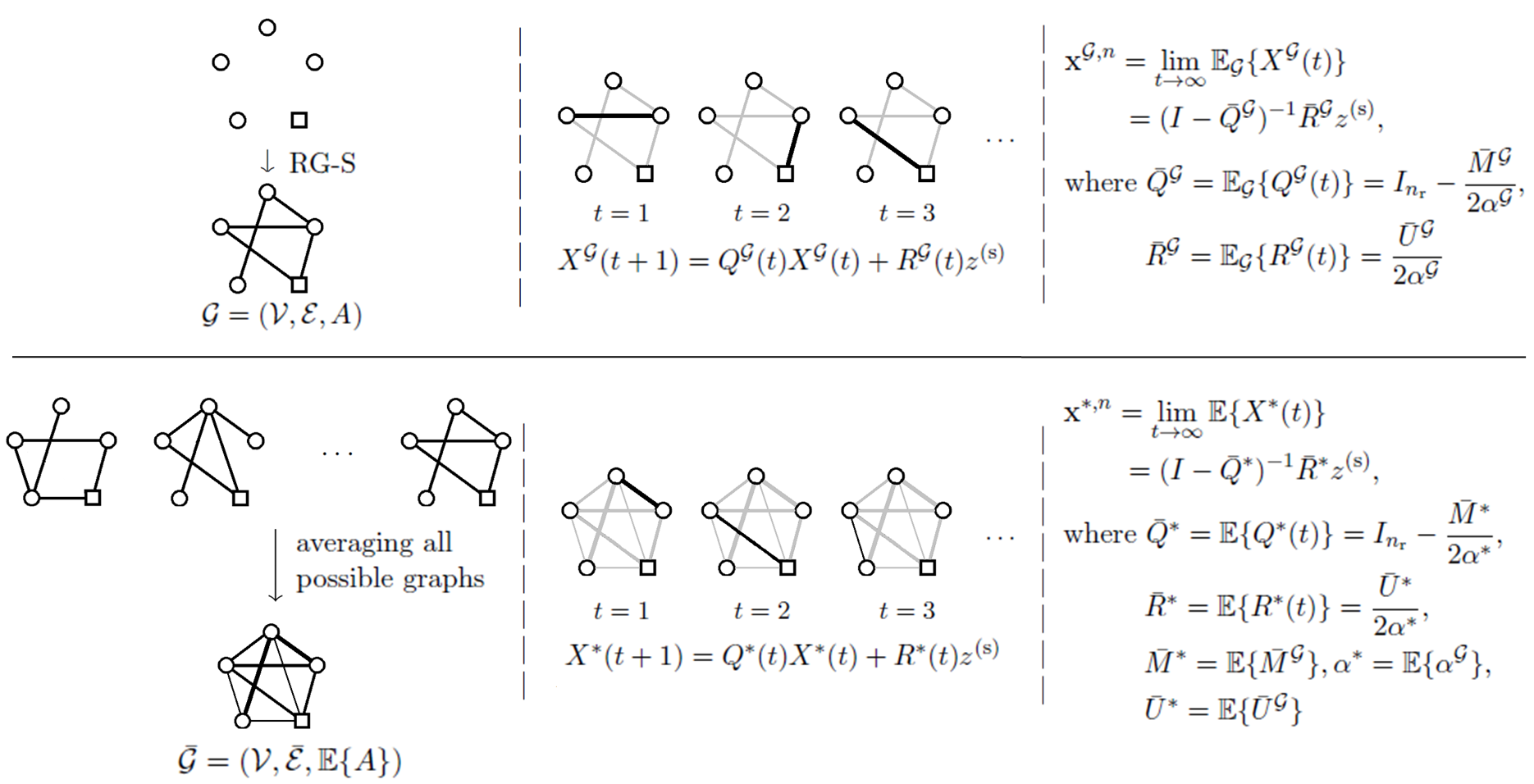}
  \caption{Illustration of a gossip model over an RG-S and a gossip model over an expected graph. On the top left of the figure, a random graph $\mtcg$ is constructed from an RG-S. Circles and squares represent regular and stubborn agents, respectively. 
  On the top middle, a gossip model evolves over $\mtcg$, where a single existing edge is selected at each time. On the top right, the expression of the expected final opinion vector is given. On the bottom left, the expected graph $\barmtcg$ is obtained by averaging the random graph $\mtcg$. On the bottom middle, a gossip model evolves over the expected graph, where an edge is selected with probability proportional to its weight in the expected adjacency matrix. On the bottom right, the expression of the expected final opinion vector over the expected graph is given.}
  \label{fig:illustration}
\end{figure}

Denote the expected interaction matrices by $\barQg \Let \EEg\{ \Qmtcg (t) \}$ and $\barRg \Let \EEg $ $\{\Rmtcg(t)\}$ (they are conditional expectations). If $(I - \barQg)^{-1}$ exists, the expected final opinion vector of the model can be written as
\begin{align}\label{eq:bfxg}
	\xgn \Let \lim_{t\to \infty} \EEg\{\Xmtcg(t)\} = (I - \barQg)^{-1} \barRg \zs, 
\end{align}
where  we use the superscripts $\mtcg$ and $n$ to indicate that the expected final opinions depend on the random graph $\mtcg$ and the network size $n$. 

To study behavior of the gossip model, we introduce a reference without network randomness. By averaging the random graph $\mtcg = (\mtcv, \mtce, A)$, we obtain the expected graph $\barmtcg = (\mtcv, \barmtce, \EE\{A\})$, where $\EE\{A\}$ is the expected adjacency matrix. Define a gossip model over this weighted graph $\barmtcg$ as follows.
\begin{definition}[Gossip model over expected graph]\label{def:gossip_expected}~\\\indent
    Consider a random graph model $\trgs(\nr,\ns,\Psir,\Psis)$ and its expected graph $\barmtcg = (\mtcv, \barmtce, \EE\{A\})$ obtained by averaging all graphs generated from the $\trgs$.
	The gossip model over the expected graph  is the following model that evolves over $\barmtcg$.
    \begin{align*}
	\Xst(t+1) = \Qst(t) \Xst(t) + \Rst(t) \zs, 
    \end{align*}
    where $\Xst(t)$ is the opinion vector, and $[\Qst(t)~\Rst(t)]$ has the same expression as in~\eqref{eq:def_QR} but its distribution is defined by the interaction probability matrix $\Wst = \EE\{A\}/\alphast$. Here $\alphast = \sum_{i=1}^n \sum_{j=i+1}^n \EE\{a_{ij}\}$ is the weight sum of the expected graph.
\end{definition}
Denote the expected interaction matrices by $ \barQst \Let \EE\{\Qst(t)\} $ and $\barRst \Let \EE\{\Rst(t)\}$. The expected final opinions of the model over the expected graph can be written as
\begin{align}\label{eq:bfx*}
	\xstn \Let \lim_{t\to\infty} \EE\{\Xst(t)\} =  (I - \barQst)^{-1} \barRst \zs.
\end{align}
In what follows we briefly explain the relations between quantities in the gossip model over the RG-S and those over the expected graph.
From~\eqref{eq:bfxg} we know that $\barQg$ and $\barRg$ determine the expected final opinion vector $\xgn$.
The expected interaction matrices $\barQg$ and $\barRg$ over the RG-S can be shown to have the following expressions: $\barQg = I_{\nr} - \barMg/(2\alphag)$ and $\barRg = \barUg/(2\alphag)$, where $\alphag$ is the number of edges in $\mtcg$,
\begin{small}
 \begin{align}\label{eq:barMg}
     \barMg &\Let 
	\begin{bmatrix}
		d_1 & - a_{12} & \dots & -a_{1,\nr} \\
		-a_{21} & d_2 & & \vdots \\
		\vdots & & \ddots & -a_{\nr-1,\nr}\\
		-a_{\nr,1} & \dots & -a_{\nr,\nr-1} & d_{\nr}
	\end{bmatrix},~
	\barUg \Let
	\begin{bmatrix}
		a_{1,\nr+1} & \dots & a_{1n} \\
		\vdots & & \vdots \\
		a_{\nr,\nr+1} & \dots & a_{\nr,n}
	\end{bmatrix},
\end{align}
\end{small}
\hspace{-1.5mm}and $d_i$ is the degree of the agent~$i$. Note that $\barMg$ and $\barUg$ depend on the adjacency matrix $A$ of $\mtcg$. For the gossip model over the expected graph, its expected final opinion $\xstn$ is determined by the expected interaction matrices $\barQst$ and $\barRst$. The two matrices are related to the expectations of $\barMg$, $\barUg$, and $\alpha$, i.e., $\barQst = I_{\nr} - \barMst/(2\alphast)$ and $\barRst = \barUst/(2\alphast) = \Psis/(2\alphast)$, where 
\begin{align}\label{eq:barMst}
    \barMst \Let \EE\{\barMg\},~ \barUst \Let \EE\{\barUg\} = \Psis, ~ \alphast = \EE\{\alphag\}.
\end{align}
\Cref{fig:illustration} summarizes the relations between the aforementioned quantities, and illustrates the gossip models over the $\trgs$ and the expected graph.

\section{Problem Formulation}\label{sec:problem}

This section formulates the problems of interest.

The first problem that we consider is when the expected final opinion vector $\xgn$ concentrates around the expected final opinion vector over the expected graph $\xstn$:

\textbf{Problem~$1$.} Given an RG-S and the gossip model~\eqref{eq:gossipmodel}, provide high probability bounds for the distance $\|\xgn - \xstn\|$.

Random graph models have concentration properties~\cite{chung2011spectra,tropp2015introduction,vershynin2018high}. For example, the eigenvalues of the adjacency
matrix of a random graph with independent edges concentrate around those of the expected graph, and the concentration error depends on the maximum expected degree~\cite{chung2011spectra}. Concentration inequalities can also be used in deriving degree conditions for connectivity of random graphs~\cite{tropp2015introduction}. 
Problem~$1$ arises naturally from these observations, but concerns the concentration of expected final opinions, rather than the random graph. The problem is addressed by \Cref{thm:concentration_states} in Section~\ref{sec:profile}, where conditions for $\xgn$ concentrating around $\xstn$ are given.

The second problem is to provide conditions for polarization or consensus of $\xgn$:

\textbf{Problem~$2$.} Given an RG-S and the gossip model~\eqref{eq:gossipmodel}, provide conditions for \\\indent
(i) the entries of $\xgn$ are close to opinions of stubborn agents,\\\indent
(ii) the entries of $\xgn$ are close to each other.

This problem concerns how network structure and stubborn agents shape the profile of the expected final opinions $\xgn$. Note that $\EE\{A\}$ has a simpler form than $A$, so it is easier to characterize $\xstn$ (\Cref{prop:profilexst}). Then using the solution to Problem~$1$, we are able to address Problem~$2$ in \Cref{thm_profileofxg}. When the network has community structure, according to \Cref{thm:concentration_states}, the expected final opinions can have clusters in line with the communities, which is illustrated in Section~\ref{sec:simul}. In this way we address the problem presented in \Cref{exmp:motivat} for the gossip model.

Finally, we derive bounds for the distance between the time-averaged opinions $\Smtcg(t)$ $= (\sum_{i=0}^{t-1} \Xmtcg(i))/t$ and the expected final opinions over the expected graph~$\xstn$:

\textbf{Problem~$3$.} Given an RG-S and the gossip model~\eqref{eq:gossipmodel}, provide high probability bounds for the distance $\|\Smtcg(t) - \xstn\|$.

This problem is important because only agent opinions can be observed in practice, rather than the expected opinions. From \Cref{prop:stability} we know that it is possible to use time-averaged opinions to estimate the expected opinions. 
Studying this problem can help us understand how network structure and stubborn agents affect transient behavior of the process. The result is given by \Cref{thm:concentration_states_time} in Section~\ref{sec:transient}.

\section{Main Results}\label{sec:mainresults}
\label{sec:main}

In this section, we first study the expected final opinions of the gossip model, by comparing them with those over the expected graph. We then investigate the behavior of time-averaged opinions.

\subsection{Concentration of Expected Final Opinions}\label{sec:profile}

In this subsection, we study properties of the expected final opinions $\xgn$. \Cref{thm:concentration_states} shows that the distance $\|\xgn - \xstn\|$ can be bounded by a term depending on maximum and minimum expected degrees of the RG-S with high probability. 
Next, we study in \Cref{prop:profilexst} how $\xstn$ is influenced by network structure and stubborn agents. Finally, we characterize the profile of $\xgn$ in \Cref{thm_profileofxg} by combining \Cref{thm:concentration_states,prop:profilexst}.

To begin with, we introduce the following notations. For an agent $i \in \mtcv$, we refer to the number of regular agents connected to~$i$ as its regular degree (denoted as $\dir$), and refer to  the number of stubborn agents connected to~$i$ as its stubborn degree (denoted as $\dis$). The degree of~$i$ is the sum of its regular and stubborn degrees, i.e., $\di = \dir + \dis$. 
The following quantities of the expected graph will be used frequently in the analysis. Let
\renewcommand\labelitemi{\tiny$\bullet$}
\begin{itemize}[leftmargin=*]
    \item  $\Deltar \Let \max\limits_{i \in \mtcvr}\{\EE\{\di\}\}$ be the maximum expected degree of regular agents,
    \item $\Deltarr \Let \max\limits_{i \in \mtcvr}\{\EE\{\dir\}\}$~be~the~maximum~expected~regular~degree~of~regular~agents,
    \item $\Deltars \Let \max\limits_{i \in \mtcvr}\{\EE\{\dis\}\}$ be the maximum expected stubborn degree of regular agents,
    \item $\Deltasr \Let \max\limits_{i \in \mtcvs}\{\EE\{\dir\}\}$ be the maximum expected regular degree of stubborn agents,
    \item $\deltars \Let \min\limits_{i \in \mtcvr}\{\EE\{\dis\}\}$ be the minimum expected stubborn degree of regular agents.
\end{itemize}

Assumptions of the main results are given below. The first assumption ensures large enough minimum expected stubborn degree $\deltars$, whereas the second assumption states lower bounds for the smallest eigenvalue of $\barMst$, given in~\eqref{eq:barMst}, and for maximum expected degrees $\Deltar$, $\Deltars$, and $\Deltasr$. The third assumption ensures that the gossip models over the RG-S and over the expected graph start with the same initial condition. The last assumption gives a lower bound for the number of regular agents.

\begin{assumption}\label{asmp:main}Assume that the following conditions hold.~\\\indent
	(i.1) $\deltars > 8\log n$.\\\indent
	(i.2) $\lambda_1(\barMst) > 4\sqrt{\Deltar \log n}$, $\Deltar \ge \log n$, and $\Deltars \vee \Deltasr \ge \log n$.\\\indent
	(ii) Both the gossip model over the RG-S and the gossip model over the expected graph have the same initial condition $X(0)$ and stubborn-agent opinions $\zs$. In addition, $\max_{i\in\mtcvr} \{|X_i(0)|\} \vee \max_{j\in\mtcvs}\{|\zs_j|\} \le c_x$ for some constant $c_x > 0$.\\\indent
        (iii) There exists a constant $c_{\tx{r}} \in (0,1)$ such that the proportion of regular agents $r_0 \Let \nr/n > c_{\tx{r}}$ for all $n \in \NN^+$.
\end{assumption}

\begin{remark}
    The condition~(i.1) requires that every regular agent has positive probability connected to some stubborn agent, whereas the condition~(i.2) allows the existence of regular agents not connected to any stubborn agents. Note that $\lambda_1(\barMst) \ge \deltars$ but~(i.1) does not imply~(i.2): Consider $\EE\{\dir\} = (\log n)^2$ and $\EE\{\dis\} = 9 \log n$, $i \in \mtcvr$. Then $\deltars > 8\log n$ but $\lambda_1(\barMst) = 9 \log n < \sqrt{\Deltar\log n}$ for large~$n$.
    The condition~(iii) assumes that the number of regular agents is proportional to the network size, which is necessary for entry-wise concentration studied in \Cref{cor:number_error}. 
    $\hfill\square$
\end{remark}

We now state the first main theorem, which studies the concentration of $\xgn$ and provides a high-probability bound for the distance between $\xgn$ and $\xstn$.

\begin{theorem}[Concentration of expected final opinions]\label{thm:concentration_states}~\\\indent For $\xgn$ and~$\xstn$ given in~\eqref{eq:bfxg} and~\eqref{eq:bfx*}, respectively, the following results hold.\\\indent
	(i) Under \Cref{asmp:main}~(i.1) and~(ii), it holds that
	\begin{align}\label{eq:concen_x}
		\PP\{\|\xgn - \xstn\| \le \eps_{x,n} \|\zs\|\} &\ge 1 - \eta_{x,n},
	\end{align}
	where
	\begin{align*}
		\eps_{x,n} &= 4 \bigg( \frac{\sqrt{(\Deltars \vee \Deltasr) \log n}}{\deltars} + \frac{2 \sqrt{\Deltar \log n} \|\Psis\|}{\deltars^2}  \bigg) , \\
		\eta_{x,n} &= r_0 n^{1 - \frac{\deltars}{8 \log n}} + 2(1+r_0)n^{-\frac15} + 2 n^{-\frac23},
	\end{align*}
        and $r_0= \nr/n$ is the proportion of regular agents. \\\indent
	(ii) Under \Cref{asmp:main}~(i.2) and~(ii), \eqref{eq:concen_x} holds with
	\begin{align*}
		\eps_{x,n} &= 2 \bigg( \frac{\sqrt{(\Deltars \vee \Deltasr) \log n}}{\lambda_1(\barMst) - 4\sqrt{\Deltar\log n}} + \frac{2 \sqrt{\Deltar \log n} \|\Psis\|}{\lambda_1(\barMst)(\lambda_1(\barMst) - 4\sqrt{\Deltar\log n})}  \bigg), \\
		\eta_{x,n} &= 2(1+r_0)n^{-\frac15} + 2n^{-\frac18}.
	\end{align*}
\end{theorem}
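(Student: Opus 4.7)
My plan starts by cancelling the common factor $2\alphag$ appearing in (1.8) and (1.9). Since $I-\barQg = \barMg/(2\alphag)$ and $\barRg = \barUg/(2\alphag)$, one has $\xgn = (\barMg)^{-1}\barUg\zs$, and analogously $\xstn = (\barMst)^{-1}\Psis\zs$ because $\EE\{\barUg\} = \Psis$ and $\EE\{\barMg\} = \barMst$. Applying the resolvent identity $A^{-1}-B^{-1}=A^{-1}(B-A)B^{-1}$ and the triangle inequality gives
\begin{align*}
\|\xgn - \xstn\| \le \|(\barMg)^{-1}\|\,\|\barUg - \Psis\|\,\|\zs\| + \|(\barMg)^{-1}\|\,\|\barMg - \barMst\|\,\|(\barMst)^{-1}\|\,\|\Psis\|\,\|\zs\|.
\end{align*}
Thus it suffices to control the two matrix perturbations $\|\barUg-\Psis\|$ and $\|\barMg-\barMst\|$ by concentration and to bound $\|(\barMg)^{-1}\|$ and $\|(\barMst)^{-1}\|$ from above.

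For the perturbation terms, $\barUg-\Psis$ is an $\nr\times\ns$ matrix of centred independent Bernoullis whose row- and column-variance proxies are $\Deltars$ and $\Deltasr$, so matrix Bernstein delivers a bound of order $\sqrt{(\Deltars\vee\Deltasr)\log n}$ with high probability (the hypotheses $\Deltars\vee\Deltasr\ge\log n$ in (i.2), and $\deltars>8\log n$ in (i.1), place us in the sub-Gaussian regime). For $\barMg-\barMst$ I would decompose it as the diagonal part $\diag(d_i-\EE\{d_i\})_{i\in\mtcvr}$ plus $-(A^{rr}-\EE\{A^{rr}\})$, where $A^{rr}$ denotes the regular-regular adjacency block; scalar Bernstein with a union bound over the $\nr$ regular agents controls the first by $\mathrm{O}(\sqrt{\Deltar\log n})$, and matrix Bernstein controls the second by the same order, yielding $\|\barMg-\barMst\| \le 4\sqrt{\Deltar\log n}$ with high probability. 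This is the origin of the constant $4$ appearing in part~(ii).

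For the inverse bounds the two parts of the theorem diverge. Under (i.1), write $\barMg = L^{rr,\mtcg} + D^{s,\mtcg}$, where $L^{rr,\mtcg}$ is the (positive semidefinite) graph Laplacian of the regular-agent subgraph and $D^{s,\mtcg}=\diag(\dis)$, so $\lambda_1(\barMg)\ge \min_{i\in\mtcvr}\dis$. A one-sided Chernoff bound gives $\PP\{\dis \le \tfrac12\EE\{\dis\}\} \le \exp(-\EE\{\dis\}/8) \le n^{-\deltars/(8\log n)}$, and union-bounding over the $r_0 n$ regular agents produces the first term $r_0 n^{1-\deltars/(8\log n)}$ of $\eta_{x,n}$ while forcing $\lambda_1(\barMg)\ge \deltars/2$. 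The deterministic inequality $\lambda_1(\barMst) \ge \deltars$ (following from the same Laplacian-plus-diagonal decomposition in expectation) then gives $\|(\barMg)^{-1}\|\le 2/\deltars$ and $\|(\barMst)^{-1}\|\le 1/\deltars$, and substitution into the triangle inequality above yields precisely $\eps_{x,n}$. Under (i.2), no Chernoff step is needed: $\lambda_1(\barMst)$ is bounded below by hypothesis, and Weyl's inequality applied to the perturbation estimate gives $\lambda_1(\barMg)\ge \lambda_1(\barMst)-4\sqrt{\Deltar\log n}$ on the same concentration event, producing the denominators in part~(ii).

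The main technical obstacle is the calibration inside the matrix Bernstein bounds. I must choose the deviation threshold $t$ in each application so that the resulting dimension factors (of order $\nr+\ns=n$ for $\barUg-\Psis$ and of order $\nr=r_0 n$ for the adjacency component of $\barMg-\barMst$) combine with the exponential tail to give precisely the probabilities $2(1+r_0)n^{-1/5}$ and $2n^{-2/3}$ (or $2n^{-1/8}$) recorded in $\eta_{x,n}$, and the scalar coefficients in $\eps_{x,n}$ must be hit on the nose. A secondary care point is that $\barMg-\barMst$ is not of pure adjacency form, so one must either apply matrix Bernstein directly to its symmetric edge-indexed decomposition with the correct variance proxy, or combine the two Bernstein events (for diagonal and off-diagonal parts) via a union bound without losing constants; either route requires careful accounting of the variance parameters and of how $\Deltarr$, $\Deltars$, $\Deltasr$ all collapse into the single $\Deltar$ appearing in the final bound.
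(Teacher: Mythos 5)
Your proposal is correct and follows essentially the same route as the paper: cancel the $2\alphag$ factors, split $(\barMg)^{-1}\barUg - (\barMst)^{-1}\Psis$ into the two perturbation terms, control $\|\barUg-\Psis\|$ and $\|\barMg-\barMst\|$ by matrix Bernstein, and bound the inverses via Gershgorin plus Chernoff in case~(i) and via Weyl in case~(ii); the only cosmetic difference is that the paper bounds $\|\barMg-\barMst\|$ with a single matrix Bernstein application over the edge-indexed symmetric decomposition rather than your diagonal-plus-off-diagonal split (which you anyway mention as an alternative). One bookkeeping correction: the terms $2n^{-2/3}$ and $2n^{-1/8}$ in $\eta_{x,n}$ do not arise from Bernstein dimension factors but from scalar Chernoff bounds on the edge count $\alphag$ (\Cref{lem:bound_alpha_rhoQ}), which the paper uses to guarantee $\rho(\barQg)<1$ and hence that $\xgn$ in~\eqref{eq:bfxg} is well defined --- a well-posedness step your outline leaves implicit.
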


\begin{proof}
	See \Cref{appen:thm:concentration_states}.
\end{proof}

\begin{remark}\label{rmk:stateconcentration}
	The first result indicates that the distance between $\xgn$ and $\xstn$ can be bounded by a quantity depending on expected degrees multiplied by the norm of stubborn agent opinions $\zs$, with probability relying on the network size $n$, the proportion of regular agents $r_0$, and the minimum expected stubborn degree $\deltars$. 
    The second result studies the case where $\deltars = 0$, and replaces $\deltars$ with terms related to $\lambda_1(\barMst)$. Note that $\lambda_1(\barMst) \ge \deltars$ represents the minimum expected influence of stubborn agents on regular agents. A lower bound of $\lambda_1(\barMst)$ can be found in~\cite{manaffam2017bounds}. 
    Neither \Cref{asmp:main}~(i.1) nor~(i.2) guarantees connectivity of the random graph, but they ensure that each connected component is influenced by some stubborn agents w.h.p. 
    \Cref{asmp:main}~(i.1) implies $\ns = \Omega(\log n)$, and $\Deltars\vee\Deltasr \ge \log n$ in~(i.2) implies $\nr\vee\ns = \Omega(\log n)$.
    To derive entry-wise concentration (\Cref{cor:number_error}), a larger lower bound $\nr \ge c_{\tx{r}}n$ (\Cref{asmp:main}~(iii)) is needed. 
    In contrast, $\ns$ needs not be proportional to $n$, as long as the link probability between regular and stubborn agents is large enough. 
    Classic concentration bounds for adjacency and Laplacian matrices~\cite{chung2011spectra,le2017concentration} contain the maximum or minimum expected degree. 
    Our results show that the concentration of expected final opinions depends on the relative magnitude of the two expected degrees. 
    The logarithmic term in the bounds may be removed~\cite{le2017concentration}, as suggested in Section~\ref{sec:simul}. We leave the improvement to future work.
$\hfill\square$
\end{remark}

From \Cref{thm:concentration_states}~(i) we can obtain the following proposition. The proposition provides an entry-wise approximation of $\xgn$ using $\xstn$, lower bounding the number of entries of $\xgn$ that are close to those of $\xstn$.

\begin{proposition}[Entry-wise concentration]\label{cor:number_error}
	For $\eps > 0$ denote $\mtcv^{\eps,n} \Let \{i\in \mtcvr: |\xgn_i - \xstn_i| > \eps\}$. Suppose \Cref{asmp:main}~(ii) and~(iii) hold, and $\deltars = \omega((\log n) \vee \sqrt{(\Deltar \log n)^{1/2} (\Deltars \vee \Deltasr)})$. Then for all $\eps > 0 $, $|\mtcvr \setminus \mtcv^{\eps,n}| = \nr(1-o(1))$ w.h.p.
\end{proposition}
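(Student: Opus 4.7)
The plan is to derive the entry-wise statement from the global $\ell_2$ bound of \Cref{thm:concentration_states}~(i) via a Chebyshev-style argument on the coordinates of $\xgn-\xstn$, and then verify that the scaling hypothesis on $\deltars$ makes the resulting bound vanish.

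First, I would note that the growth condition $\deltars = \omega(\log n)$ implies $\deltars > 8\log n$ for all sufficiently large $n$, so \Cref{asmp:main}~(i.1) is met; together with \Cref{asmp:main}~(ii) this lets me apply \Cref{thm:concentration_states}~(i) to obtain, with probability at least $1 - \eta_{x,n}$, the inequality $\|\xgn - \xstn\| \le \eps_{x,n}\|\zs\|$. The key observation is then the trivial lower bound
\begin{equation*}
\|\xgn - \xstn\|^{2} \;\ge\; \sum_{i \in \mtcv^{\eps,n}} (\xgn_i - \xstn_i)^{2} \;\ge\; |\mtcv^{\eps,n}|\,\eps^{2},
\end{equation*}
which yields $|\mtcv^{\eps,n}| \le \eps_{x,n}^{2}\|\zs\|^{2}/\eps^{2}$ on the good event.

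Next I would control the right-hand side. By \Cref{asmp:main}~(ii), $\|\zs\|^{2} \le \ns c_x^{2} \le n c_x^{2}$. For $\eps_{x,n}^{2}$, I would use that $\Psis$ is an entrywise nonnegative $\nr \times \ns$ matrix with row sums at most $\Deltars$ and column sums at most $\Deltasr$, giving the standard operator-norm bound $\|\Psis\|^{2} \le \Deltars \Deltasr \le (\Deltars \vee \Deltasr)^{2}$. Substituting,
\begin{equation*}
\eps_{x,n}^{2} \;\lesssim\; \frac{(\Deltars \vee \Deltasr)\log n}{\deltars^{2}} \;+\; \frac{\Deltar\,(\Deltars \vee \Deltasr)^{2}\log n}{\deltars^{4}}.
\end{equation*}
The hypothesis $\deltars^{2} = \omega\bigl(\sqrt{\Deltar \log n}\,(\Deltars \vee \Deltasr)\bigr)$ makes the second summand $o(1)$ directly; for the first summand, the same hypothesis combined with $\Deltar \ge \log n$ (which, if not assumed, follows from the growth condition on $\deltars$ since $\Deltar \ge \deltars$) gives
\begin{equation*}
\frac{(\Deltars \vee \Deltasr)\log n}{\deltars^{2}} \;=\; o\!\left(\frac{\log n}{\sqrt{\Deltar \log n}}\right) \;=\; o(1).
\end{equation*}
Hence $\eps_{x,n}^{2}\|\zs\|^{2} = o(n)$, so $|\mtcv^{\eps,n}| = o(n)$ on the good event, and by \Cref{asmp:main}~(iii),
\begin{equation*}
|\mtcvr \setminus \mtcv^{\eps,n}| \;\ge\; \nr - o(n) \;=\; \nr(1 - o(n)/\nr) \;=\; \nr(1-o(1)).
\end{equation*}

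Finally I would verify $\eta_{x,n} \to 0$: the polynomial terms $n^{-1/5}$ and $n^{-2/3}$ vanish, and $r_0 n^{1 - \deltars/(8\log n)} \to 0$ because $\deltars/(8\log n) \to \infty$ by hypothesis. This establishes the w.h.p.\ conclusion. The main obstacle is the bookkeeping in bounding $\|\Psis\|$ and verifying that the particular combination $\sqrt{(\Deltar\log n)^{1/2}(\Deltars\vee\Deltasr)}$ in the growth hypothesis is precisely what kills both summands of $\eps_{x,n}^{2}$; the rest is a standard reduction from $\ell_2$ concentration to coordinate-wise closeness.
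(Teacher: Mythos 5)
Your proposal is correct and follows essentially the same route as the paper: the counting bound $\eps^{2}|\mtcv^{\eps,n}|\le\|\xgn-\xstn\|^{2}$, an application of \Cref{thm:concentration_states}~(i) with $\|\zs\|^{2}\le c_x^{2}n$, and a verification that the growth hypothesis on $\deltars$ forces both summands of $\eps_{x,n}$ to be $o(1)$ (using $\|\Psis\|\le\Deltars\vee\Deltasr$ and $\Deltar\ge\deltars$), before invoking \Cref{asmp:main}~(iii). You merely spell out the bookkeeping more explicitly than the paper does; no substantive difference.
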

\begin{proof}
    See \Cref{appen:cor:number_error}.
\end{proof}
\begin{remark}\label{rmk:number_error}
    The result shows that most entries of $\xgn$ are close to $\xstn$ if regular agents constitute the majority of the network and the minimum expected stubborn degree is large enough.
    As a consequence, the opinion mean  $\bfl_{\nr}^\tp \xgn/\nr$ is close to its expected version $\bfl_{\nr}^\tp \xstn/\nr$.
    $\hfill\square$    
\end{remark}

Relating $\xgn$ to its expected version $\xstn$ can help us quantify $\xgn$ in more detail. To show this, we first investigate properties of $\xstn$. Let
\begin{small}
\begin{align*}
	\barLg &\Let 
	\begin{bmatrix}
		d^{(\tx{r})}_1 & - a_{12} & \dots & -a_{1,\nr} \\
		\vdots & & & \vdots \\
		-a_{\nr,1} & \dots & -a_{\nr,\nr-1} & d^{(\tx{r})}_{\nr}
	\end{bmatrix}
\end{align*}
\end{small}
\hspace{-1mm}be the Laplacian of the subgraph induced by regular agents, and denote its expectation by $\barLst \Let \EE\{\barLg\}$. Recall that $\Deltarr$ is the maximum of expected regular degrees $\EE\{\dir\}$, $1\le i \le \nr$. When regular agents have larger expected stubborn degrees than regular degrees ($\deltars$ much larger than $\Deltarr$), they can have final opinions close to their stubborn neighbors. In contrast, if regular agents have large expected connectivity among themselves compared with their expected stubborn degrees ($\lambda_2(\barLst)$ much larger than $\Deltars\vee\Deltasr$), they can have final opinions close to each other. The theorem below summarizes these results for the expected final opinions $\xstn$.

\begin{theorem}[Profile of $\xstn$]\label{prop:profilexst} 
        The following results hold for~$\xstn$ given in~\eqref{eq:bfx*}.\\\indent
	(i) (When stubborn agents have relatively large influence) \\\indent
        If $\deltars = \omega(1 \vee \sqrt{\Deltarr (\Deltars \vee \Deltasr)})$, then
	\begin{align*}
	\|\xstn - (\diag(\Psis \bfl_{\ns}))^{-1} \Psis \zs \| = o(\|\zs\|).
\end{align*}\indent
	(ii) (When stubborn agents have relatively small influence)~\\\indent If $\lambda_1(\barMst)  =\omega((\Deltars \vee \Deltasr)^{c_M})$ and $\lambda_2(\barLst) = \omega(1 \vee (\Deltars \vee \Deltasr)^{2-c_M})$ for some $c_M \in (0,1)$, then there exists $\gamma_n \in \RR$ such that $\|\xstn - \gamma_n \bfl_{\nr} \| = o(\|\zs\|)$.
\end{theorem}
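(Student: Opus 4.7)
The starting point for both parts is the closed form $\xstn = \barMst^{-1}\Psis\zs$ with $\barMst = \barLst + D_s$ and $D_s = \diag(\Psis\bfl_{\ns})$. Since $\barLst\bfl_{\nr}=0$, we have $\barMst\bfl_{\nr}=D_s\bfl_{\nr}$ and $\barMst\succeq D_s\succeq\deltars I_{\nr}$. The plan is to treat part~(i) as a Neumann-type perturbation in which $\barLst$ is a small correction to $D_s$, and part~(ii) via an orthogonal decomposition of $\xstn$ along $\bfl_{\nr}$ and its complement, powered by the spectral gap $\lambda_2(\barLst)$ on $\bfl_{\nr}^\perp$.

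For part~(i), set $y^*\Let D_s^{-1}\Psis\zs$. Since $\barMst y^* = \barLst y^* + \Psis\zs$, one has the identity
\[\xstn - y^* \;=\; \barMst^{-1}(\Psis\zs - \barMst y^*) \;=\; -\,\barMst^{-1}\barLst y^*,\]
which I bound factor by factor: $\|\barMst^{-1}\|\le 1/\deltars$ from $\barMst\succeq\deltars I$; $\|\barLst\|\le 2\Deltarr$ by the usual Laplacian bound; and, viewing $D_s^{-1}\Psis$ as a nonnegative row-stochastic matrix whose columns sum to at most $\Deltasr/\deltars$, the inequality $\|A\|_2^2\le\|A\|_1\|A\|_\infty$ gives $\|D_s^{-1}\Psis\|\le\sqrt{\Deltasr/\deltars}$, hence $\|y^*\|\le\sqrt{\Deltasr/\deltars}\,\|\zs\|$. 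Combining yields
\[\|\xstn-y^*\|\;\le\;\frac{2\,\Deltarr\sqrt{\Deltasr}}{\deltars^{3/2}}\,\|\zs\|.\]
Under the hypothesis $\deltars=\omega(\sqrt{\Deltarr(\Deltars\vee\Deltasr)})$ together with $\deltars\le\Deltars$ (which forces $\Deltars=\omega(\Deltarr)$ and, when $\Deltasr>\Deltars$, also $\Deltasr=\omega(\Deltarr)$), a short case check shows $\Deltarr\sqrt{\Deltasr}/\deltars^{3/2}=o(1)$ in both regimes.

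For part~(ii), choose $\gamma_n\Let (\bfl_{\nr}^\tp\xstn)/\nr$ so that $v\Let\xstn-\gamma_n\bfl_{\nr}\perp\bfl_{\nr}$, and let $U=[u_1\ U_{\ge 2}]$ be an orthonormal basis with $u_1=\bfl_{\nr}/\sqrt{\nr}$ that diagonalizes $\barLst$ with spectrum $0,\lambda_2(\barLst),\dots,\lambda_{\nr}(\barLst)$. Writing $t\Let u_1^\tp\xstn$, $w\Let U_{\ge 2}^\tp\xstn$ (so $\|w\|=\|v\|$), $\tilde{D}\Let U^\tp D_s U$, and $\tilde{b}\Let U^\tp\Psis\zs$, the identity $\barMst\xstn=\Psis\zs$ becomes a $2\times 2$ block system whose second block row reads $\tilde{D}_{21}\,t+(\Lambda_{22}+\tilde{D}_{22})\,w=\tilde{b}_{\ge 2}$. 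Since $\tilde{D}_{22}\succeq 0$ and $\Lambda_{22}\succeq\lambda_2(\barLst)I$, the inverse of $\Lambda_{22}+\tilde{D}_{22}$ has norm at most $1/\lambda_2(\barLst)$, so
\[\|v\|\;=\;\|w\|\;\le\;\frac{\|\tilde{b}_{\ge 2}\|+\|\tilde{D}_{21}\|\,|t|}{\lambda_2(\barLst)}.\]
I then bound each piece using $\|\Psis\|\le\sqrt{\Deltars\Deltasr}$ (from max row sum $\Deltars$ and max column sum $\Deltasr$): $\|\tilde{b}_{\ge 2}\|\le\sqrt{\Deltars\Deltasr}\,\|\zs\|$, $\|\tilde{D}_{21}\|\le\|D_s u_1\|\le\Deltars$, and $|t|\le\|\xstn\|\le\|\barMst^{-1}\|\|\Psis\zs\|\le\sqrt{\Deltars\Deltasr}\,\|\zs\|/\lambda_1(\barMst)$. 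Substituting,
\[\|v\|\;\le\;\frac{\sqrt{\Deltars\Deltasr}}{\lambda_2(\barLst)}\Bigl(1+\frac{\Deltars}{\lambda_1(\barMst)}\Bigr)\|\zs\|.\]

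The main obstacle is verifying this bound is $o(\|\zs\|)$ under the stated trade-off between $\lambda_1(\barMst)$ and $\lambda_2(\barLst)$. Using $\sqrt{\Deltars\Deltasr}\le\Deltars\vee\Deltasr$, the first term $\sqrt{\Deltars\Deltasr}/\lambda_2(\barLst)$ is $o(1)$ from $\lambda_2(\barLst)=\omega((\Deltars\vee\Deltasr)^{2-c_M})$ with $c_M<1$. The second term is at most $(\Deltars\vee\Deltasr)^2/(\lambda_1(\barMst)\lambda_2(\barLst))$, which is $o(1)$ precisely because the product of the two scaling hypotheses yields $\lambda_1(\barMst)\lambda_2(\barLst)=\omega((\Deltars\vee\Deltasr)^{c_M}(\Deltars\vee\Deltasr)^{2-c_M})=\omega((\Deltars\vee\Deltasr)^2)$. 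Thus the exponent $c_M$ is exactly the knob that calibrates the large-influence bound on $\barMst$ against the large-gap bound on $\barLst$, and the elementary block-inversion argument is sharp enough to realize this trade-off.
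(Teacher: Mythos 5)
Your argument is correct, and for part~(ii) it is a genuinely different route from the paper's. For part~(i) both proofs ultimately compare $(\barMst)^{-1}$ with $(\diag(\Psis\bfl_{\ns}))^{-1}$ using $\lambda_1(\barMst)\ge\deltars$ and $\|\barLst\|\le 2\Deltarr$; the paper bounds $\|(\barMst)^{-1}-(\diag(\Psis\bfl_{\ns}))^{-1}\|\cdot\|\Psis\|\cdot\|\zs\|$ directly via the resolvent identity, which yields $2\Deltarr\|\Psis\|/\deltars^2$, whereas your residual identity $\xstn-y^*=-(\barMst)^{-1}\barLst y^*$ exploits the row-stochasticity of $(\diag(\Psis\bfl_{\ns}))^{-1}\Psis$ to get the slightly sharper factor $\sqrt{\Deltasr/\deltars}$ in place of $\|\Psis\|/\deltars$; both are $o(1)$ under the hypothesis (your case check is fine, and can be unified by noting $\Deltarr\sqrt{\Deltasr}/\deltars^{3/2}=\sqrt{\Deltarr/\deltars}\cdot\sqrt{\Deltarr\Deltasr}/\deltars$ with both factors $o(1)$ since $\deltars\le\Deltars\vee\Deltasr$). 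For part~(ii) the paper takes $\gamma_n=\bfl_{\nr}^\tp\Psis\zs/(\nr\lambda_1(\barMst))$, expands $(\barMst)^{-1}$ in the eigenbasis of $\barMst$, and controls the leading term with an eigenvector perturbation lemma (a $\sin\Theta$-type bound comparing $\xi\xi^\tp$ with $\bfl_{\nr}\bfl_{\nr}^\tp/\nr$, requiring $\lambda_1(\barMst)$ to be simple) plus a bound $1/\lambda_2(\barMst)$ on the tail. Your choice $\gamma_n=\bfl_{\nr}^\tp\xstn/\nr$ and the block elimination in the eigenbasis of $\barLst$ avoids the perturbation lemma and the simplicity requirement entirely: the coercivity $\Lambda_{22}+\tilde D_{22}\succeq\lambda_2(\barLst)I$ on $\bfl_{\nr}^\perp$ does all the work, and the cross term $\tilde D_{21}t$ plays the role of the paper's term $(I)$, ending in the same asymptotic verification $\lambda_1(\barMst)\lambda_2(\barLst)=\omega((\Deltars\vee\Deltasr)^2)$. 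Your version is arguably cleaner and more elementary; the only things you share with the paper's treatment are the minor unaddressed degeneracies (e.g.\ $\Deltars\to 0$), which the paper also dispatches only for the exactly-zero case.
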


\begin{proof}
	See \Cref{appen:prop:profilexst}.
\end{proof}

\begin{remark}
    The first result indicates that, if the influence of stubborn agents is large enough compared with the link strength between regular agents, then entries of $\xstn$ are close to opinions of stubborn agents.
    Thus, polarization may occur if regular agents are connected separately to two groups of stubborn agents holding opposite opinions. 
    In contrast,~(ii) shows that $\xstn$ is close to a consensus vector, if the influence of stubborn agents is much smaller than the link strength between regular agents. Note that $\xstn$ is an expectation and $\Xst(t)$ may not converge a.s.
    $\hfill\square$
\end{remark}

\Cref{prop:profilexst}~(i) only considers the case where every regular agent has positive probability connected to stubborn agents (i.e., $\deltars > 0$). Further results for the case where $\deltars = 0$ can be developed. Denote regular agents that have positive expected stubborn degrees by $1,\dots,n_{\tx{r}1}$, and the rest of the regular agents by $n_{\tx{r}1},\dots, n_{\tx{r}1} + n_{\tx{r}2}$, where $n_{\tx{r}1} + n_{\tx{r}2} = \nr$. That is, $\EE\{\dis\} > 0$ for $1\le i \le n_{\tx{r}1}$ and $\EE\{\dis\} = 0$ for $n_{\tx{r}1} + 1, \dots, \nr$. Hence $\barMst$ given in~\eqref{eq:barMst} and the link probability matrix $\Psis$ can be written in block structures as follows
\begin{align}\nonumber
    ~\\\label{eq:barMstblock}
    \barMst = ~
    \begin{NiceMatrix}[columns-width=2mm]
        \bar{M}^{*(11)} & \bar{M}^{*(12)}   &  ~~n_{\tx{r}1} \\
        \bar{M}^{*(21)} & \bar{M}^{*(22)} & ~~n_{\tx{r}2} 
        \CodeAfter
          \SubMatrix[{1-1}{2-2}]
          \OverBrace[shorten,yshift=2mm]{1-1}{1-1}{n_{\tx{r}1}}
          \OverBrace[shorten,yshift=2mm]{1-2}{1-2}{n_{\tx{r}2}}
          \SubMatrix{.}{1-1}{1-2}{\}}[xshift=2mm]
          \SubMatrix{.}{2-1}{2-2}{\}}[xshift=2mm]
    \end{NiceMatrix}~,\qquad
    \Psis =~
    \begin{NiceMatrix}[columns-width=2mm]
        \Psis_+ &   ~~n_{\tx{r}1} \\
        \bfo_{n_{\tx{r}2},\ns} &  ~~n_{\tx{r}2} 
        \CodeAfter
          \SubMatrix[{1-1}{2-1}]
          \OverBrace[shorten,yshift=2mm]{1-1}{1-1}{\ns}
          \SubMatrix{.}{1-1}{1-1}{\}}[xshift=2mm,extra-height=-0.5mm]
          \SubMatrix{.}{2-1}{2-1}{\}}[xshift=2mm]
    \end{NiceMatrix}~.
\end{align}
The block structures depict the topological relationship between the two types of regular agents.
For agents $1,\dots,n_{\tx{r}1}$, let $\deltars^+ \Let \min_{1\le i \le n_{\tx{r}1}} \{\EE\{\dis\}\}$ be their minimum expected stubborn degree, and $\Deltarr^+ \Let \max_{1\le i \le n_{\tx{r}1}}\{\EE\{\dir\}\}$  be their maximum expected regular degree. These agents have final opinions close to their stubborn neighbors, if they have expected stubborn degrees not only larger than their regular degrees ($\deltars^+$ much larger than $\Deltarr^+$), but also larger than the total link strength between them and the rest of the regular agents ($\deltars^+$ much larger than $\|\bar{M}^{*(21)}\|$). The agents $n_{\tx{r}1} + 1, \dots, \nr$ have final opinions as weighted averages of stubborn opinions, with weights depending on network structure. The following theorem presents the above result, extending \Cref{prop:profilexst}~(i).
\begin{thmbis}
    ((When stubborn agents have large influence and $\deltars = 0$)~\\\indent
    If $\lambda_1(\bar{M}^{*(22)}) = \Omega(1)$ and 
    \begin{align}\label{eq:thm47prime}
    \deltars^+ = \omega\bigg( \max \bigg\{ \|\bar{M}^{*(21)}\| \sqrt{\frac{(\Deltars \vee \Deltasr)}{\lambda_1(\bar{M}^{*(22)})}}, \sqrt{\Deltarr^+ (\Deltars \vee \Deltasr)} , 1 \bigg\} \bigg),
    \end{align}
    then there exists $\tilde{M}^{*} \in \RR^{n_{\tx{r}2} \times n_{\tx{r}1}}$ such that
    \begin{align*}
	\bigg \|\xstn - \begin{bmatrix}
        (\diag(\Psis_+ \bfl_{\ns}))^{-1} \Psis_+ \zs \\ \tilde{M}^{*} \Psis_+ \zs
    \end{bmatrix}
     \bigg \| = o(\|\zs\|).
\end{align*}
\end{thmbis}
\begin{proof}
	See \Cref{appen:prop:profilexst}.
\end{proof}

Combining \Cref{thm:concentration_states,prop:profilexst} yields the following result, quantifying the expected final opinions over the RG-S $\xgn$. The theorem studies two cases: When the minimum expected stubborn degree $\deltars$ is large enough compared with maximum expected degrees, regular agents have final opinions close to their stubborn neighbors. In contrast, assume that the influence of stubborn agents ($\lambda_1(\barMst)$) is large enough for concentration to hold. Then regular agents have similar final opinions, when they have large enough connectivity among themselves compared with their expected stubborn degrees ($\lambda_2(\barLst)$ much larger than $\Deltars\vee\Deltasr$).

\begin{theorem}[Profile of $\xgn$]\label{thm_profileofxg}Suppose that \Cref{asmp:main}~(ii) holds.\\\indent
    (i) If $\deltars = \omega((\log n) \vee \sqrt{(\Deltars \vee \Deltasr)[(\Deltar \log n)^{1/2} \vee \Deltarr] })$. Then w.h.p. \[\|\xgn - (\diag(\Psis \bfl_{\ns}))^{-1} \Psis \zs \| = o(\|\zs\|).\]
    \indent (ii) If there exists $c_M \in (0,1)$ such that $\lambda_1(\barMst) = \omega(\max\{(\Deltars \vee \Deltasr)^{c_M}, (\Deltar $ $\log n)^{1/2},$ $ \sqrt{(\Deltar \log n)^{1/2}(\Deltars \vee \Deltasr)}\})$ and $\lambda_2(\barLst) = \omega((\Deltars \vee \Deltasr)^{2-c_M})$, then  there exists $\gamma_n \in \RR$ such that $\|\xgn - \gamma_n \bfl_{\nr} \| = o(\|\zs\|)$ w.h.p.    
\end{theorem}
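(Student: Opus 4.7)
The strategy is straightforward: apply the triangle inequality
\[
\|\xgn - T\| \le \|\xgn - \xstn\| + \|\xstn - T\|,
\]
with $T = (\diag(\Psis \bfl_{\ns}))^{-1}\Psis\zs$ in part~(i) and $T = \gamma_n \bfl_{\nr}$ in part~(ii). The first summand is controlled by \Cref{thm:concentration_states} and the second by \Cref{prop:profilexst}, so the real task is to verify that the hypotheses of each are implied by those of the present theorem, and in particular that the rate $\eps_{x,n}$ and the failure probability $\eta_{x,n}$ from \Cref{thm:concentration_states} are both $o(1)$. A uniform tool is the standard bound
\[
\|\Psis\| \le \sqrt{\|\Psis\|_1 \|\Psis\|_{\infty}} = \sqrt{\Deltars \Deltasr} \le \Deltars \vee \Deltasr,
\]
which converts the matrix-norm term in $\eps_{x,n}$ into pure degree quantities.

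For part~(i), $\deltars = \omega(\log n)$ gives $\deltars > 8\log n$ for all large $n$, so Assumption~\ref{asmp:main}(i.1) holds and $\eta_{x,n} = o(1)$ because $\deltars/(8\log n) \to \infty$; moreover $\Deltar \ge \Deltars \ge \deltars$ forces $\Deltar \ge \log n$ and hence $(\Deltar\log n)^{1/2} \ge \log n$. The two summands of $\eps_{x,n}$ from \Cref{thm:concentration_states}(i) are then bounded by $\sqrt{(\Deltars\vee\Deltasr)\log n}/\deltars$ and $\sqrt{\Deltar\log n}\,(\Deltars\vee\Deltasr)/\deltars^2$, and both are $o(1)$ directly from $\deltars = \omega(\sqrt{(\Deltars\vee\Deltasr)(\Deltar\log n)^{1/2}})$, the first using also $(\Deltar\log n)^{1/2} \ge \log n$. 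The residual hypothesis $\deltars = \omega(\sqrt{\Deltarr(\Deltars\vee\Deltasr)})$, together with $\deltars = \omega(1)$, matches \Cref{prop:profilexst}(i), which yields $\|\xstn - (\diag(\Psis\bfl_{\ns}))^{-1}\Psis\zs\| = o(\|\zs\|)$.

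For part~(ii), in the nondegenerate regime $\Deltar,\Deltars\vee\Deltasr \ge \log n$ the hypothesis $\lambda_1(\barMst) = \omega((\Deltar\log n)^{1/2})$ delivers Assumption~\ref{asmp:main}(i.2) and also $\lambda_1(\barMst) - 4\sqrt{\Deltar\log n} = (1-o(1))\lambda_1(\barMst)$. Applying \Cref{thm:concentration_states}(ii), $\eps_{x,n}$ is controlled up to constants by
\[
\frac{\sqrt{(\Deltars\vee\Deltasr)\log n}}{\lambda_1(\barMst)} + \frac{\sqrt{\Deltar\log n}\,(\Deltars\vee\Deltasr)}{\lambda_1(\barMst)^2},
\]
and both terms are $o(1)$ from $\lambda_1(\barMst) = \omega(\sqrt{(\Deltar\log n)^{1/2}(\Deltars\vee\Deltasr)})$ (the first again using $(\Deltar\log n)^{1/2} \ge \log n$). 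The remaining conditions $\lambda_1(\barMst) = \omega((\Deltars\vee\Deltasr)^{c_M})$ and $\lambda_2(\barLst) = \omega((\Deltars\vee\Deltasr)^{2-c_M})$ match \Cref{prop:profilexst}(ii) verbatim, producing $\gamma_n$ with $\|\xstn - \gamma_n \bfl_{\nr}\| = o(\|\zs\|)$. The main obstacle is not any new argument but the careful bookkeeping of the cascade of $\omega$-relations among $\deltars, \lambda_1(\barMst), \Deltar, \Deltarr, \Deltars, \Deltasr$, and $\lambda_2(\barLst)$: the composite hypotheses in the statement are engineered precisely so that each $\omega$-clause dominates one specific fraction in $\eps_{x,n}$ or one specific assumption of \Cref{prop:profilexst}, which is why checking the match is the bulk of the work.
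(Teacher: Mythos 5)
Your proposal is correct and is exactly the route the paper takes: the paper offers no separate proof of this theorem, stating only that it follows by combining \Cref{thm:concentration_states} and \Cref{prop:profilexst}, and your triangle-inequality decomposition plus the bookkeeping of the $\omega$-relations (using $\|\Psis\|\le\Deltars\vee\Deltasr$ and $(\Deltar\log n)^{1/2}\ge\log n$) is precisely that omitted verification. The only cosmetic gap is that in part~(ii) the ``nondegenerate regime'' $\Deltar,\Deltars\vee\Deltasr\ge\log n$ should be derived rather than assumed, which follows from $\lambda_1(\barMst)\le\Deltars\le\Deltar$ (a bound the paper itself invokes in the remark after the theorem) combined with $\lambda_1(\barMst)=\omega((\Deltar\log n)^{1/2})$.
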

\begin{remark}
    \Cref{thm_profileofxg}~(i) shows that polarization of the expected final opinions can occur when the influence of stubborn agents is large, as also shown in~\cite{como2016local}. Similar results hold for the case where $\deltars = 0$, which are omitted due to space limit.    
    \Cref{thm_profileofxg}~(ii) implies that consensus of expected final opinions can appear if the influence of stubborn agents is small, as investigated in~\cite{acemouglu2013opinion}.
    When the influence of stubborn agents is neither large nor small, the expected final opinions exhibit much diversity~\cite{flache2017models,friedkin2015problem} and it is hard to provide universal characterization. But \Cref{thm:concentration_states,cor:number_error} enable approximation of $\xgn$ using $\xstn$, for example, establishing correspondence between expected final opinions and communities for SBMs (see Section~\ref{sec:simul}). 
    It is possible to obtain concentration of opinion variances as in~\cite{acemouglu2013opinion}, by solving the stationary covariance matrix and analyzing its concentration~\cite{xing2022identification}. The assumptions in~(ii) essentially ensure the connectivity of the random graph w.h.p.: \Cref{lem:bernstein} and~\eqref{eq:append_weyl} in \Cref{appen:auxiliary_concentration} yield that $\lambda_2(\barLg) \ge \lambda_2(\barLst) - 4 \sqrt{\Deltarr \log n}$ w.h.p. Note that $(\Deltars \vee \Deltasr)^{2-c_M} \ge \Deltars \ge \lambda_1(\barMst) = \omega(\sqrt{\Deltar \log n})$, so $\lambda_2(\barLg) \ge \omega(\sqrt{\Deltar \log n}) - 4 \sqrt{\Deltarr \log n} = \omega(\sqrt{\Deltarr \log n})$, indicating $\lambda_2(\barLg) > 0$ for large $n$.
	$\hfill\square$
\end{remark}

\subsection{Concentration of Time-Averaged Opinions}\label{sec:transient}

In this subsection we study the concentration of time-averaged opinions $\Smtcg(t) = (\sum_{i=0}^{t-1} \Xmtcg(i))/t$ around the expected final opinions $\xstn$. 
The previous subsection studies the bound for $\|\xgn - \xstn\|$. \Cref{prop:stability} indicates that  the time average  $\Smtcg(t)$ should be close to $\xgn$ when $t$ is large enough. By bounding $\|\Smtcg(t) - \xgn\|$, we obtain the following result. The theorem provides upper bounds for the distance between $\Smtcg(t)$ and $\xstn$, which hold with probability increasing to one as the network size $n$ and the time $t$ increase.

\begin{theorem}[Concentration of time-averaged opinions]\label{thm:concentration_states_time} ~\\\indent 
	(i) Under \Cref{asmp:main}~(i.1) and~(ii), for $\eps_{S,n} > 0$, $t > 2\bar{s}_*/\eps_{S,n}$, it holds that
	\begin{align}\label{eq:concen_S}
		\PP\{\|\Smtcg(t) - \xstn\| \le \sqrt{\nr} \eps_{S,n} + \eps_{x,n} \|\zs\|\} &\ge 1 - \eta_{S,n,t} - \eta_{S,n},
	\end{align}
	where $\eps_{x,n}$ is given in \Cref{thm:concentration_states} (i), and
	\begin{align*}
		\eta_{S,n,t} &= 2 \nr \exp\bigg\{ - \frac{(t\eps_{S,n} - 2 \bar{s}_*)^2}{2t(\bar{s}_*)^2} \bigg\},~\bar{s}_* = \frac{12\sqrt{\nr}c_x \alphast}{\deltars},\\
		 \eta_{S,n} &= r_0 n^{1 - \frac{\deltars}{8\log n}} + 2(1+r_0) n^{-\frac15} + 2n^{-\frac23},~r_0 = \nr/n.
	\end{align*}  \indent
	(ii)  Under \Cref{asmp:main}~(i.2) and~(ii),~\eqref{eq:concen_S} holds for $\eps_{S,n} > 0$, $t > 2\bar{s}_*/\eps_{S,n}$ with $\eps_{x,n}$ given in \Cref{thm:concentration_states} (ii) and
	\begin{align*}
		\eta_{S,n,t} &= 2\nr \exp\bigg\{ - \frac{(t\eps_{S,n} - 2 \bar{s}_*)^2}{2t(\bar{s}_*)^2} \bigg\},~\bar{s}_* = \frac{6\sqrt{\nr}c_x \alphast}{\lambda_1(\barMst) - 4\sqrt{\Deltar \log n} },\\
		\eta_{S,n} &= 2(1+r_0) n^{-\frac15} + 2n^{-\frac18},~r_0 = \nr/n.
	\end{align*}
\end{theorem}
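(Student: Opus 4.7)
The plan is to split the target quantity via the triangle inequality
\[
\|\Smtcg(t)-\xstn\| \le \|\Smtcg(t)-\xgn\| + \|\xgn-\xstn\|
\]
and treat the two terms separately. The second term is exactly what \Cref{thm:concentration_states} controls: applying its part (i) or (ii) (depending on which part of \Cref{asmp:main} is in force) gives $\|\xgn-\xstn\|\le \eps_{x,n}\|\zs\|$ with probability at least $1-\eta_{x,n}$. This supplies the $\eps_{x,n}\|\zs\|$ contribution in the final bound and the $\eta_{S,n}$ contribution (which equals $\eta_{x,n}$) to the failure probability.

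For the first term I would use the Poisson-equation martingale decomposition. Set $Y^{\mtcg}(i):=\Xmtcg(i)-\xgn$ and, writing $\mtcf_i$ for the natural filtration conditional on $\mtcg$, define the martingale differences
\[
M(i+1):= Y^{\mtcg}(i+1)-\EEg[Y^{\mtcg}(i+1)\mid \mtcf_i]= Y^{\mtcg}(i+1)-\barQg Y^{\mtcg}(i).
\]
Rearranging yields $(I-\barQg)Y^{\mtcg}(i) = Y^{\mtcg}(i)-Y^{\mtcg}(i+1)+M(i+1)$; telescoping over $i=0,\dots,t-1$ and multiplying by $(I-\barQg)^{-1}$ produces
\[
\Smtcg(t)-\xgn = \tfrac{1}{t}(I-\barQg)^{-1}\Bigl[Y^{\mtcg}(0)-Y^{\mtcg}(t)+\sum_{i=1}^{t}M(i)\Bigr].
\]
Since every gossip update is a convex combination, $\|\Xmtcg(i)\|_\infty\le c_x$, hence $\|Y^{\mtcg}(i)\|_\infty \le 2c_x$ and $\|M(i)\|\le 4c_x\sqrt{\nr}$.

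Next I would quantify the operator norm $\|(I-\barQg)^{-1}\|=2\alphag/\lambda_1(\barMg)$ on a high-probability graph event. A Chernoff bound for the binomial sum $\alphag$ gives $\alphag\le (1+o(1))\alphast$ w.h.p., while the matrix Bernstein argument used in the proof of \Cref{thm:concentration_states}, together with Weyl's inequality, yields $\lambda_1(\barMg)\ge \deltars/2$ w.h.p.\ under \Cref{asmp:main}(i.1), and $\lambda_1(\barMg)\ge \lambda_1(\barMst)-4\sqrt{\Deltar\log n}$ w.h.p.\ under (i.2). This gives a deterministic bound $\|(I-\barQg)^{-1}\|\le c^{\mtcg}_*$ on the good event, tracking (up to the numerical constants $12$ and $6$ in the theorem statement) the quantity $\bar{s}_*/(c_x\sqrt{\nr})$.

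On this good event, I would bound each coordinate $k\in\{1,\dots,\nr\}$ of $\Smtcg(t)-\xgn$ by applying the Azuma--Hoeffding inequality to the scalar martingale $\{v_k^{\tp}\sum_{i=1}^{s}M(i)\}_{s\ge 1}$, where $v_k^{\tp}$ is the $k$-th row of $(I-\barQg)^{-1}$, so $\|v_k\|\le c^{\mtcg}_*$. Each martingale increment satisfies $|v_k^{\tp}M(i)|\le \|v_k\|\|M(i)\|\le \bar{s}_*$, while the deterministic boundary terms $v_k^{\tp}(Y^{\mtcg}(0)-Y^{\mtcg}(t))/t$ are bounded by $2\bar{s}_*/t$, which accounts for the shift $2\bar{s}_*$ inside the exponential (and forces the hypothesis $t>2\bar{s}_*/\eps_{S,n}$ so that the deviation remains positive). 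Azuma--Hoeffding then gives
\[
\PP\bigl\{|[\Smtcg(t)-\xgn]_k|\ge \eps_{S,n}\bigr\}\le 2\exp\Bigl(-\tfrac{(t\eps_{S,n}-2\bar{s}_*)^2}{2t\bar{s}_*^2}\Bigr),
\]
and a union bound over the $\nr$ coordinates, combined with $\|\Smtcg(t)-\xgn\|\le\sqrt{\nr}\max_k|[\cdot]_k|$, produces the $\sqrt{\nr}\eps_{S,n}$ term and the $\eta_{S,n,t}$ part of the failure probability. Intersecting with the graph-concentration event from step 1 closes the proof. The main obstacle is the interaction between the two sources of randomness: the Azuma step is valid only conditional on $\mtcg$, but $(I-\barQg)^{-1}$ is itself a random object, so one must carry out the martingale argument on a good graph event carved out using the same concentration tools that prove \Cref{thm:concentration_states}, and carefully absorb the graph-failure probability into $\eta_{S,n}$.
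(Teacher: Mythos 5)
Your proposal follows the same skeleton as the paper's proof: the triangle inequality $\|\Smtcg(t)-\xstn\|\le\|\Smtcg(t)-\xgn\|+\|\xgn-\xstn\|$, an appeal to \Cref{thm:concentration_states} for the second term, a Markov-chain concentration bound conditional on $\mtcg$ for the first term, and the law of total probability to absorb the bad-graph event into $\eta_{S,n}$. The genuine difference is in the middle step. The paper invokes a ready-made Hoeffding-type inequality for Markov chains (\Cref{lem:concentration_St_original}, quoted from~\cite{xing2023community}) applied coordinatewise to $f_i(x)=x_i$, so all it has to do is bound the sup norm of the Poisson-equation solution, $\|g_i^{\mtcg,n}\|_s\le 2\sqrt{\nr}c_x/(1-\rho(\barQg))$, and then replace $1-\rho(\barQg)$ by its high-probability lower bound from \Cref{lem:bound_alpha_rhoQ}. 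You instead unpack that black box: the telescoping identity $\Smtcg(t)-\xgn=\tfrac1t(I-\barQg)^{-1}\bigl[Y^{\mtcg}(0)-Y^{\mtcg}(t)+\sum_{i}M(i)\bigr]$ together with Azuma--Hoeffding is essentially a self-contained proof of that lemma, which buys transparency at the cost of re-deriving known material; both routes must (and do) condition on the same good graph event. The one piece of bookkeeping to fix is the increment bound: with $\|M(i)\|\le 4c_x\sqrt{\nr}$ and $\|v_k\|\le\|(I-\barQg)^{-1}\|=2\alphag/\lambda_1(\barMg)\le 6\alphast/\deltars$ on the good event of \Cref{lem:bound_alpha_rhoQ}~(i), you get $|v_k^{\tp}M(i)|\le 24\sqrt{\nr}c_x\alphast/\deltars=2\bar{s}_*$ rather than $\bar{s}_*$ (and likewise in case~(ii)), so Azuma yields $2\nr\exp\{-(t\eps_{S,n}-2\bar{s}_*)^2/(8t\bar{s}_*^2)\}$, i.e.\ the stated $\eta_{S,n,t}$ only up to a factor of $4$ in the exponent. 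This does not affect the asymptotic content of the theorem, but to recover the exact constants you would need either to redefine $\bar{s}_*$ accordingly or to use the cited lemma as the paper does.
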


\begin{proof}
	See \Cref{appen:thm:concentration_states_time}.
\end{proof}

\begin{remark}\label{rmk:concentration_state_time}
	The theorem provides high-probability bounds for the distance between time-averaged opinions and expected final opinions over the expected graph. The concentration depends on both network size and time. 
    The error $\eps_{S,n}$ controls the concentration of $\Smtcg(t)$ around $\xgn$. Set $\eps_{S,n} = o(1)$ as $n\to\infty$, and then entry-wise concentration follows from $\sqrt{\nr} \eps_{S,n} + \eps_{x,n}\|\zs\| = o(\sqrt{n})$, under \Cref{asmp:main}~(iii).
    Note that $\eps_{x,n}$ and $\eta_{S,n}$ depend on the network size $n$, and do not vanish for fixed $n$, even if $t\to\infty$. This captures the effect of the network size on concentration. 
    When $n$ is large, the concentration probability depends mostly on $\eta_{S,n,t}$ and time $t$.
    $\hfill\square$
\end{remark}

We conclude this section by connecting the main results to community detection for dynamical processes~\cite{schaub2020blind,xing2023community}. In~\cite{xing2023community} we demonstrate how to recover community structure based on time-averaged opinions, for a gossip model over deterministic weighted graphs. \Cref{thm_profileofxg,thm:concentration_states_time} guarantee that such a method can still work for the gossip model over SBMs, and the communities can be recovered  w.h.p. When the influence of stubborn agents is small, it is possible to derive guarantees for community detection based on transient opinions, following the concentration analysis developed in this paper and~\cite{xing2022transient}.

\section{Simulation}
\label{sec:simul}

In this section, we present simulation to illustrate theoretical findings. First, we compare concentration bounds for expected final opinions provided by \Cref{thm:concentration_states} with numerical experiments. Then, we apply main results to a gossip model over an SBM and validate the conclusions through simulation.

We examine how close the error $\eps_{x,n}$, given in \Cref{thm:concentration_states}~(i), is to the actual value $\eps_{n}^* \Let \|(I - \barQg)^{-1} \barRg - (I - \barQst)^{-1}\barRst\|$, where $[\barQg~\barRg]$ and $[\barQst~\barRst]$~are the expected interaction matrices over a random graph $\mtcg$ and over its expected graph, respectively (see Section~\ref{subsec:gossipoverrandom}). 
We consider a gossip model over $\trgs(\nr,\ns,\Psir,\Psis)$, with network size $n= \nr+\ns$, $\ns=c_{\tx{s}} n$, $\psir_{ii} = 0$, and $\psir_{ij} = \psis_{k,l-\nr} \equiv \psi$ for all $i,j,k\in \mtcvr$ and $l\in\mtcvs$ with $i\not=j$, where $c_{\tx{s}} \in (0,1/2)$ is a constant. 
That is, the random graph model is similar to an Erd{\H{o}}s--R{\'e}nyi model, except that no edges exist between stubborn agents. In the following example, we calculate the bound $\eps_{x,n}$ given in \Cref{thm:concentration_states}~(i) explicitly.
\begin{example}\label{exmp_simul}
    Consider $\trgs(\nr,\ns,\Psir,\Psis)$, with network size $n= \nr+\ns$, $\ns=c_{\tx{s}} n$ with $c_{\tx{s}} \in (0,1/2)$, $\psir_{ii} = 0$, and $\psir_{ij} = \psis_{k,l-\nr} \equiv \psi$ with $i,j,k\in \mtcvr$, $l\in\mtcvs$ and $i\not=j$.
    We calculate the explicit expression of the error bound $\eps_{x,n}$ given in \Cref{thm:concentration_states}~(i). First, we calculate the following quantities: 
    \begin{align*}
        \Deltar &= \max\limits_{i \in \mtcvr}\{\EE\{\di\}\} =  \max\limits_{i \in \mtcvr} \Big\{ \textstyle\sum\limits_{j\in \mtcv}\EE \{a_{ij}\} \Big\} = (n-1)\psi,\\
        \Deltars &= \max\limits_{i \in \mtcvr}\{\EE\{\dis\}\} =   \max\limits_{i \in \mtcvr} \Big\{ \textstyle\sum\limits_{j\in \mtcvs}\EE \{a_{ij}\} \Big\} = \ns \psi = c_{\tx{s}} n\psi,\\
        \Deltasr &= \max\limits_{i \in \mtcvs}\{\EE\{\dir\}\} =   \max\limits_{i \in \mtcvs} \Big\{ \textstyle\sum\limits_{j\in \mtcvr}\EE \{a_{ij}\} \Big\} = \nr \psi = (1-c_{\tx{s}} )n \psi,\\
        \deltars &= \min\limits_{i \in \mtcvr}\{\EE\{\dis\}\} = \ns\psi = c_{\tx{s}} n\psi.
    \end{align*}
    Note that $\|\Psis\| \le \|\Psis\|_1 \vee \|\Psis\|_{\infty} = \Deltasr \vee \Deltars$, where $\|\cdot\|_1$ and $\|\cdot\|_{\infty}$ are the maximum absolute column sum and maximum absolute row sum norms, respectively.
    Therefore, from \Cref{thm:concentration_states}~(i),
    \begin{align*}
        \eps_{x,n} &= 4 \bigg( \frac{\sqrt{(\Deltars \vee \Deltasr) \log n}}{\deltars} + \frac{2 \sqrt{\Deltar \log n} \|\Psis\|}{\deltars^2}  \bigg)\\
        &= 4 \bigg( \frac{\sqrt{(1-c_{\tx{s}})n \psi \log n}}{c_{\tx{s}} n\psi} + \frac{2 \sqrt{(n-1)\psi \log n} \|\Psis\|}{(c_{\tx{s}}n\psi)^2}  \bigg)\\
        &\le 4 \bigg( \frac{\sqrt{(1-c_{\tx{s}} )n\psi \log n}}{c_{\tx{s}} n\psi} + \frac{2 \sqrt{n\psi \log n} (1-c_{\tx{s}} )n\psi}{(c_{\tx{s}}n\psi)^2}  \bigg) \\
        & = \frac{4[c_{\tx{s}}\sqrt{1-c_{\tx{s}}} + 2(1-c_{\tx{s}})]}{c_{\tx{s}}^2} \sqrt{\frac{\log n}{n\psi}} = O\bigg(\sqrt{\frac{\log n}{n\psi}}\bigg).
    \end{align*}
    This result shows that the error bound decreases with the link probability $\psi$.
\end{example}
In the numerical experiment, we set $c_{\tx{s}} = 0.1$ and $\psi = (\log n)^2/n$, run the gossip model with $n$ ranging from $10^2$ to $10^4$, and calculate $\eps_{n}^*$. The value has order $O(1/(\log n))$ as shown in \Cref{fig:bound}, whereas \Cref{thm:concentration_states}~(i) and \Cref{exmp_simul} indicate a bound $\eps_{x,n} = O(1/(\log n)^{1/2})$. 
As discussed in \Cref{rmk:stateconcentration}, it is possible to remove the logarithmic terms in $\eps_{x,n}$, resulting in a tighter bound of the same order as observed in the simulation.

\begin{figure}[tbp]
    \begin{minipage}{0.48\textwidth}
        \centering
        \includegraphics[scale=0.32]{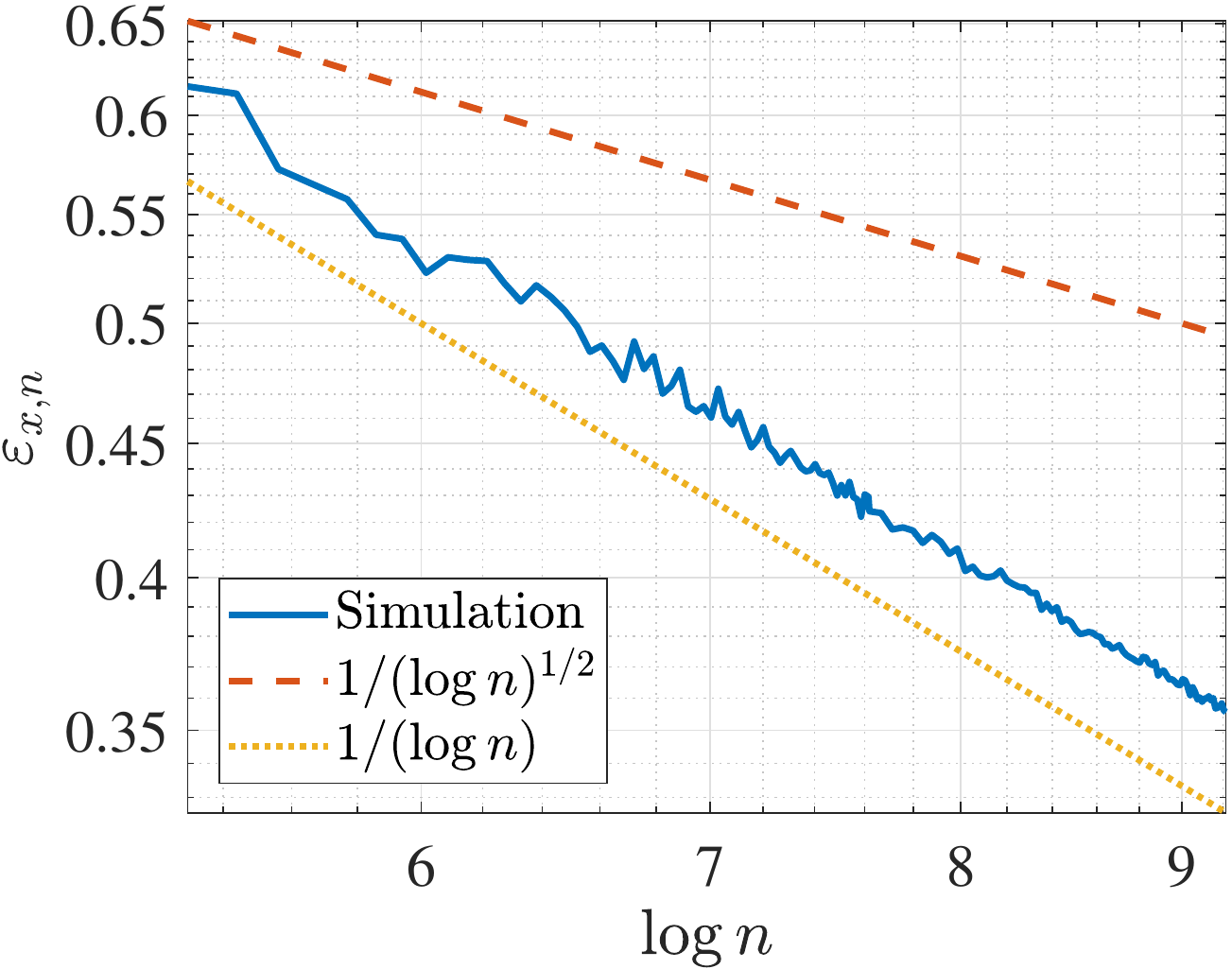}
        \caption{Comparison of the theoretical bound provided by \Cref{thm:concentration_states}~(i) with simulation. A log-log plot is given in the figure.}
        \label{fig:bound}
    \end{minipage}~~
    \begin{minipage}{0.48\textwidth}
        \centering
        \includegraphics[scale=0.3]{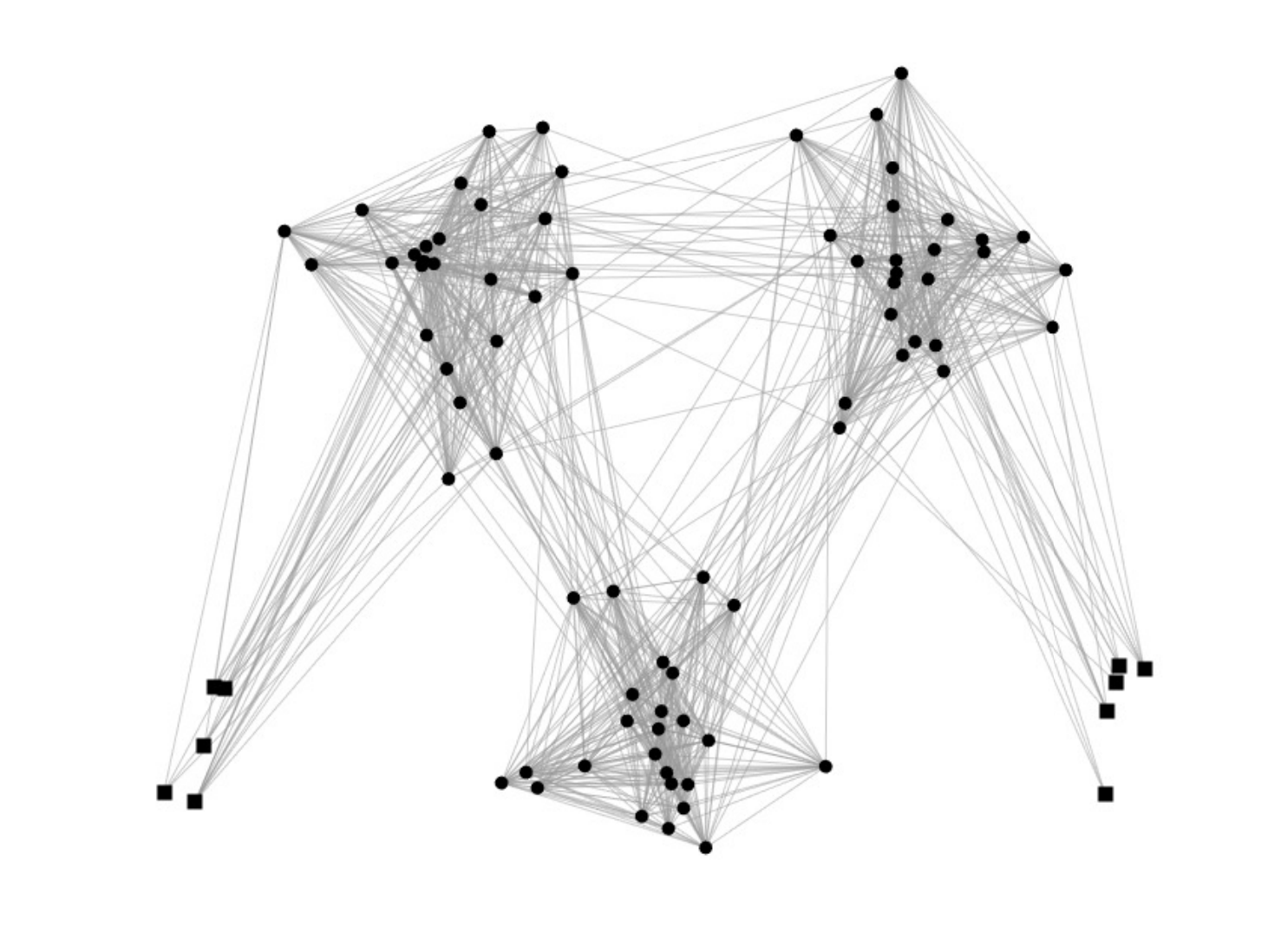}
        \caption{A sample of an SBM. In the graph, dots and squares represent regular and stubborn agents, respectively.}
        \label{fig:sbm}
    \end{minipage}
\end{figure}

\begin{figure*}[t]
	\centering
    \subfigure[\label{fig:phase_transition_large}Large influence ($\gamma = 3.5$).]{
	\begin{minipage}[b]{0.3\textwidth}
		\includegraphics[width=1\textwidth]{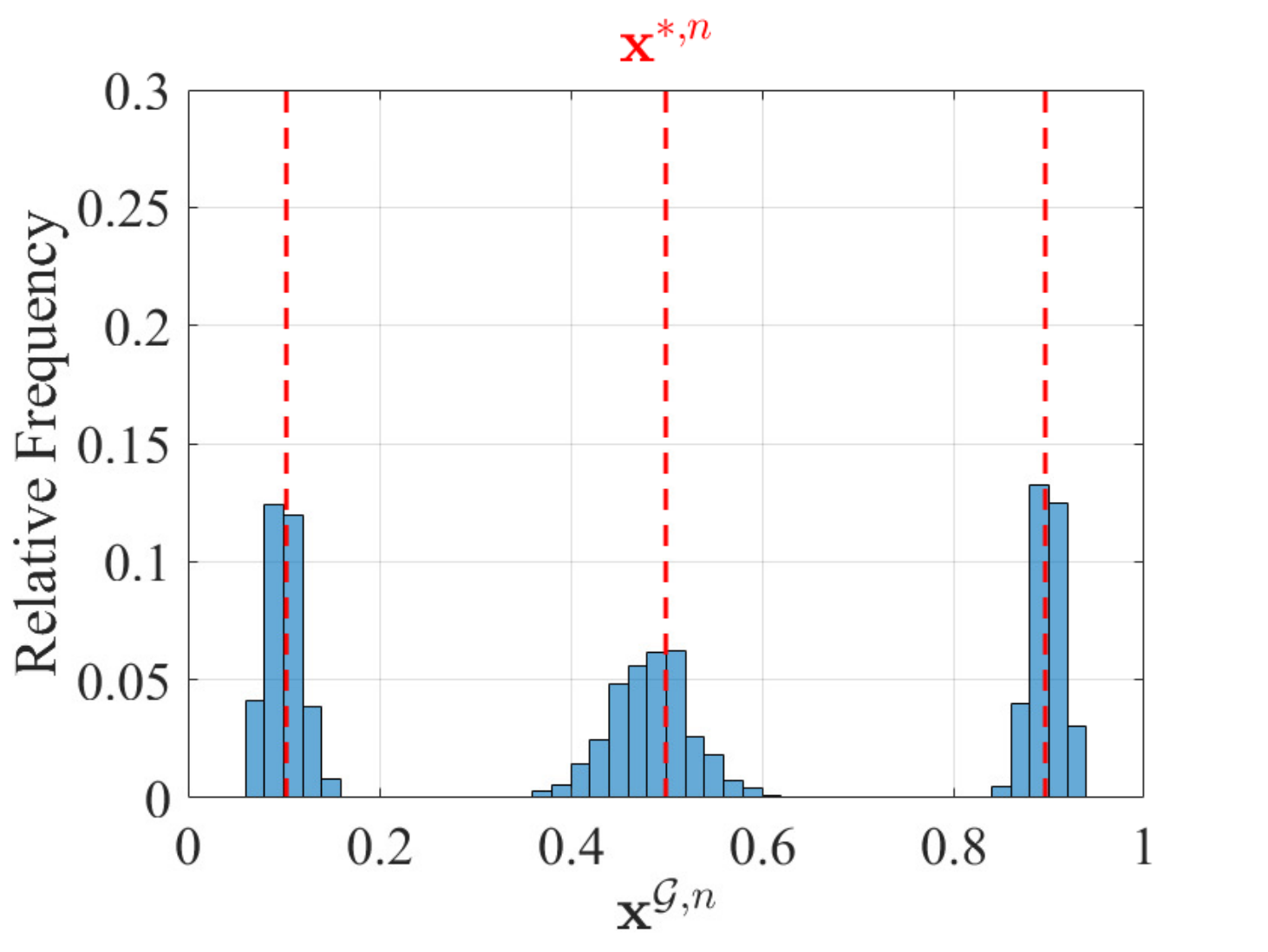} 
	\end{minipage}}
	\subfigure[\label{fig:phase_transition_moderate}Moderate influence ($\gamma = 2$).]{
	\begin{minipage}[b]{0.31\textwidth}
		\includegraphics[width=0.97\textwidth]{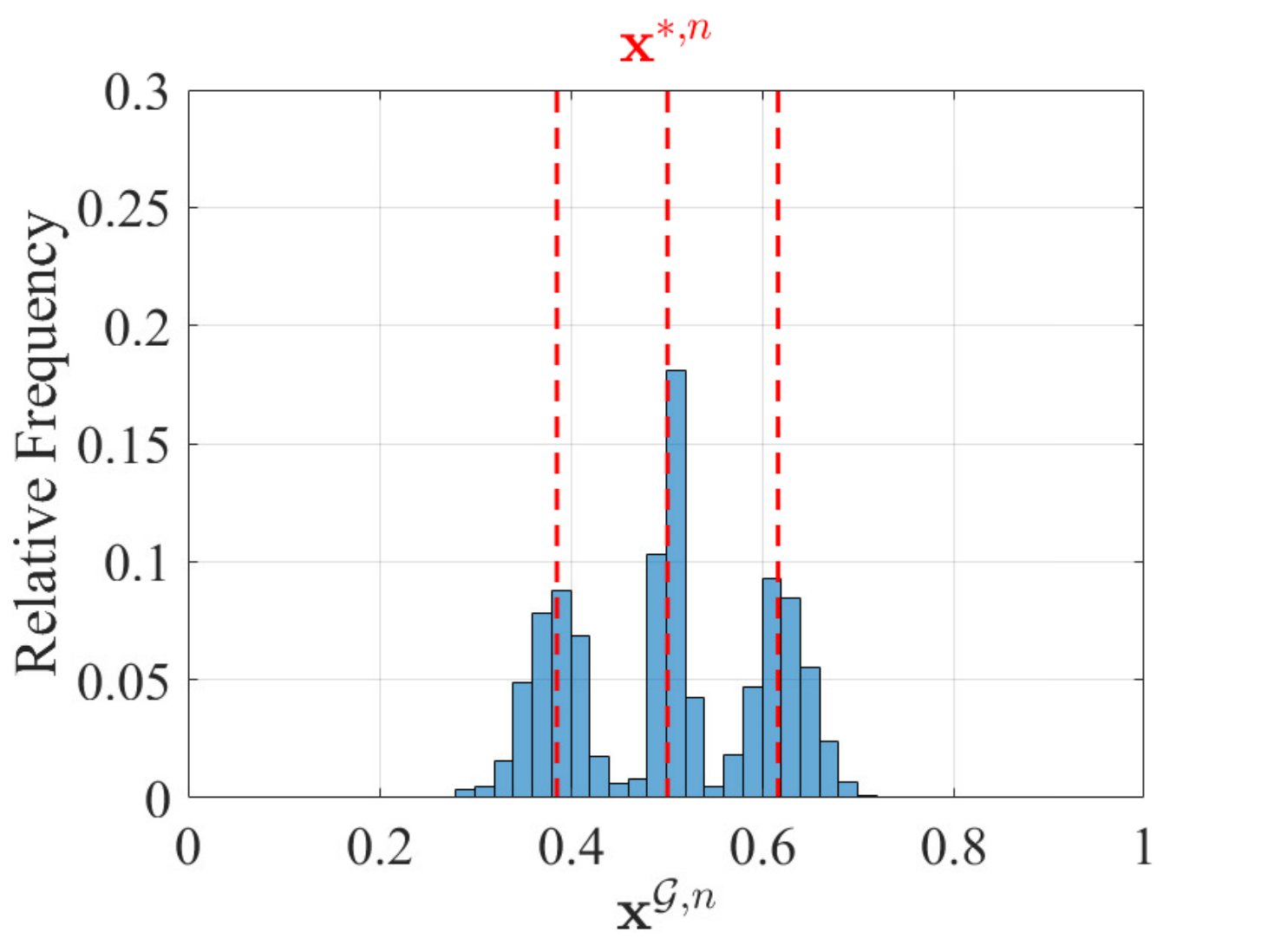} 
	\end{minipage}}
	\subfigure[\label{fig:phase_transition_small}Small influence ($\gamma = 1$).]{
	\begin{minipage}[b]{0.3\textwidth}
		\includegraphics[width=1\textwidth]{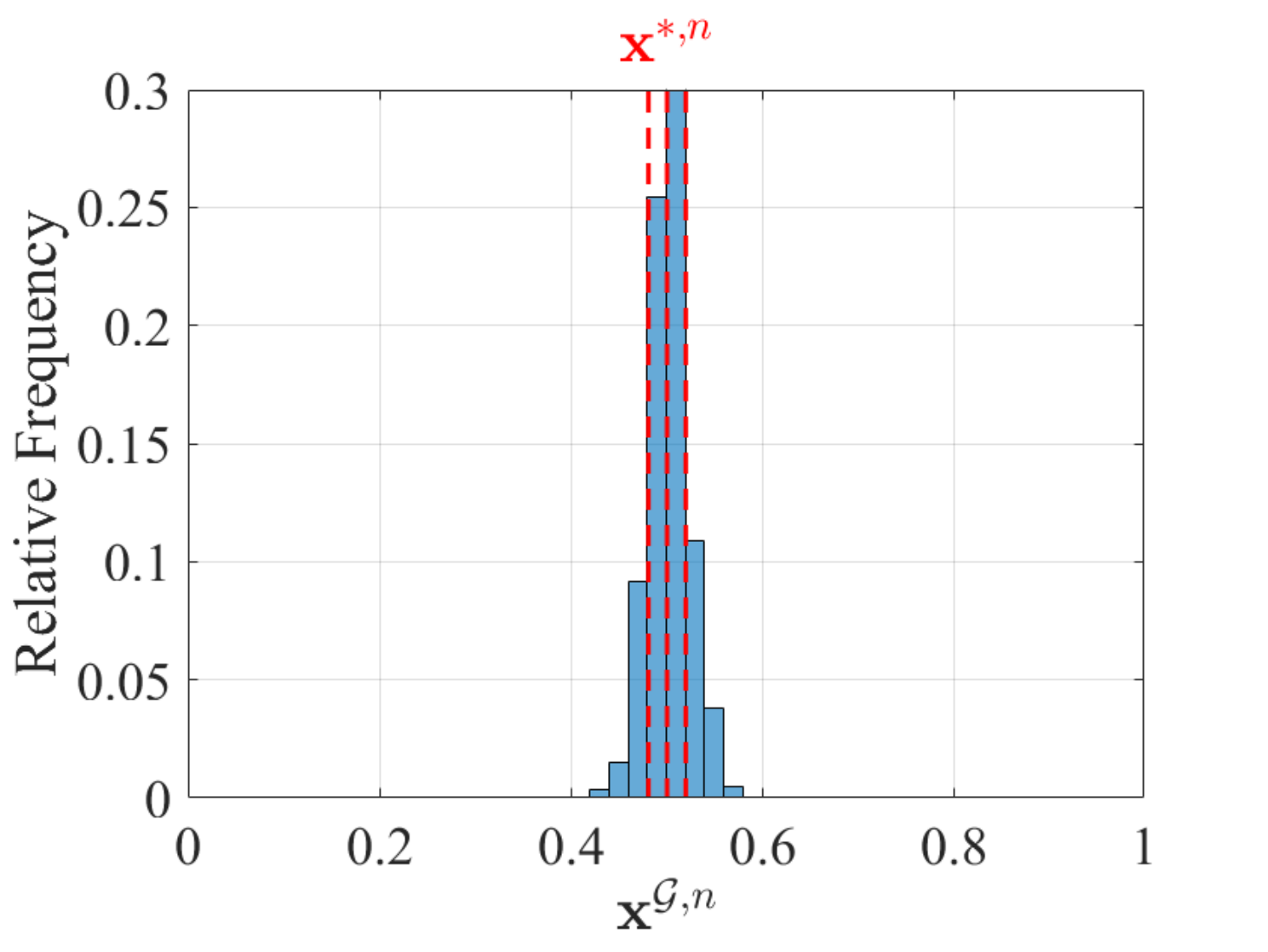} 
	\end{minipage}}
	\caption{\label{fig:phase_transition}The profile of the expected final opinions $\xgn$ under different stubborn influence. The dashed lines represent the three distinct values of $\xstn$ corresponding to the communities.}
\end{figure*}

\begin{figure*}[t]
	\centering 
	\subfigure[\label{fig:vr2_both}$c_{21}^{\tx{(s)}} = c_{22}^{\tx{(s)}} = 1$.]{
		\includegraphics[width=0.3\textwidth]{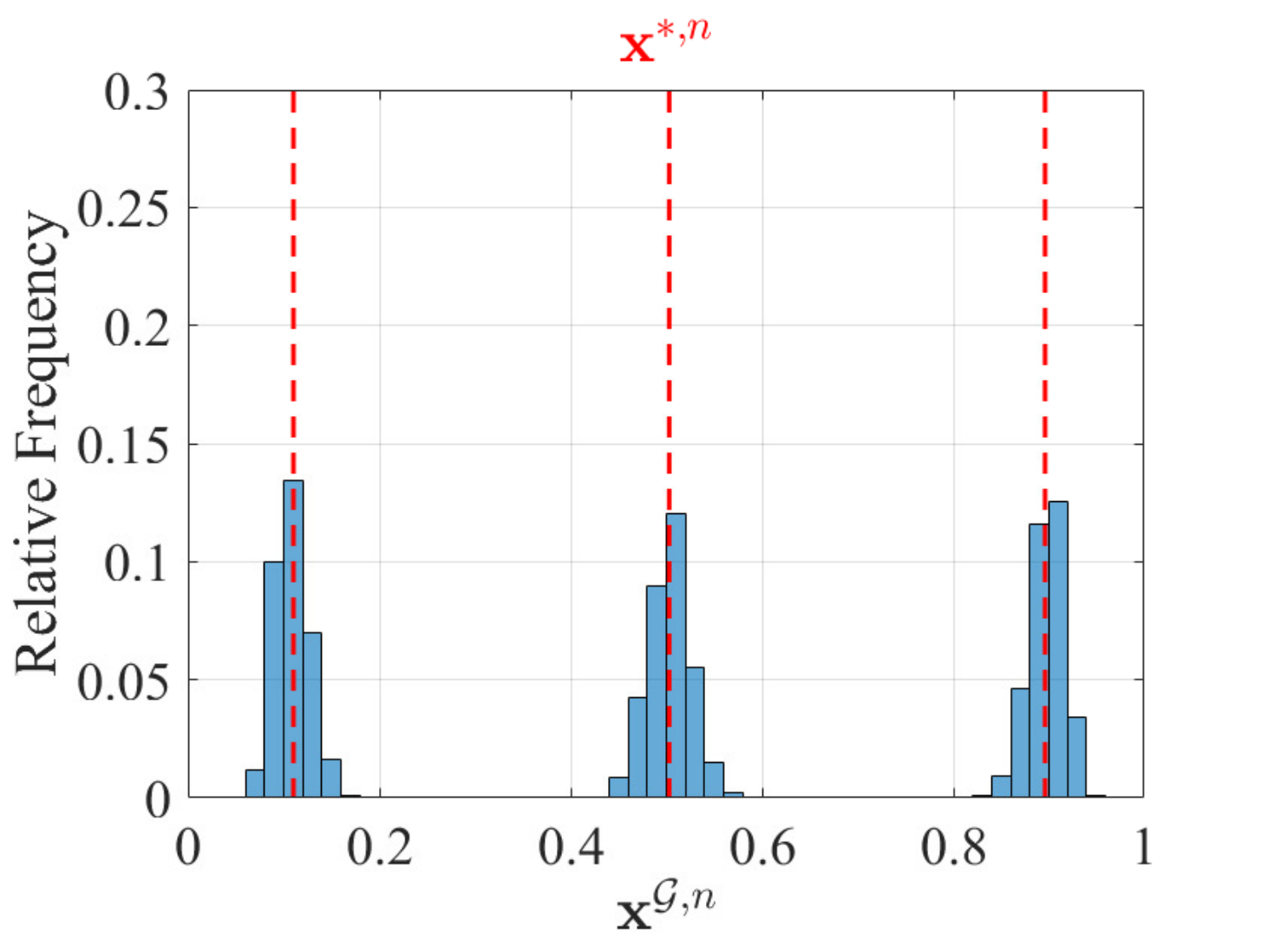} }
	\subfigure[\label{fig:vr2_one}$c_{21}^{\tx{(s)}} =1 $ and $c_{22}^{\tx{(s)}} = 0$.]{
		\includegraphics[width=0.3\textwidth]{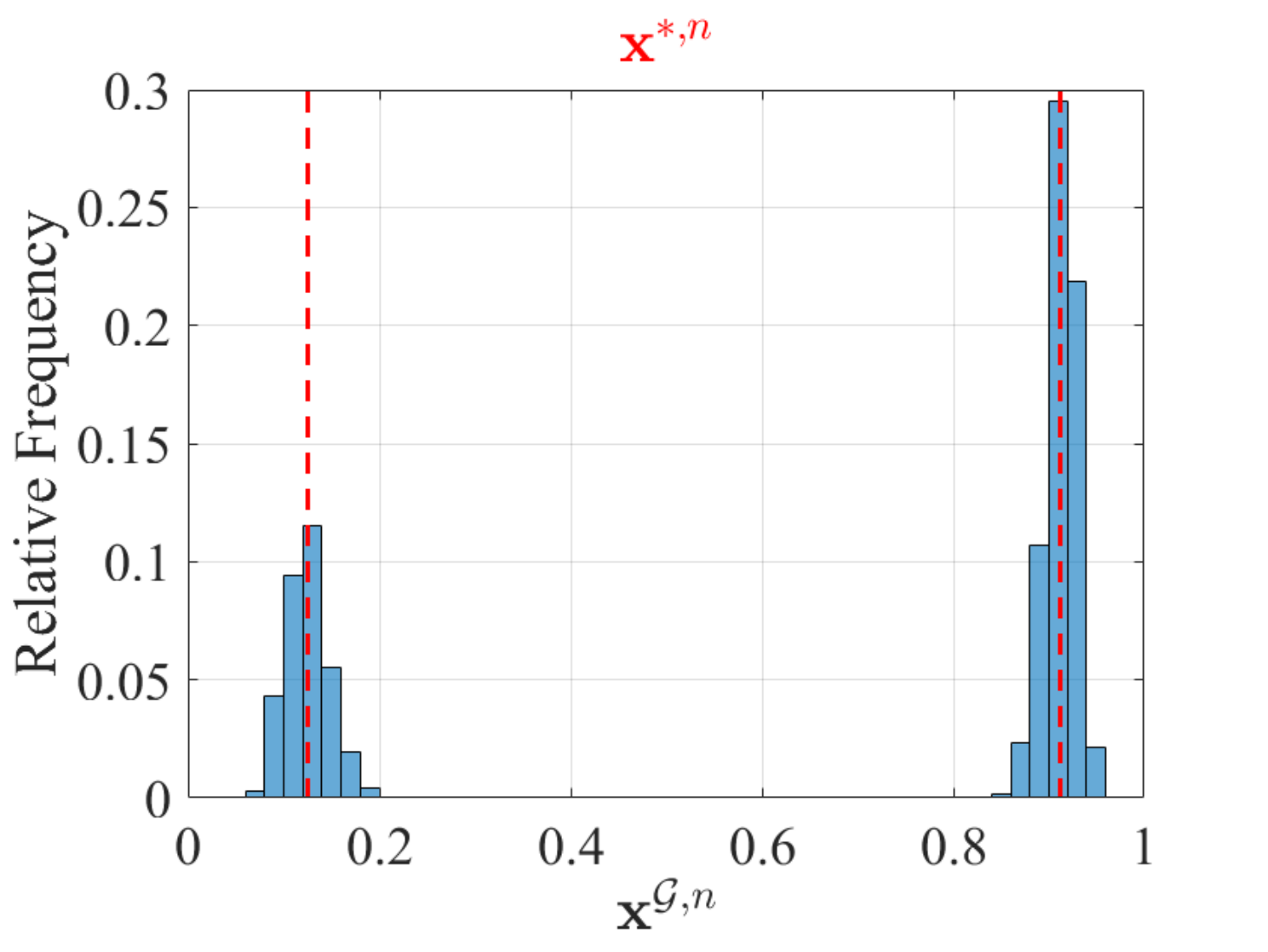} }
	\subfigure[\label{fig:vr2_none}$c_{21}^{\tx{(s)}} = c_{22}^{\tx{(s)}} = 0$.]{
		\includegraphics[width=0.3\textwidth]{xgn_large-eps-converted-to.pdf} }
	\caption{\label{fig:vr2} The profile of the expected final opinions $\xgn$ under different values of $c_{21}^{\tx{(s)}}$ and $c_{22}^{\tx{(s)}}$.}
\end{figure*}

Now we demonstrate the main results by studying the behavior of a gossip model over an SBM. We assume that there are three communities with regular agents $\mtcv_{\tx{r}k} = \{1+(k-1) n_{\tx{r}1}, \dots, k n_{\tx{r}1}\}$,  $1\le k \le 3$, and two communities with stubborn agents $\mtcv_{\tx{s}m} = \{3n_{\tx{r}1} + (m-1) n_{\tx{s}1} + 1, \dots, 3n_{\tx{r}1} + m n_{\tx{s}1}\}$, $m=1,2$. Thus, $|\mtcv_{\tx{r}k}|=n_{\tx{r}1}$, $1\le k\le3$, $|\mtcv_{\tx{s}m}|=n_{\tx{s}1}$, $m=1,2$, and $n = 3n_{\tx{r}1} + 2n_{\tx{s}1}$. 
The network model is $\trgs(3n_{\tx{r}1},2n_{\tx{s}1},\Psir,\Psis)$, where $\psir_{ii} = 0$ for all $i\in \mtcvr$, $\psir_{ij} = (\log n)^{\beta_1}/n \teL p_1$ for $i\not = j \in \mtcv_{\tx{r}k}$ and $1\le k\le 3$, $\psir_{ij} = (\log n)^{\beta_2}/n \teL p_2$ for $i \in \mtcv_{\tx{r}k}$ and $j\in \mtcv_{\tx{r}l}$ with $1\le l\not= k\le 3$, and
\begin{small}
\begin{align*}
    \Psis =  \begin{bmatrix}
        p_3 \bfl_{n_{\tx{r}1}, n_{\tx{s}1}} & \bfo_{n_{\tx{r}1}, n_{\tx{s}1}} \\
        c_{21}^{\tx{(s)}} p_3 \bfl_{n_{\tx{r}1}, n_{\tx{s}1}} & c_{22}^{\tx{(s)}} p_3 \bfl_{n_{\tx{r}1}, n_{\tx{s}1}} \\
        \bfo_{n_{\tx{r}1}, n_{\tx{s}1}} & p_3 \bfl_{n_{\tx{r}1}, n_{\tx{s}1}} 
    \end{bmatrix}
\end{align*}
\end{small}
\hspace{-1mm}with $p_3 =  (\log n)^{\gamma}/n$, $c_{21}^{\tx{(s)}},c_{22}^{\tx{(s)}} \in \{0,1\}$, and $\bfl_{m,n} = \bfl_{m} \bfl_{n}^{\tp}$. That is, the link probability within the same community with regular agents is $p_1$, and the link probability for edges between regular agents in different communities is $p_2$. In addition, regular agents in $\mtcv_{\tx{r}1}$ (resp. $\mtcv_{\tx{r}3}$) have probability $p_3$ linking to stubborn agents in $\mtcv_{\tx{s}1}$ (resp. $\mtcv_{\tx{s}2}$). Agents in $\mtcv_{\tx{r}2}$ have positive probability linking to stubborn agents in $\mtcv_{\tx{s}1}$ (resp. $\mtcv_{\tx{s}2}$) if and only if $c_{21}^{\tx{(s)}} = 1$ (resp. $c_{22}^{\tx{(s)}}=1$). \Cref{fig:sbm} illustrates such an SBM with five communities.
In the experiment, we set $|\mtcv_{\tx{r}k}| = n_{\tx{r}1} = 600$, and $|\mtcv_{\tx{s}m}| = n_{\tx{s}1} = 100$, $1\le k\le 3$, $m=1,2$, so $n=2000$. To fix link probabilities $p_1$ and $p_2$, let $\beta_1  = 2$ and $\beta_2 = 1.1$. For the link probability $p_3$, we consider three cases $\gamma = 3.5,2,$ and $1$, corresponding to large, moderate, and small influence of stubborn agents, respectively. 

We first study the case where $c_{21}^{\tx{(s)}} = c_{22}^{\tx{(s)}} = 0$ (that is, the community $\mtcv_{\tx{r}2}$ does not have any edge connected to $\mtcv_{\tx{s}m}$, $m=1,2$). For each $\gamma$, a network is generated and then fixed. The opinions of stubborn agents in $\mtcv_{\tx{s}1}$ are generated independently and uniformly from $(0.9,1)$ and those in $\mtcv_{\tx{s}2}$ from $(0,0.1)$. The expected final opinions over the SBM and those over the expected graph are calculated according to~\eqref{eq:bfxg} and~\eqref{eq:bfx*}. 
Under the circumstances of interest, it can be proved that there exist $\chi_{k} \in \RR$, $1\le k \le 3$, such that $\xstn_i = \chi_{k}$ for all $i\in \mtcv_{\tx{r}k}$~\cite{xing2023community}. That is, regular agents in the same community have the same expected final opinion. Thus, \Cref{thm:concentration_states} ensures that expected final opinions $\xgn$ in the same community are close, which can be observed in \Cref{fig:phase_transition}.
In addition, large influence of stubborn agents fosters polarization, whereas small influence of stubborn agents results in expected opinions close to each other. These results are in line with theoretical findings given in \Cref{thm_profileofxg}. In the moderate influence case, the expected opinions concentrate around their expected counterparts.
\begin{figure*}[t]
	\centering 
	\subfigure[\label{fig:timeav1}Large influence ($\gamma = 3.5$).]{
	\begin{minipage}[b]{0.3\textwidth}
		\includegraphics[width=1\textwidth]{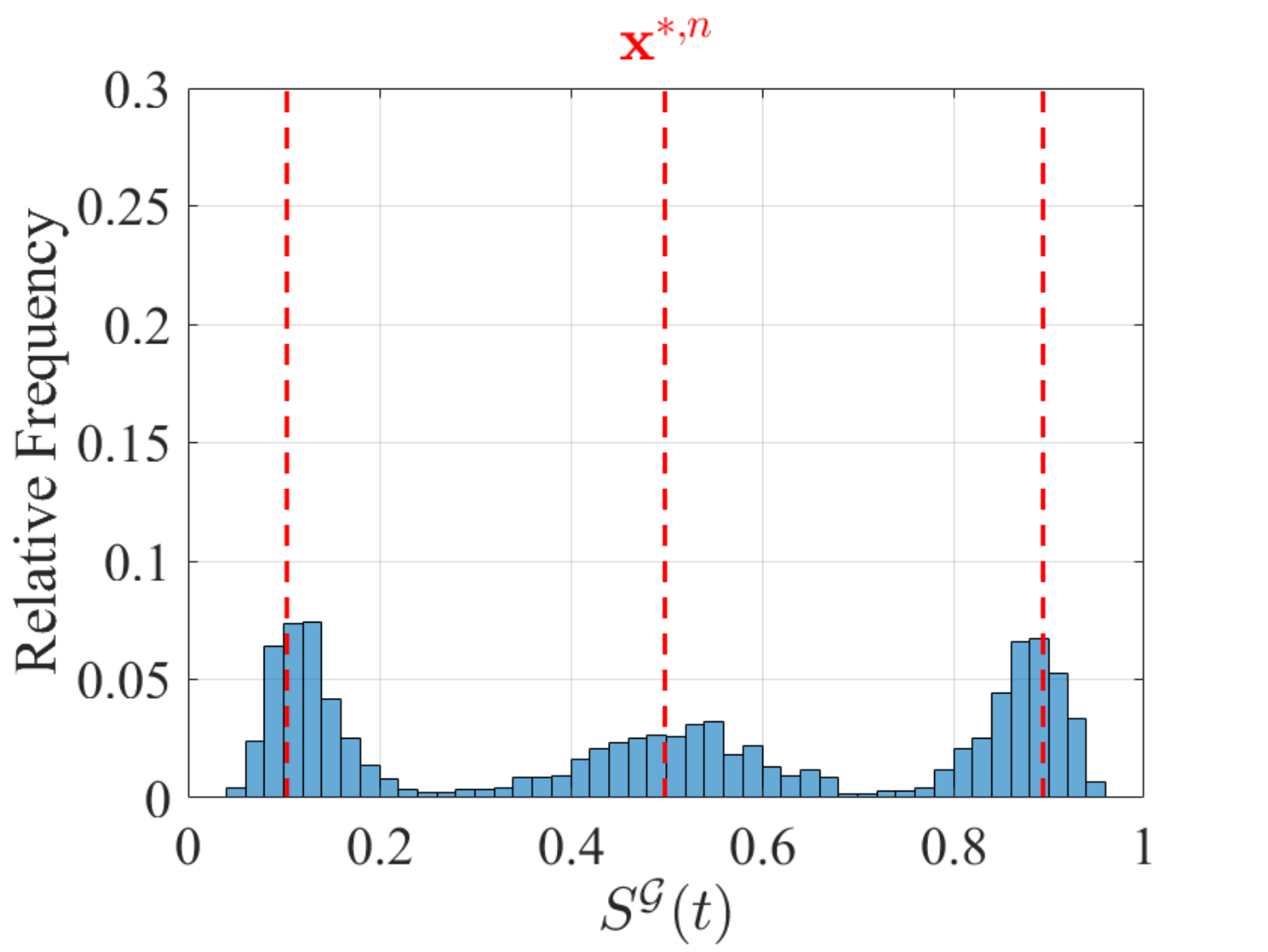} 
	\end{minipage}} 
	\subfigure[\label{fig:timeav2}Moderate influence ($\gamma = 2$).]{
	\begin{minipage}[b]{0.31\textwidth}
		\includegraphics[width=0.97\textwidth]{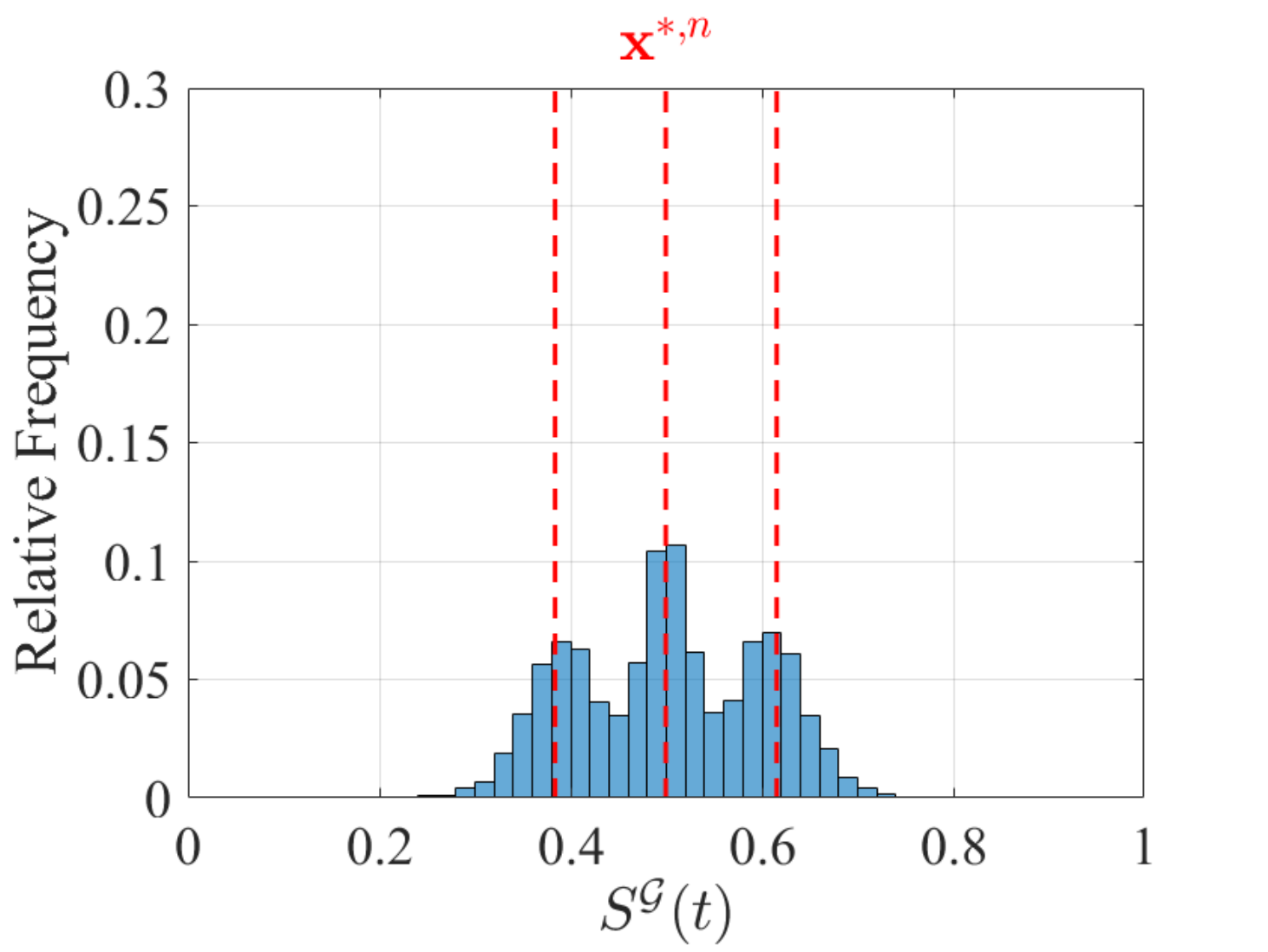} 
	\end{minipage}} 
	\subfigure[\label{fig:timeav3}Small influence ($\gamma = 1$).]{
	\begin{minipage}[b]{0.3\textwidth}
		\includegraphics[width=1\textwidth]{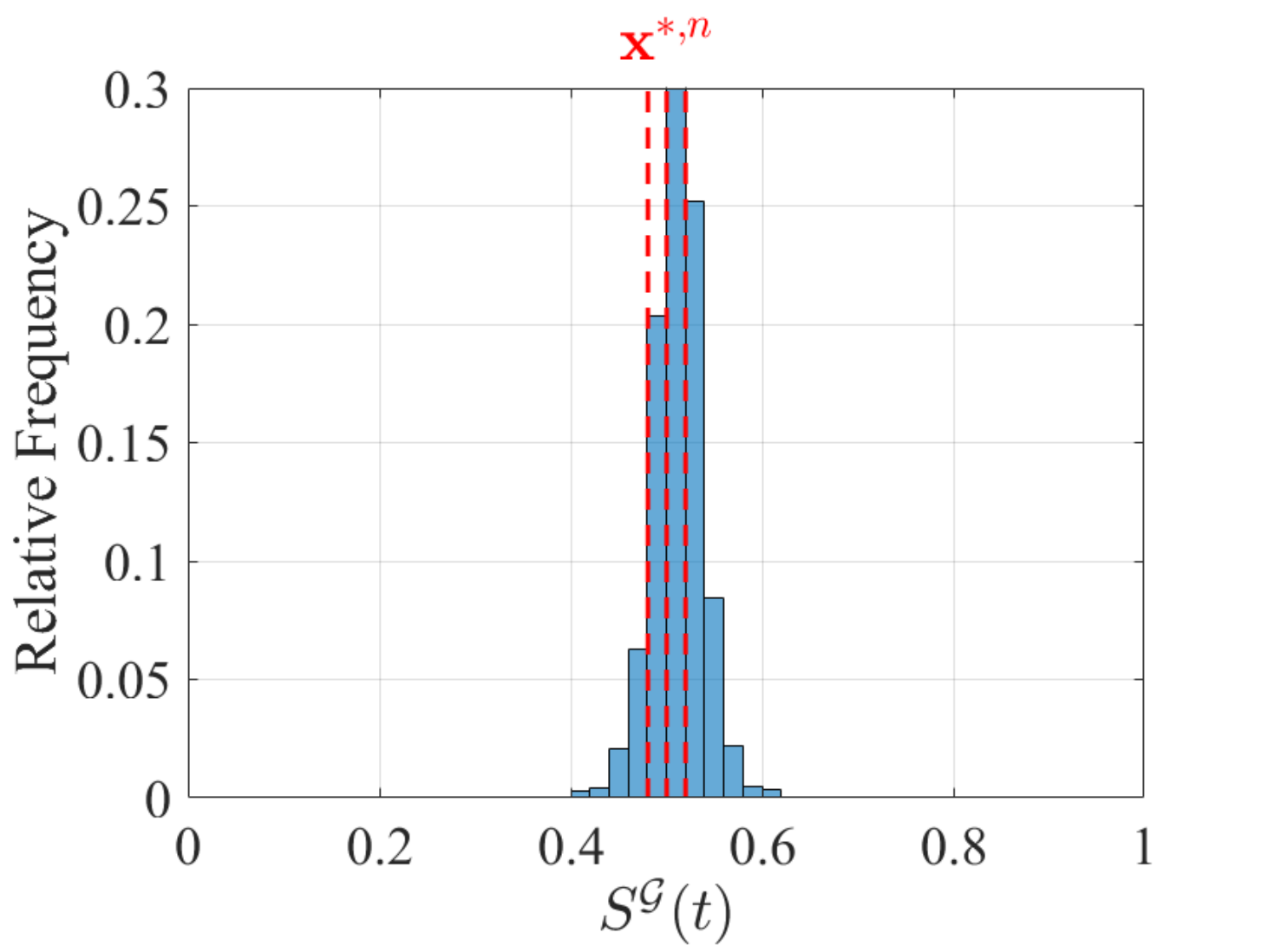} 
	\end{minipage}}
	\caption{\label{fig:timeav} The profile of the time-averaged opinions $\Smtcg(t)$ with $t=5\times 10^4$ under different stubborn influence. The dashed lines represent the three  values of $\xstn$ corresponding to the communities. }
\end{figure*}
Now we examine how edges between $\mtcv_{\tx{r}2}$ and stubborn agents influence $\xgn$. 
Consider three cases: (i) $c_{21}^{\tx{(s)}} = c_{22}^{\tx{(s)}} = 1$, (ii) $c_{21}^{\tx{(s)}} =1 $ and $c_{22}^{\tx{(s)}} = 0$, (iii) $c_{21}^{\tx{(s)}} = c_{22}^{\tx{(s)}} = 0$. 
In case~(i), agents in $\mtcv_{\tx{r}2}$ are connected to $\mtcv_{\tx{s}m}$ with positive probability, $m=1,2$. In case~(ii), they are only connected to $\mtcv_{\tx{s}1}$. In case~(iii), they are not connected to any stubborn agents. We set $\gamma = 3.5$ and generate $\xgn$ in the same way as earlier. \Cref{fig:vr2} shows that the agents in $\mtcv_{\tx{r}2}$ ends in a neutral place in case~(i). However, in case~(ii), agents in $\mtcv_{\tx{r}2}$ have opinions close to $\mtcv_{\tx{s}1}$, as $\mtcv_{\tx{r}2}$ has edges to $\mtcv_{\tx{s}1}$, but not $\mtcv_{\tx{s}2}$. In case~(iii), the expected final opinions of $\mtcv_{\tx{r}2}$ is similar to case~(i), resulting from that agents in $\mtcv_{\tx{r}2}$ have the same number of edges linking to both regular communities.
To illustrate the concentration of the time-averaged opinions $\Smtcg(t)$, we run the gossip model with $c_{21}^{\tx{(s)}} = c_{22}^{\tx{(s)}} = 0$. The initial opinions $X_i(0)$ are generated uniformly from $(0,1)$.  
\Cref{fig:timeav} presents the histogram of $\Smtcg(t)$ under three different values of $\gamma$ with $t = 5 \times 10^4$. We can see that the profile of $\Smtcg(t)$ is similar to $\xgn$ shown in \Cref{fig:phase_transition}, verifying \Cref{thm:concentration_states_time}.

\section{Conclusion}\label{sec:conclusions}
In this paper, we studied concentration of expected final opinions in the gossip model over random graphs, and showed how such concentration can help study the effect of network structure on expected final opinions. We also obtained concentration bounds for time-averaged opinions. 
Future work includes to investigate sharp concentration bounds for the gossip and other models, and to apply the results to community detection problems.

\appendix

\section{Auxiliary Concentration Results}\label{appen:auxiliary_concentration}
In this section, we present auxiliary concentration lemmas from which the main results given in the paper are obtained. These lemmas are consequences of the following standard conclusions in high-dimensional probability theory and matrix analysis.

\begin{lemma}[The Chernoff inequality, Theorems~4.4 and~4.5 of~\cite{mitzenmacher2017probability}] 
    Suppose that $X_1$, $\dots$, $X_n$ are independent Bernoulli random variables such that $\PP\{X_i = 1\} = p_i = 1 - \PP\{X_i=0\}$. Let $X \Let \sum_{i=1}^n X_i$ and $\mu \Let \EE\{X\} = \sum_{i=1}^n p_i$. Then for $0< \delta < 1$,
    \begin{align}\label{eq:append_chernoff1/3}
        \PP\{X \ge (1+\delta)\mu\} \le e^{-\mu\delta^2/3},~
        \PP\{X \le (1-\delta)\mu\} \le e^{-\mu\delta^2/2}.
    \end{align}
\end{lemma}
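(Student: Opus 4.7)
The plan is to prove both tail bounds by the standard Chernoff (exponential moment) method, then close out with two elementary real-variable inequalities on $(0,1)$.

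For the upper tail, I would fix $t > 0$ and apply Markov's inequality to $e^{tX}$, giving $\PP\{X \ge (1+\delta)\mu\} \le e^{-t(1+\delta)\mu}\EE\{e^{tX}\}$. Since the $X_i$ are independent Bernoulli, $\EE\{e^{tX_i}\} = 1 + p_i(e^t - 1) \le \exp\{p_i(e^t - 1)\}$ by $1+x \le e^x$, so $\EE\{e^{tX}\} \le \exp\{\mu(e^t - 1)\}$. Choosing the optimal $t = \log(1+\delta) > 0$ yields
\begin{equation*}
\PP\{X \ge (1+\delta)\mu\} \le \Bigl[\frac{e^{\delta}}{(1+\delta)^{1+\delta}}\Bigr]^{\mu}.
\end{equation*}
For the lower tail I would run the symmetric argument with $t > 0$ applied to $e^{-tX}$: $\PP\{X \le (1-\delta)\mu\} \le e^{t(1-\delta)\mu}\EE\{e^{-tX}\} \le \exp\{t(1-\delta)\mu + \mu(e^{-t}-1)\}$, optimized at $t = -\log(1-\delta) > 0$ (here $0 < \delta < 1$ is used to make $t$ well-defined), giving
\begin{equation*}
\PP\{X \le (1-\delta)\mu\} \le \Bigl[\frac{e^{-\delta}}{(1-\delta)^{1-\delta}}\Bigr]^{\mu}.
\end{equation*}

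It then remains to show the two analytic inequalities on $(0,1)$: (a) $(1+\delta)\log(1+\delta) - \delta \ge \delta^2/3$, which gives the $e^{-\mu\delta^2/3}$ upper-tail bound; and (b) $(1-\delta)\log(1-\delta) + \delta \ge \delta^2/2$, which gives the $e^{-\mu\delta^2/2}$ lower-tail bound. For (a) I would define $f(\delta) := (1+\delta)\log(1+\delta) - \delta - \delta^2/3$, check $f(0) = f'(0) = 0$, and show $f''(\delta) = 1/(1+\delta) - 2/3 \ge 0$ for $\delta \in (0,1)$, so $f \ge 0$ on $(0,1)$. For (b) I would set $g(\delta) := (1-\delta)\log(1-\delta) + \delta - \delta^2/2$; a slicker route is the Taylor expansion $-(1-\delta)\log(1-\delta) = \delta - \sum_{k \ge 2}\frac{\delta^k}{k(k-1)}$, from which $g(\delta) = \sum_{k \ge 2}\bigl[\frac{\delta^k}{k(k-1)}\bigr] - \delta^2/2$ becomes a pointwise comparison of series, and the $k=2$ term already equals $\delta^2/2$, so $g(\delta) \ge 0$.

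The conceptual steps (Markov on the MGF, independence, optimization in $t$) are routine; the only real obstacle is the second analytic inequality (b), because the naive second-derivative trick is a bit awkward when one insists on the constant $1/2$ (as opposed to $1/3$). I would handle it via the Taylor-series comparison sketched above, which makes the constant $1/2$ transparent and avoids any need to be clever about the sign of higher derivatives. After these two inequalities are in hand, substituting them into the two bracketed expressions immediately produces $e^{-\mu\delta^2/3}$ and $e^{-\mu\delta^2/2}$, respectively, completing the proof.
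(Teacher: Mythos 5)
The paper does not prove this lemma at all: it is imported verbatim from Mitzenmacher--Upfal (Theorems~4.4 and~4.5), so the only meaningful comparison is with the standard textbook argument, which is exactly the exponential-moment route you take. Your overall structure (Markov on $e^{\pm tX}$, the bound $1+p_i(e^{t}-1)\le e^{p_i(e^t-1)}$, optimization at $t=\log(1+\delta)$ resp.\ $t=-\log(1-\delta)$) is correct, and your lower-tail analytic step via the series $(1-\delta)\log(1-\delta)+\delta=\sum_{k\ge 2}\delta^k/(k(k-1))\ge \delta^2/2$ is clean and complete.

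There is, however, a concrete error in the upper-tail analytic step (a). You claim $f''(\delta)=1/(1+\delta)-2/3\ge 0$ for all $\delta\in(0,1)$, but this fails for $\delta>1/2$ (e.g.\ $f''(1)=1/2-2/3<0$), so the argument ``$f(0)=f'(0)=0$ and $f$ convex, hence $f\ge 0$'' does not go through as written. The inequality itself is still true, but the proof needs the extra care that the textbook version supplies: $f''$ changes sign at $\delta=1/2$, so $f'(\delta)=\log(1+\delta)-2\delta/3$ increases on $(0,1/2)$ and decreases on $(1/2,1)$; since $f'(0)=0$ and $f'(1)=\log 2-2/3>0$ (a genuinely narrow margin, $0.6931$ versus $0.6667$, which is why the endpoint must actually be checked and why the constant $1/3$ cannot be improved much by this method), $f'$ is nonnegative on all of $(0,1)$, hence $f$ is nondecreasing and $f\ge f(0)=0$. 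With that repair --- or, if you prefer a uniform treatment, a series comparison analogous to your step (b) --- the proof is complete. Note also that this is the step where the restriction $\delta<1$ (or $\delta\le 1$) is genuinely used for the upper tail, whereas in your write-up the only role you assign to $0<\delta<1$ is making $-\log(1-\delta)$ well defined in the lower tail.
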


\begin{lemma}[The matrix Bernstein inequality, Theorem~5.4.1 and Exercise~5.4.15 of~\cite{vershynin2018high}] \label{lem:bernstein}
    Suppose that $Y_1$, $\dots$, $Y_N \in \RR^{n\times n}$ are independent zero-mean random matrices, and are such that $\|Y_i\| \le K$ a.s., $1\le i\le N$. Then for $a \ge 0$, it holds that
    \begin{align}\label{eq:append_bernstein}
        \PP\Big\{\Big\| \sum\nolimits_{i=1}^N Y_i \Big\| \ge a\Big\} \le 2 n \exp \Big\{\frac{-a^2/2}{\sigma^2 + Ka/3}\Big\},
    \end{align}
    where $\sigma^2 = \|\sum_{i=1}^N \EE\{Y_i^2\} \|$. If $Y_1$, $\dots$, $Y_N \in \RR^{m\times n}$ are independent, mean zero, and such that $\|Y_i\| \le K$ a.s. Then for all $a \ge 0$, it holds that
    \begin{align}\label{eq:append_rectan_bernstein}
        \PP\Big\{\Big\| \sum\nolimits_{i=1}^N Y_i \Big\| \ge a\Big\} \le 2 (m+n) \exp \Big\{\frac{-a^2/2}{\sigma^2 + Ka/3}\Big\},
    \end{align}
    where $\sigma^2 = \max\{\| \sum_{i=1}^N \EE\{Y_i^\tp Y_i\} \|, \| \sum_{i=1}^N \EE\{Y_i Y_i^\tp\}\| \}$.
\end{lemma}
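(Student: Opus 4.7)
The plan is to follow the Ahlswede--Winter--Tropp approach, reducing the symmetric case to a matrix Laplace transform bound and then handling the rectangular case by Hermitian dilation. For the symmetric $n\times n$ case (which is how $\sigma^2=\|\sum_i\EE Y_i^2\|$ should be interpreted), I would start from the matrix Chernoff step: since $e^{\theta\lambda_{\max}(A)}\le \operatorname{tr} e^{\theta A}$ for symmetric $A$, Markov's inequality applied to $S=\sum_{i=1}^N Y_i$ gives
\[
\PP\{\lambda_{\max}(S)\ge a\}\le e^{-\theta a}\,\EE\operatorname{tr}\exp(\theta S),\qquad \theta>0.
\]
The central difficulty, and the reason a naive scalar Chernoff argument fails, is that $e^{\theta\sum_i Y_i}\ne\prod_i e^{\theta Y_i}$ when the summands do not commute. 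To bypass this I would invoke Lieb's concavity theorem, which yields the subadditivity
\[
\EE\operatorname{tr}\exp(H+\theta Y)\le \operatorname{tr}\exp(H+\log\EE e^{\theta Y});
\]
iterating over the $N$ independent summands reduces the analysis to each matrix MGF $\EE e^{\theta Y_i}$ separately.

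Next, I would derive a Bernstein-type bound on each per-term MGF. Expanding $e^{\theta Y}=I+\theta Y+\sum_{k\ge 2}(\theta Y)^k/k!$, using $\EE Y=0$ together with the operator inequality $Y^k\preceq K^{k-2}Y^2$ (valid for symmetric $Y$ with $\|Y\|\le K$ and $k\ge 2$), and summing the geometric series yields
\[
\EE e^{\theta Y}\preceq I+g(\theta)\,\EE Y^2\preceq \exp\bigl(g(\theta)\,\EE Y^2\bigr),\quad g(\theta)=\frac{\theta^2/2}{1-\theta K/3},
\]
for $0<\theta<3/K$. Plugging this into the Lieb bound and using the elementary fact $\operatorname{tr}\exp(A)\le n\exp(\lambda_{\max}(A))$ for symmetric $n\times n$ $A$ yields
\[
\PP\{\lambda_{\max}(S)\ge a\}\le n\exp\bigl(-\theta a+g(\theta)\sigma^2\bigr).
\]
Optimizing by $\theta=a/(\sigma^2+Ka/3)$ produces the Bernstein exponent $-a^2/(2\sigma^2+2Ka/3)$. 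Applying the same estimate to $-S$ and taking a union bound over the two extremal eigenvalues (so as to control $\|S\|=\max\{\lambda_{\max}(S),\lambda_{\max}(-S)\}$) produces~\eqref{eq:append_bernstein} with prefactor $2n$.

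For the rectangular inequality~\eqref{eq:append_rectan_bernstein}, I would pass to the Hermitian dilation
\[
\tilde Y_i=\begin{bmatrix}0 & Y_i \\ Y_i^{\tp} & 0\end{bmatrix}\in\RR^{(m+n)\times(m+n)},
\]
which is symmetric with $\|\tilde Y_i\|=\|Y_i\|\le K$ and $\tilde Y_i^{\,2}=\operatorname{diag}(Y_iY_i^{\tp},\,Y_i^{\tp}Y_i)$. Consequently $\bigl\|\sum_i Y_i\bigr\|=\lambda_{\max}\bigl(\sum_i\tilde Y_i\bigr)$ and
\[
\Bigl\|\sum_{i=1}^N\EE\tilde Y_i^{\,2}\Bigr\|=\max\Bigl\{\Bigl\|\sum_i\EE Y_iY_i^{\tp}\Bigr\|,\,\Bigl\|\sum_i\EE Y_i^{\tp}Y_i\Bigr\|\Bigr\}=\sigma^2,
\]
so applying the already-proved symmetric bound to $\{\tilde Y_i\}$ in dimension $m+n$ gives~\eqref{eq:append_rectan_bernstein} with prefactor $2(m+n)$.

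The main obstacle throughout is the noncommutativity of matrix exponentials: this is precisely what Lieb's concavity theorem (equivalently a Golden--Thompson plus induction argument) is designed to overcome, by linearizing the log-MGF across independent summands. Once this noncommutative reduction is secured, the remaining ingredients—the per-term MGF bound via $Y^k\preceq K^{k-2}Y^2$, the $\theta$-optimization, the two-tail union bound, and the dilation in the rectangular case—are routine and purely algebraic.
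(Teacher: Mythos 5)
The paper does not prove this lemma at all: it is imported verbatim as a cited standard result (Theorem~5.4.1 and Exercise~5.4.15 of the Vershynin reference), so there is no in-paper argument to compare against. Your proposal is a correct rendition of exactly the proof that underlies the cited theorem --- the Tropp/Lieb route (trace Markov bound, Lieb concavity to decouple the independent summands, the per-term MGF bound via $Y^k\preceq K^{k-2}Y^2$ and the geometric-series estimate $\sum_{k\ge2}\theta^kK^{k-2}/k!\le(\theta^2/2)/(1-\theta K/3)$, optimization at $\theta=a/(\sigma^2+Ka/3)$, a two-sided union bound for the prefactor $2n$, and Hermitian dilation for the rectangular case with the correct identification of $\sigma^2$ as the maximum of the two Gram-matrix norms). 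All steps check out; nothing further is needed.
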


\begin{lemma}
    For $A, B\in \RR^{n\times n}$, if $A$ and $B$ are symmetric, then the Weyl inequality holds (Theorem~4.3.1 and~(6.3.4.1) of~\cite{horn2012matrix}):
    \begin{align}\label{eq:append_weyl}
        \max_{1\le i\le n} |\lambda_i(A) - \lambda_i(B)| \le \|A-B\|.
    \end{align}
    If $A$ and $B$ are invertible, then ((5.8.1) of~\cite{horn2012matrix})    \begin{align}\label{eq:append_invertdiffbound}
        \|A^{-1} - B^{-1}\| \le \|A^{-1}\| \|B^{-1}\| \|A-B\|.
    \end{align}
\end{lemma}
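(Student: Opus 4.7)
The plan is to analyze $\xstn = (\barMst)^{-1}\Psis\zs$, which follows from $I - \barQst = \barMst/(2\alphast)$ and $\barRst = \Psis/(2\alphast)$ (the factor $2\alphast$ cancels in $(I-\barQst)^{-1}\barRst$). Using the block structure~\eqref{eq:barMstblock} and the fact that $\Psis\zs$ vanishes in its lower $n_{\tx{r}2}$ coordinates, the standard block-inverse formula with Schur complement $S := \bar{M}^{*(11)} - \bar{M}^{*(12)}(\bar{M}^{*(22)})^{-1}\bar{M}^{*(21)}$ yields
\begin{align*}
	\xstn = \begin{bmatrix} S^{-1}\Psis_+\zs \\ -(\bar{M}^{*(22)})^{-1}\bar{M}^{*(21)} S^{-1}\Psis_+\zs \end{bmatrix}.
\end{align*}
Setting $\tilde{M}^{*} := -(\bar{M}^{*(22)})^{-1}\bar{M}^{*(21)} S^{-1}$ makes the bottom block exactly $\tilde{M}^{*}\Psis_+\zs$, so the whole task reduces to bounding the top block, namely $\|(S^{-1} - D^{-1})\Psis_+\zs\|$ with $D := \diag(\Psis_+\bfl_{\ns})$.

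Next I would control $\|S - D\|$. By the triangle inequality,
\begin{align*}
	\|S - D\| \le \|\bar{M}^{*(11)} - D\| + \|\bar{M}^{*(12)}(\bar{M}^{*(22)})^{-1}\bar{M}^{*(21)}\|.
\end{align*}
The diagonal of $\bar{M}^{*(11)} - D$ at index $i \le n_{\tx{r}1}$ equals $\EE\{\dir\}$, while its off-diagonal entries are $-\psir_{ij}$, so every row sum of absolute values is at most $2\Deltarr^+$; thus $\|\bar{M}^{*(11)} - D\| \le 2\Deltarr^+$. Submultiplicativity with $\bar{M}^{*(12)} = (\bar{M}^{*(21)})^{\tp}$ gives $\|\bar{M}^{*(12)}(\bar{M}^{*(22)})^{-1}\bar{M}^{*(21)}\| \le \|\bar{M}^{*(21)}\|^2/\lambda_1(\bar{M}^{*(22)})$. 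The hypothesis~\eqref{eq:thm47prime} is crafted precisely so that both terms are $o((\deltars^+)^2/(\Deltars\vee\Deltasr))$. Since $(\deltars^+)^2/(\Deltars\vee\Deltasr) \le \deltars^+$, Weyl's inequality~\eqref{eq:append_weyl} applied to $S$ and $D$ yields $\lambda_1(S) \ge \deltars^+(1-o(1))$, so $S$ is invertible with $\|S^{-1}\| \le 2/\deltars^+$ for large $n$. Combined with $\|D^{-1}\| = 1/\deltars^+$, the inverse-difference bound~\eqref{eq:append_invertdiffbound} then gives
\begin{align*}
	\|S^{-1} - D^{-1}\| \le \|S^{-1}\|\|D^{-1}\|\|S - D\| = o\!\left( \frac{1}{\Deltars\vee\Deltasr} \right).
\end{align*}

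Finally, $\|\Psis_+\zs\| \le \|\Psis_+\|\|\zs\| \le \sqrt{\|\Psis_+\|_1\|\Psis_+\|_\infty}\|\zs\| \le (\Deltars\vee\Deltasr)\|\zs\|$, and multiplying gives $\|(S^{-1}-D^{-1})\Psis_+\zs\| = o(\|\zs\|)$, the desired estimate. The main obstacle is the bookkeeping: the hypothesis~\eqref{eq:thm47prime} has three clauses, each tailored to absorb exactly one source of loss (the $\Deltarr^+$ term from $\bar{M}^{*(11)}-D$, the Schur-correction $\|\bar{M}^{*(21)}\|^2/\lambda_1(\bar{M}^{*(22)})$, and the invertibility requirement $\deltars^+ = \omega(1)$), and verifying that the factor $\Deltars\vee\Deltasr$ introduced by $\|\Psis_+\|$ cancels against the quadratic $\|S^{-1}\|\|D^{-1}\| \sim (\deltars^+)^{-2}$ is the delicate step. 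A secondary point is invertibility along the way: $\bar{M}^{*(22)}$ is invertible by $\lambda_1(\bar{M}^{*(22)}) = \Omega(1)$, $S$ is invertible by the Weyl estimate above, and $\barMst$ then inherits invertibility from its block factorization, so $\xstn$ itself is well-defined.
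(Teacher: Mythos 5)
Your proposal does not address the statement at hand. The statement is the auxiliary matrix-analysis lemma asserting (a) Weyl's inequality $\max_i|\lambda_i(A)-\lambda_i(B)|\le\|A-B\|$ for symmetric $A,B$, and (b) the bound $\|A^{-1}-B^{-1}\|\le\|A^{-1}\|\|B^{-1}\|\|A-B\|$ for invertible $A,B$. What you have written is instead an argument for Theorem~$4.7~(i)^{\prime}$ (the profile of $\xstn$ when $\deltars=0$), which merely \emph{uses} this lemma as a tool. However carefully the Schur-complement bookkeeping is carried out, it establishes nothing about the lemma itself, so as a proof of the quoted statement it is entirely off target.

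For the record, the paper gives no proof of this lemma either --- it is cited directly from Horn and Johnson --- but a self-contained argument is short. For (a), write $A=B+(A-B)$ and use the Courant--Fischer variational characterization: for each $i$,
\begin{align*}
\lambda_i(A)\;\le\;\lambda_i(B)+\lambda_{\max}(A-B)\;\le\;\lambda_i(B)+\|A-B\|,
\end{align*}
and the symmetric inequality with $A$ and $B$ interchanged gives the two-sided bound. For (b), use the identity $A^{-1}-B^{-1}=A^{-1}(B-A)B^{-1}$, which is verified by multiplying out, and then apply submultiplicativity of the spectral norm. If you intended to prove Theorem~$4.7~(i)^{\prime}$, your block-inverse route is broadly consistent with the paper's Appendix~D argument (which compares $\tilde{M}^{(11)}=[\bar{M}^{*(11)}-\bar{M}^{*(12)}(\bar{M}^{*(22)})^{-1}\bar{M}^{*(21)}]^{-1}$ first to $(\bar{M}^{*(11)})^{-1}$ and then to $(\diag(\Psis_+\bfl_{\ns}))^{-1}$), but that comparison belongs to a different statement than the one you were asked to prove.
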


First, we derive a concentration bound for the matrix $\barMg$ given in~\eqref{eq:barMg}.

\begin{lemma}[Concentration of $\barMg$]\label{lem:concentration_M} If $\Delta_r \ge \log n$, then $\PP\{\|\barMg - \barMst\| \le \eps_{M,n}\}  \ge  1 - \eta_{M,n}$,
	where $\eps_{M,n} = 4 \sqrt{\Deltar \log n}$, $\eta_{M,n} = 2 r_0 n^{-\frac15}$, and $r_0 = \nr/n$.
\end{lemma}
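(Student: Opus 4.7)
The plan is to apply the matrix Bernstein inequality (\Cref{lem:bernstein}) to a decomposition of $\barMg - \barMst$ as a sum of independent zero-mean matrices indexed by the edges of the $\trgs$.

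\textbf{Step 1: Edge-wise decomposition.} Inspecting~\eqref{eq:barMg}, each regular--regular edge $\{i,j\}$ with $i<j$ contributes $a_{ij}(e_i^{(\nr)}-e_j^{(\nr)})(e_i^{(\nr)}-e_j^{(\nr)})^{\tp}$ to $\barMg$ (both diagonal and off-diagonal terms), while each regular--stubborn edge $\{i,j\}$ with $i\in\mtcvr$, $j\in\mtcvs$ contributes only $a_{ij}\, e_i^{(\nr)}(e_i^{(\nr)})^{\tp}$ to the diagonal. Denote by $Y_{ij}$ the zero-mean random matrix obtained by centering each such contribution. Since the $a_{ij}$'s are independent, the $Y_{ij}$'s are independent, $\EE\{Y_{ij}\}=0$, and $\barMg-\barMst=\sum_{ij} Y_{ij}$.

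\textbf{Step 2: Uniform bound $K$.} For regular--regular pairs, $\|Y_{ij}\|\le \|(e_i^{(\nr)}-e_j^{(\nr)})(e_i^{(\nr)}-e_j^{(\nr)})^{\tp}\|=2$; for regular--stubborn pairs, $\|Y_{ij}\|\le 1$. Hence $K=2$.

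\textbf{Step 3: Variance proxy $\sigma^2$.} Using $(e_i-e_j)(e_i-e_j)^{\tp}(e_i-e_j)(e_i-e_j)^{\tp}=2(e_i-e_j)(e_i-e_j)^{\tp}$ and $\EE\{(a_{ij}-\EE a_{ij})^2\}\le\EE\{a_{ij}\}$, the sum $V\Let\sum_{ij}\EE\{Y_{ij}^2\}$ is a PSD $\nr\times\nr$ matrix whose $(i,i)$ entry is at most $2\EE\{d_i^{(\tx{r})}\}+\EE\{d_i^{(\tx{s})}\}$ and whose $(i,j)$ off-diagonal for $i\neq j$ in $\mtcvr$ has absolute value at most $2\psir_{ij}$. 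By Gershgorin applied to the PSD matrix $V$,
\begin{equation*}
\sigma^2=\|V\|\le \max_{i\in\mtcvr}\Bigl(2\EE\{d_i^{(\tx{r})}\}+\EE\{d_i^{(\tx{s})}\}+\sum_{j\neq i}2\psir_{ij}\Bigr)\le 4\Deltar.
\end{equation*}

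\textbf{Step 4: Apply Bernstein.} Set $a=\eps_{M,n}=4\sqrt{\Deltar\log n}$. Using $\Deltar\ge\log n$ so $\sqrt{\Deltar\log n}\le\Deltar$, we have $\sigma^2+Ka/3\le 4\Deltar+\tfrac{8}{3}\Deltar=\tfrac{20}{3}\Deltar$, hence
\begin{equation*}
\frac{a^2/2}{\sigma^2+Ka/3}\ge \frac{8\Deltar\log n}{(20/3)\Deltar}=\tfrac{6}{5}\log n.
\end{equation*}
The symmetric Bernstein bound~\eqref{eq:append_bernstein} then gives
\begin{equation*}
\PP\{\|\barMg-\barMst\|\ge \eps_{M,n}\}\le 2\nr\exp\bigl(-\tfrac{6}{5}\log n\bigr)=2\nr n^{-6/5}=2r_0 n^{-1/5}=\eta_{M,n}.
\end{equation*}

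The only non-routine step is Step~3: one must recognize that the first part of $V$ is weighted-Laplacian-like so that Gershgorin yields a bound linear in $\Deltar$ rather than in $\nr$; a naive triangle inequality on the operator norms would give a looser constant that prevents the exponent from reaching $\tfrac{6}{5}\log n$.
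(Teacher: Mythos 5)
Your proof is correct and follows essentially the same route as the paper: the same edge-wise centered decomposition of $\barMg-\barMst$, the same uniform bound $K=2$, the same variance proxy $\sigma^2\le 4\Deltar$ (which the paper asserts directly where you justify it via Gershgorin), and the same choice $a=4\sqrt{\Deltar\log n}$ yielding the exponent $\tfrac{6}{5}\log n$ and the bound $2r_0 n^{-1/5}$.
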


\begin{proof}
	Decompose $\barMg - \barMst = \sum_{i=1}^{\nr} \sum_{j=i+1}^{n} Y_{ij}$,
	where $	Y_{ij} = (a_{ij} - p_{ij})$ $ (E_{ii} + E_{jj} - E_{ij} - E_{ji})$, $1 \le i < j \le \nr$, and $Y_{ij} = (a_{ij} - p_{ij}) E_{ii}$, $ 1 \le i \le \nr < j \le n$. Here $p_{ij} \Let \EE\{a_{ij}\}$ and $E_{ij} = e_i^{(\nr)} (e_j^{(\nr)})^\tp$, $1\le i,j\le n$.
	Hence $\EE\{Y_{ij}\} = 0$, $\EE\{Y_{ij}^2\} = 2 (p_{ij} - p_{ij}^2) (E_{ii} + E_{jj} - E_{ij} - E_{ji})$ for $1\le i < j\le \nr$, and $\EE\{Y_{ij}^2\} = (p_{ij} - p_{ij}^2) E_{ii}$ for $1 \le i \le \nr <j \le n$. 
    Denote $\bar{Y} \Let \sum_{i=1}^{\nr} \sum_{j=i+1}^n \EE\{Y_{ij}^2\}$, so $\sigma^2 = \|\bar{Y}\| \le 4 \max_{1\le i \le \nr} \{\sum_{j=1}^n p_{ij}\} = 4 \Deltar$. 
    From~\eqref{eq:append_bernstein} and $\|Y_{ij}\|\le 2 = K$, for $a > 0$,
	\begin{align*}
		\PP\{\|\barMg - \barMst\| > a\} \le 2 \nr \exp\Big\{\frac{-a^2}{4(2\Deltar + a/3)}\Big\}.
	\end{align*}
	Set $a = 4 \sqrt{\Deltar \log n}$, and from the assumption $\Deltar \ge \log n$ we have that
	\begin{align*}
		\PP\{\|\barMg - \barMst\| > 4 \sqrt{\Deltar \log n}\} &\le 2 r_0n \exp\Big\{\frac{-4\Deltar \log n}{2\Deltar + 4\sqrt{\Deltar \log n}/3}\Big\}\\
		&\le 2 r_0 n \exp \Big\{ \frac{-4 \log n}{2 + 4/3} \Big\} = 2 r_0 n^{-\frac15}.
	\end{align*}
\end{proof}

As a consequence, we can estimate the deviation of $(\barMg)^{-1}$ from $(\barMst)^{-1}$.

\begin{corollary}[Concentration of $(\barMg)^{-1}$]\label{cor:concentration_Minverse}~\\
\indent (i) If  $\deltars > 8 \log n$, then it holds that
	\begin{align}\label{eq:append_Minv_1}
		\PP\{\|(\barMg)^{-1} - (\barMst)^{-1}\| \le \eps_{M,n}^\prime\} \ge 1 - \eta_{M,n}^\prime,
	\end{align}
	where $\eps_{M,n}^\prime = 2\eps_{M,n}/\deltars^2$, $\eta_{M,n}^\prime =  r_0 n^{1 - \deltars/(8\log n)} + \eta_{M,n}$, $\eps_{M,n}$ and $\eta_{M,n}$ are given in \Cref{lem:concentration_M}, and $r_0 = \nr/n$.

    (ii) If $\lambda_1(\barMst) > \eps_{M,n}$ and $\Delta_r \ge \log n$, then~\eqref{eq:append_Minv_1} holds with $\eps_{M,n}^\prime = \eps_{M,n}/$ $[\lambda_1(\barMst)(\lambda_1(\barMst) - \eps_{M,n})]$ and $\eta_{M,n}^\prime = \eta_{M,n}$.
\end{corollary}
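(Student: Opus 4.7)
The proof will rest on the identity~\eqref{eq:append_invertdiffbound}, which gives
\[
\|(\barMg)^{-1} - (\barMst)^{-1}\| \le \|(\barMg)^{-1}\|\,\|(\barMst)^{-1}\|\,\|\barMg - \barMst\|,
\]
so the task decomposes into three pieces: bound the perturbation $\|\barMg - \barMst\|$ using Lemma~\ref{lem:concentration_M}, lower-bound the smallest eigenvalue of the deterministic matrix $\barMst$, and lower-bound the smallest eigenvalue of the random matrix $\barMg$ on a high-probability event.

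The key structural observation is that $\barMg$ and $\barMst$ are both symmetric positive semidefinite: from~\eqref{eq:barMg} we can write $\barMg = \barLg + \diag(d_1^{(\tx{s})},\ldots,d_{\nr}^{(\tx{s})})$ where $\barLg$ is the Laplacian of the regular-agent subgraph, and similarly $\barMst = \barLst + \diag(\Psis \bfl_{\ns})$. Since the Laplacian part is PSD, the smallest eigenvalue of each of these matrices is at least the smallest diagonal entry of the added stubborn-degree diagonal. In particular $\lambda_1(\barMst) \ge \deltars$, which immediately gives $\|(\barMst)^{-1}\| \le 1/\deltars$ under the hypothesis of~(i), and $\|(\barMst)^{-1}\| \le 1/\lambda_1(\barMst)$ in the setting of~(ii).

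For part~(i), I need $\lambda_1(\barMg) \ge \deltars/2$ with high probability, which amounts to showing that every stubborn degree $d_i^{(\tx{s})}$ is at least half its expectation. Since $d_i^{(\tx{s})}$ is a sum of independent Bernoulli variables with mean $\EE\{d_i^{(\tx{s})}\} \ge \deltars$, the Chernoff bound~\eqref{eq:append_chernoff1/3} with $\delta = 1/2$ gives $\PP\{d_i^{(\tx{s})} \le \EE\{d_i^{(\tx{s})}\}/2\} \le e^{-\EE\{d_i^{(\tx{s})}\}/8} \le e^{-\deltars/8}$. A union bound over the $\nr = r_0 n$ regular agents, together with $\deltars > 8\log n$, produces the failure probability $r_0 n \cdot e^{-\deltars/8} = r_0 n^{1 - \deltars/(8\log n)}$. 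On the intersection of this event with the event from Lemma~\ref{lem:concentration_M} (which applies because $\Deltar \ge \deltars > \log n$), plugging the bounds $\|(\barMg)^{-1}\| \le 2/\deltars$, $\|(\barMst)^{-1}\| \le 1/\deltars$, and $\|\barMg - \barMst\| \le \eps_{M,n}$ into the identity above yields $2\eps_{M,n}/\deltars^2$, with the advertised probability by a final union bound.

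For part~(ii), the argument is simpler because we do not require any lower bound on the individual stubborn degrees. On the event $\{\|\barMg - \barMst\| \le \eps_{M,n}\}$ supplied by Lemma~\ref{lem:concentration_M}, the Weyl inequality~\eqref{eq:append_weyl} gives $\lambda_1(\barMg) \ge \lambda_1(\barMst) - \eps_{M,n} > 0$, so $\|(\barMg)^{-1}\| \le 1/(\lambda_1(\barMst) - \eps_{M,n})$. Substituting into the identity yields the stated bound, with failure probability only $\eta_{M,n}$. The main technical obstacle is really in part~(i): controlling the smallest eigenvalue of $\barMg$ in the regime where $\lambda_1(\barMst)$ may be close to zero but every expected stubborn degree is large, which is precisely what the Chernoff argument on individual stubborn degrees handles.
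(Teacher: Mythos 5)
Your proposal is correct and follows essentially the same route as the paper: the inverse-difference identity \eqref{eq:append_invertdiffbound}, the deterministic bound $\lambda_1(\barMst)\ge\deltars$, a Chernoff-plus-union bound on the stubborn degrees to get $\lambda_1(\barMg)\ge\deltars/2$ for part~(i), and Weyl's inequality for part~(ii). The only cosmetic difference is that you lower-bound the smallest eigenvalues via the PSD splitting $\barMg=\barLg+\diag(d_1^{(\tx{s})},\dots,d_{\nr}^{(\tx{s})})$, whereas the paper invokes the Gershgorin circle theorem; both give the identical bound.
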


\begin{proof}
	Note that $[\barMst]_{ii} = \EE\{\di\}$, and $[\barMst]_{ij} = - \EE\{a_{ij}\}$. So by the Gershgorin circle theorem,  $ \lambda_{\min}(\barMst) \ge \min_{1\le i \le \nr} \{ \EE\{\di\} -  \EE\{\dir \} \}$ $ = \min_{1\le i \le \nr}$ $ \{ \EE \{ \dis  \} \}$ $ = \deltars $.
	Thus, for symmetric $\barMst$, $(\barMst)^{-1}$ exists when $\deltars> 0$. Hence,
\begin{align}\label{eq:append_Minverse_deltars}
		\|(\barMst)^{-1}\| = \frac{1}{\lambda_{\min}(\barMst)} \le \frac{1}{\deltars}.
	\end{align}
	Similarly, from the Gershgorin circle theorem, it follows that $ \lambda_{\min}(\barMg) \ge \min_{1\le i \le \nr}$ $ \{ \di - \dir \} = \min_{1\le i \le \nr} \{ \dis \} $.
	Using~\eqref{eq:append_chernoff1/3} with $\delta = 1/2$, we obtain that
	\begin{align*}
		\PP\Big\{ \min_{1\le i \le \nr} \{ \dis \} > \frac12 \deltars \Big\} 
		&= 1 - \PP\Big\{ \bigcup_{i=1}^{\nr} \Big[ \dis \le \frac12 \EE\{ \dis \} \Big] \Big\}\\
		&\ge 1 - r_0n e^{-\deltars/8}  = 1 - r_0 n^{1 - \deltars/(8\log n)}.
	\end{align*}
	As a result, with probability at least $1 - r_0 n^{1 - \deltars/(8\log n)}$,
	\begin{align}\label{eq:M_inverse_bound}
		\|(\barMg)^{-1}\| = \frac{1}{\lambda_{\min}(\barMg)} \le \frac{2}{\deltars}.
	\end{align}
	Therefore, from~\eqref{eq:append_invertdiffbound}, with probability at least $1 - r_0 n^{1 - \deltars/(8\log n)} - \eta_{M,n}$,
	\begin{align*}
		\|(\barMg)^{-1} - (\barMst)^{-1}\| \le \|(\barMg)^{-1}\| \|(\barMst)^{-1}\| \|\barMg - \barMst\| \le \frac{2\eps_{M,n}}{\deltars^2}.
	\end{align*}

    To show~(ii), note from~\eqref{eq:append_weyl} that with probability at least $1 - \eta_{M,n}$
    \begin{align}\label{eq:append_MsmallestEig}
	|\lambda_{\min}(\barMg) - \lambda_{\min}(\barMst)| \le \eps_{M,n},
    \end{align}
    so $\lambda_{\min}(\barMg) \ge \lambda_{\min}(\barMst) - \eps_{M,n} > 0$ when $\lambda_{\min}(\barMst) > \eps_{M,n}$. Again from~\eqref{eq:append_invertdiffbound}, 
	\begin{align*}
		\|(\barMg)^{-1} - (\barMst)^{-1}\| &\le \|(\barMg)^{-1}\| \|(\barMst)^{-1}\| \|\barMg - \barMst\| \\
            &= \frac{1}{\lambda_{\min}(\barMg)} \frac{1}{\lambda_{\min}(\barMst)} \|\barMg - \barMst\|\\
            &\le \frac{\eps_{M,n}}{\lambda_{\min}(\barMst)(\lambda_{\min}(\barMst) - \eps_{M,n})},
	\end{align*}
    with probability at least $1 - \eta_{M,n}$.
\end{proof}

Similar to $\barMg$, we can obtain concentration of $\barUg$ given in~\eqref{eq:barMg}.

\begin{lemma}[Concentration of $\barUg$]\label{lem:concentration_U} Suppose that $\Deltars \vee \Deltasr \ge \log n$. Then $\PP\{\|\barUg $ $- \barUst \| \le \eps_{U,n} \} \ge 1 - \eta_{U,n}$,
	where $\eps_{U,n} = 2 \sqrt{(\Deltars \vee \Deltasr) \log n}$ and $\eta_{U,n} = 2 n^{-1/5}$.
\end{lemma}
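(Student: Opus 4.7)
The plan is to mimic the argument that established concentration of $\barMg$ in \Cref{lem:concentration_M}, but applied to the rectangular Bernstein inequality~\eqref{eq:append_rectan_bernstein}, since $\barUg \in \RR^{\nr \times \ns}$ is no longer square.

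First, I would decompose $\barUg - \barUst$ as a sum of independent mean-zero random matrices. With $p_{ij} \Let \EE\{a_{ij}\}$ for $i \in \mtcvr$, $j \in \mtcvs$, set
\begin{equation*}
    Y_{ij} = (a_{ij} - p_{ij}) e_i^{(\nr)} (e_j^{(\ns)})^\tp,
\end{equation*}
so that $\barUg - \barUst = \sum_{i\in\mtcvr} \sum_{j\in\mtcvs} Y_{ij}$, with $\EE\{Y_{ij}\} = 0$ and $\|Y_{ij}\| \le 1 \Let K$ almost surely (since $a_{ij} \in \{0,1\}$).

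Next, I would compute the variance proxy $\sigma^2$ in~\eqref{eq:append_rectan_bernstein}. Because $Y_{ij} Y_{ij}^\tp = (a_{ij}-p_{ij})^2 e_i^{(\nr)} (e_i^{(\nr)})^\tp$ and $Y_{ij}^\tp Y_{ij} = (a_{ij}-p_{ij})^2 e_j^{(\ns)} (e_j^{(\ns)})^\tp$, and $\EE\{(a_{ij}-p_{ij})^2\} = p_{ij}(1-p_{ij}) \le p_{ij}$, the two sums are diagonal matrices whose largest entries are, respectively, bounded by $\max_{i \in \mtcvr} \sum_{j \in \mtcvs} p_{ij} = \Deltars$ and $\max_{j \in \mtcvs} \sum_{i \in \mtcvr} p_{ij} = \Deltasr$. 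Thus $\sigma^2 \le \Deltars \vee \Deltasr$.

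Setting $a = 2\sqrt{(\Deltars\vee\Deltasr)\log n}$ in~\eqref{eq:append_rectan_bernstein} and using the hypothesis $\Deltars \vee \Deltasr \ge \log n$ (so that $\sqrt{(\Deltars\vee\Deltasr)\log n} \le \Deltars\vee\Deltasr$), the denominator in the exponent is bounded by $(\Deltars\vee\Deltasr)(1 + 2/3) = \tfrac{5}{3}(\Deltars\vee\Deltasr)$, giving an exponent at most $-\tfrac{6}{5}\log n$. The prefactor $2(\nr+\ns) = 2n$ then yields
\begin{equation*}
    \PP\{\|\barUg - \barUst\| > 2\sqrt{(\Deltars\vee\Deltasr)\log n}\} \le 2 n \cdot n^{-6/5} = 2 n^{-1/5},
\end{equation*}
which is exactly the claimed bound. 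No genuine obstacle is expected; the only step requiring care is the $\sigma^2$ calculation, where one must observe that the two diagonal sums give rise to $\Deltars$ and $\Deltasr$ separately so that only their maximum enters, which is why the condition involves $\Deltars \vee \Deltasr$ rather than just one of them.
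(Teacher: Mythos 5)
Your proposal is correct and follows essentially the same route as the paper: the identical rank-one decomposition $Y_{ij} = (a_{ij}-p_{ij})e_i^{(\nr)}(e_j^{(\ns)})^\tp$, the same variance proxy $\sigma^2 \le \Deltars\vee\Deltasr$ obtained from the two diagonal second-moment sums, and the same application of the rectangular matrix Bernstein inequality with $a = 2\sqrt{(\Deltars\vee\Deltasr)\log n}$ and the hypothesis $\Deltars\vee\Deltasr \ge \log n$ to control the denominator. No discrepancies.
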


\begin{proof}
	Decompose $\barUg - \barUst = \sum_{i=1}^{\nr} \sum_{j=\nr+1}^n Y_{ij}^\prime$, 
	where $Y_{ij}^\prime = (a_{ij} - p_{ij}) e_i^{(\tx{r})}$ $ (e_j^{(\tx{s})})^\tp$. Here $p_{ij} = \EE\{a_{ij}\}$, $e_i^{(\tx{r})} \Let e_i^{(\nr)}$, and $e_j^{(\tx{s})} \Let e_{j-\nr}^{(\ns)}$. Hence, $	\|Y_{ij}^\prime\| = |a_{ij} - p_{ij}| \|e_i^{(\tx{r})} (e_j^{(\tx{s})})^\tp\| $ $ \le \|e_i^{(\tx{r})} (e_j^{(\tx{s})})^\tp\| = 1$, and
	\begin{align*}
		\bar{Y}^\prime &\Let \sum\nolimits_{i=1}^{\nr} \sum\nolimits_{j=\nr+1}^n \EE\{(Y_{ij}^\prime)^\tp Y_{ij}^\prime\}\\	
		&=\sum\nolimits_{i=1}^{\nr} \sum\nolimits_{j=\nr+1}^n \EE\{(a_{ij} - p_{ij})^2\} e_j^{(\tx{s})} (e_i^{(\tx{r})})^\tp e_i^{(\tx{r})} (e_j^{(\tx{s})})^\tp \\
		&= \diag\big(\sum\nolimits_{i=1}^{\nr} (p_{i,\nr+1} - p_{i,\nr+1}^2), \dots, \sum\nolimits_{i=1}^{\nr} (p_{i,n} - p_{i,n}^2)\big),\\	
	\|\bar{Y}^\prime\| &\le \max_{\nr+1 \le j \le n} \big\{ \sum\nolimits_{i=1}^{\nr} (p_{ij} - p_{ij}^2) \big\} \le \max_{\nr+1 \le j \le n} \big\{ \sum\nolimits_{i=1}^{\nr} p_{ij}\big\} = \Deltasr.
     \end{align*}
	Similarly, let $\bar{Y}^{\prime\prime} \Let \sum_{i=1}^{\nr} \sum_{j=\nr+1}^n \EE\{Y_{ij}^\prime (Y_{ij}^\prime)^\tp\}$, and then we have that
	\begin{align*}
		\|\bar{Y}^{\prime\prime} \| &\le \big\| \diag \big(\sum\nolimits_{j=\nr+1}^{n} (p_{1j} - p_{1j}^2), \dots, \sum\nolimits_{j=\nr+1}^{n} (p_{\nr,j} - p_{\nr,j}^2)\big) \big\| \\
		&\le \max_{1\le i \le \nr} \big\{ \sum\nolimits_{j=\nr+1}^n (p_{ij} - p_{ij}^2) \big\} \le \Deltars.
	\end{align*}
	Let $\sigma^2 = \Deltars \vee \Deltasr$ and $K=1$, and set $a = 2 \sqrt{(\Deltars \vee \Deltasr) \log n}$. From~\eqref{eq:append_rectan_bernstein}, 
	\begin{align*}
		&\PP\{ \|\barUg - \barUst \| > 2 \sqrt{(\Deltars \vee \Deltasr) \log n} \} \\
		&\le 2 (\ns + \nr) \exp \Big\{ \frac{- 2 (\Deltars \vee \Deltasr) \log n}{( \Deltars \vee \Deltasr) + 2 \sqrt{(\Deltars \vee \Deltasr) \log n}/3} \Big\} \\
		&= 2 n \exp \Big\{ \frac{- 2 \log n}{1 + 2 \sqrt{(\log n)/(\Deltars \vee \Deltasr)}/3} \Big\}  \le 2 n^{-\frac15}.
	\end{align*}
\end{proof}

The preceding concentration bounds are useful in analyzing the distance $\|\xgn - \xstn\| =\| (I-\barQg)^{-1} \barRg \zs - (I-\barQst)^{-1} \barRst \zs \| = \| [(\barMg)^{-1} \barUg - (\barMst)^{-1} \Psis]\zs \|$. But to make sure that $(I-\barQg)^{-1}$ is well-defined, we now study $\barQg$ and $\alphag$.

\begin{lemma}[Bound of $\alphag$ and $\rho(\barQg)$]\label{lem:bound_alpha_rhoQ}~\\\indent
    (i) Suppose that $\deltars > 8 \log n$. Then it holds that 
    \begin{align}\label{eq:append_rhoQ}
        \PP\{[\rho(\barQg) \le \eps_{Q,n} < 1] \cap [\alphag \ge \alphast/2 > 0]\} &\ge 1 - \eta_{Q,n} = 1-o(1),
    \end{align}    
    where $\eps_{Q,n} = 1 - \deltars/(6\alphast)$,  $\eta_{Q,n} = r_0n^{1-\deltars/(8\log n)} + 2n^{-2/3}$, and $r_0 = \nr/n$.

    (ii) Suppose that $\lambda_1(\barMst) > \eps_{M,n}$ and $\Delta_r \ge \log n$. Then~\eqref{eq:append_rhoQ} holds with $\eps_{Q,n} = 1 - (\lambda_1(\barMst) - \eps_{M,n})/(3\alphast)$ and $\eta_{Q,n} = \eta_{M,n} + 2n^{-1/8}$, where $\eps_{M,n}$ and $\eta_{M,n}$ are given in \Cref{lem:concentration_M}.    
\end{lemma}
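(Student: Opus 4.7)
The plan is to derive both claims from the identity $\rho(\barQg) = 1 - \lambda_{\min}(\barMg)/(2\alphag)$. This holds because each realization $Q(t)$ is symmetric and positive semidefinite (its eigenvalues are $\{0,1,\dots,1\}$ for a regular--regular update and $\{1/2,1,\dots,1\}$ for a regular--stubborn update), so $\barQg = \EEg\{Q(t)\}$ is also PSD and its spectral radius equals its largest eigenvalue $1 - \lambda_{\min}(\barMg)/(2\alphag)$. Given this identity, I only need an upper bound on $\alphag$ and a lower bound on $\lambda_{\min}(\barMg)$, both of which follow from ingredients already assembled in the appendix.

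For the edge count, $\alphag = \sum_{i<j} a_{ij}$ is a sum of independent Bernoullis with mean $\alphast$, so applying \eqref{eq:append_chernoff1/3} with $\delta = 1/2$ yields $\alphag \in [\alphast/2, 3\alphast/2]$ with probability at least $1 - 2e^{-\alphast/12}$. Under (i), $\alphast \ge \nr \deltars > 8\nr \log n$ bounds this tail by $2n^{-2/3}$; under (ii), the inequalities $2\alphast \ge \sum_{i\in\mtcvr}\EE\{d_i\} = \tx{tr}(\barMst) \ge \nr \lambda_1(\barMst)$ together with $\lambda_1(\barMst) > 4\sqrt{\Deltar \log n} \ge 4\log n$ bound the tail by the stated $2n^{-1/8}$. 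For the eigenvalue, I would recycle the two arguments already developed inside the proof of \Cref{cor:concentration_Minverse}: in~(i), Gershgorin's theorem gives $\lambda_{\min}(\barMg) \ge \min_{i\in\mtcvr}\dis$, and a Chernoff bound with $\delta = 1/2$ followed by a union bound over $\mtcvr$ yields $\min_i \dis \ge \deltars/2$ with probability at least $1 - r_0 n^{1-\deltars/(8\log n)}$; in~(ii), \Cref{lem:concentration_M} gives $\|\barMg - \barMst\| \le \eps_{M,n}$ on an event of probability at least $1 - \eta_{M,n}$, and Weyl's inequality \eqref{eq:append_weyl} then yields $\lambda_{\min}(\barMg) \ge \lambda_1(\barMst) - \eps_{M,n} > 0$ on that event.

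Intersecting the favorable events, the spectral identity combined with $\alphag \le 3\alphast/2$ gives
\begin{align*}
\rho(\barQg) = 1 - \frac{\lambda_{\min}(\barMg)}{2\alphag} \le 1 - \frac{\lambda_{\min}(\barMg)}{3\alphast},
\end{align*}
into which substituting $\lambda_{\min}(\barMg) \ge \deltars/2$ for case~(i) and $\lambda_{\min}(\barMg) \ge \lambda_1(\barMst) - \eps_{M,n}$ for case~(ii) recovers exactly the stated $\eps_{Q,n}$. The event $\alphag \ge \alphast/2 > 0$ is included in the same intersection. A final union bound over the constituent failure events produces $\eta_{Q,n}$. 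The only mildly delicate step is tracking the constants so the Chernoff tails for $\alphag$ match the claimed $n^{-2/3}$ and $n^{-1/8}$ orders; this is routine once $\alphast = \Omega(\nr \log n)$ is established in each regime via the observations above, so I do not anticipate any genuine obstacle.
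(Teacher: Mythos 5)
Your proposal is correct and follows essentially the same route as the paper's proof: both reduce $\rho(\barQg)<1$ to the identity $\rho(\barQg)=1-\lambda_{\min}(\barMg)/(2\alphag)$ via symmetry and positive semidefiniteness of $\barQg$, control $\alphag$ by two-sided Chernoff bounds, and lower-bound $\lambda_{\min}(\barMg)$ by Gershgorin plus a union bound in case~(i) and by \Cref{lem:concentration_M} plus Weyl's inequality in case~(ii). Your only deviations are cosmetic and valid: you justify positive semidefiniteness from the spectra of the realizations of $Q(t)$, and you lower-bound $\alphast$ by $\nr\deltars$ and by $\tfrac12\nr\lambda_1(\barMst)$ rather than by $\deltars$ and $\Deltar$ as the paper does, which if anything gives slightly stronger tail estimates.
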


\begin{proof}
    Applying~\eqref{eq:append_chernoff1/3} with $\delta = 1/2$ yields that
\begin{align}\label{eq:appen:alpha_lowerbound}
	\PP\{\alphag - \alphast \le -\alphast/2 \} \le e^{-\frac{\alphast}{8}}.
\end{align}
    When $\alphast \ge \deltars > 8 \log n > 0$, $e^{-\alphast/8} \le e^{-\log n} = n^{-1}$. If $\alphast \ge \Delta_r \ge \log n > 0$,  $e^{-\alphast/8} \le n^{-1/8}$. Hence, $\alphag \ge \alphast/2 > 0$ w.h.p.

    Note that $I - \barQst = \barMst/(2\alphast)$, so $(I - \barQst)^{-1}$ exists under  conditions of either~(i) or~(ii). Since $\barQg = I - \barMg/(2\alphag)$ is symmetric and positive semi-definite, to show $\rho(\barQg) = \lambda_{\max}(\barQg) < 1$, it suffices to provide a lower bound for $\lambda_1(\barMg/(2\alphag))$.

    First we derive a bound under $\deltars > 8\log n$. From~\eqref{eq:M_inverse_bound}, we know that $\PP \{\lambda_1(\barMg) $ $>  \deltars/2 \} \ge 1 - r_0 n^{1 - \deltars/(8\log n)}$.
In addition, applying~\eqref{eq:append_chernoff1/3} with $\delta = 1/2$ yields that
\begin{align}\label{eq:appen:alpha_upperbound}
	\PP\{ 1/(2\alphag) \le 1/(3\alphast) \} \le \PP\{\alphag - \alphast \ge  \alphast/2\} \le e^{-\frac{\alphast}{12}} \le n^{-\frac23},
\end{align}
so $\rho(\barQg)\le 1 - \deltars/(6\alphast)$ holds with probability at least $1 - r_0 n^{1 - \deltars/(8\log n)} - n^{-2/3}$. 
Thus (i) is proved. Combining~\eqref{eq:append_MsmallestEig},~\eqref{eq:appen:alpha_lowerbound}, and~\eqref{eq:appen:alpha_upperbound} yields~(ii).
\end{proof}

\section{Proof of Theorem~\ref{thm:concentration_states}}\label{appen:thm:concentration_states}
From \Cref{lem:bound_alpha_rhoQ}, $(I-\barQg)^{-1}$ exists w.h.p. and $(I - \barQst)^{-1}$ exists under either \Cref{asmp:main}~(i.1) or (i.2). In either case, $\xgn$ and $\xstn$ are well-defined, and it holds that
\begin{align}\nonumber
	\|\xgn - \xstn\|
	&=\| (I-\barQg)^{-1} \barRg \zs - (I-\barQst)^{-1} \barRst \zs \| \\\nonumber
	&= \Big\| \Big[ \Big( \frac{\barMg}{2\alphag} \Big)^{-1} \frac{\barUg}{2\alphag} - \Big(\frac{\barMst}{2 \alphast} \Big)^{-1} \frac{\Psis}{2 \alphast} \Big] \zs \Big\| \\\nonumber
	&= \| [(\barMg)^{-1} \barUg - (\barMst)^{-1} \Psis]\zs \| \\\nonumber
	&= \| \{ (\barMg)^{-1} (\barUg - \Psis) + [(\barMg)^{-1} - (\barMst)^{-1}] \Psis \} \zs \| \\\nonumber
	&\le (\| (\barMg)^{-1} \| \| \barUg - \Psis \| + \| (\barMg)^{-1} - (\barMst)^{-1} \| \|\Psis \| ) \| \zs \|.
\end{align}
From~\eqref{eq:M_inverse_bound}, \Cref{lem:concentration_U}, \Cref{cor:concentration_Minverse}~(i), and \Cref{lem:bound_alpha_rhoQ}~(i), it holds with probability at least $1 - r_0 n^{1 - \deltars/(8 \log n)} - 2(1+r_0)n^{-1/5} - 2 n^{-2/3}$ that
\begin{align}\label{eq:append_xgnbound1}
	&\|\xgn - \xstn\| 
	\le \Big( \frac{2}{\deltars} \eps_{U,n} + \eps^\prime_{M,n} \|\Psis\| \Big) \|z^{(s)}\| \\\nonumber
	&\qquad\qquad\qquad \le 4 \Big( \frac{\sqrt{(\Deltars \vee \Deltars) \log n}}{\deltars} + \frac{2 \sqrt{\Deltar \log n} \|\Psis\|}{\deltars^2}  \Big) \|z^{(s)}\|.
\end{align}
In this way, we prove (i) of the theorem. The second part follows from~\eqref{eq:append_MsmallestEig}, Lem-ma~\ref{lem:concentration_U}, \Cref{cor:concentration_Minverse}~(ii), and \Cref{lem:bound_alpha_rhoQ}~(ii).

\section{Proof of Proposition~\ref{cor:number_error}}\label{appen:cor:number_error}
From the definition of $\mtcv^{\eps,n}$, $\eps^2|\mtcv^{\eps,n}| $ $\le \sum_i (\xgn_i - \xstn_i)^2 = \|\xgn - \xstn\|^2$. Since $\deltars = \omega(\log n)$, $|\mtcv^{\eps,n}|\le \eps_{x,n}^2\|\zs\|^2/\eps^2 \le \eps_{x,n}^2 c_x^2 n/\eps^2$ w.h.p. Note that $\sqrt{(\Deltar \log n)^{1/2} (\Deltars\vee\Deltasr)} \ge \sqrt{(\deltars \log n)^{1/2} (\Deltars\vee\Deltasr)}$ $ = \omega(\sqrt{(\Deltars\vee\Deltasr) \log n})$, and $\sqrt{(\Deltar \log n)^{1/2} (\Deltars\vee\Deltasr)} \ge \sqrt{(\Deltar \log n)^{1/2} \|\Psis\|}$. \\The conclusion then follows from the expression of $\eps_{x,n}$ in \Cref{thm:concentration_states}~(i), \Cref{asmp:main}~(iii), and $\deltars = \omega(\sqrt{(\Deltar \log n)^{1/2} (\Deltars \vee \Deltasr)})$.

\section{Proof of Theorem~\ref{prop:profilexst}}
\label{appen:prop:profilexst}

\textbf{Proof of \Cref{prop:profilexst}~(i).}
Denote $\bar{S} \Let \diag(d_1^{(\tx{s})}, \dots, d_{\nr}^{(\tx{s})}) = \barMg - \barLg$. Note that $(\EE\{\bar{S}\})^{-1}$ exists when $\deltars>0$, so
\begin{align*}
	\|\xstn - (\EE\{\bar{S}\})^{-1} \Psis \zs \| 
	&\le  \| (\barMst )^{-1} - (\EE\{\bar{S}\})^{-1}\| \| \Psis \| \|  \zs  \|  \\
	&\le  \| (\barMst )^{-1}\|  \|(\EE\{\bar{S}\})^{-1}\| \| \barMst - \EE\{\bar{S}\}\| \| \Psis \| \|  \zs  \| \tag{From~\eqref{eq:append_invertdiffbound}} \\
	&\le \frac{1}{\deltars} \frac{1}{\deltars}  \|\barLst\| \| \Psis \| \|  \zs  \| \tag{From
    ~\eqref{eq:append_Minverse_deltars}} \le \frac{2\Deltarr}{\deltars^2} \| \Psis \| \|  \zs  \|.
\end{align*}
If $\Deltarr = 0$, the conclusion holds trivially. Now suppose $\Deltarr = \Omega(1)$. Then $\Deltars \vee\Deltasr \ge \deltars = \omega(1)$. The conclusion follows from $\deltars = \omega(\sqrt{\Deltarr(\Deltars \vee \Deltasr)})$. $\hfill\square$  

\textbf{Proof of \Cref{prop:profilexst}~(ii).}
If $\Deltars = 0$, $\xstn$ is a consensus vector and the result holds. Now assume $\Deltars = \Omega(1)$. The assumption $\lambda_2(\barLst) = \omega((\Deltars\vee\Deltasr)^{2-c_M}) > 2\Deltars > 0$ ensures that the eigenvalue $\lambda_1(\barMst)$ is simple. By $\xi$ denote the unit eigenvector corresponding to $\lambda_1(\barMst)$. Since $\barMst$ is symmetric, it has orthogonal unit eigenvalues $w^{(2)}$, $\dots$, $w^{(\nr)}$ corresponding to its eigenvalues $\lambda_2(\barMst)\le \dots \le \lambda_{\nr}(\barMst)$. Also $\xi$, $w^{(2)}$, $\dots$, $w^{(\nr)}$ form a basis of $\RR^{\nr}$, and $\xi \xi^\tp + \sum_{j=2}^{\nr} w^{(j)} (w^{(j)})^\tp = I_{\nr}$. So
\begin{align*}
	&\Big\|(\barMst)^{-1} \Psis \zs - \frac{1}{\nr \lambda_1{ (\barMst)}} \bfl_{\nr} \bfl_{\nr}^\tp \Psis \zs \Big\| \\
	&= 
	\Big\|(\barMst)^{-1} \Big (\xi \xi^\tp + \sum\nolimits_{j=2}^{\nr} w^{(j)} (w^{(j)})^\tp\Big) \Psis \zs  - \frac{1}{\nr \lambda_1{ (\barMst)}} \bfl_{\nr} \bfl_{\nr}^\tp \Psis \zs \Big\| \\
	&\le
	\Big\| (\barMst)^{-1} \xi \xi^\tp \Psis \zs  - \frac{1}{\nr \lambda_1 (\barMst)} \bfl_{\nr} \bfl_{\nr}^\tp \Psis \zs \Big\| \\
	&\quad + \big\| (\barMst)^{-1} \big (\sum\nolimits_{j=2}^{\nr} w^{(j)} (w^{(j)})^\tp\big)  \Psis \zs \big\|  \teL (I) + (II).
\end{align*}
Note that $(\barMst)^{-1} \xi = \xi/\lambda_1(\barMst)$, so
\begin{align*}
	(I) &= \Big\|  \frac{1}{\lambda_1 (\barMst)} \xi \xi^\tp \Psis \zs  - \frac{1}{\nr \lambda_1 (\barMst)} \bfl_{\nr} \bfl_{\nr}^\tp \Psis \zs \Big\| \\
	&\le \frac{1}{\lambda_1 (\barMst)} \Big\| \xi\xi^\tp - \frac{1}{\nr} \bfl_{\nr} \bfl_{\nr}^\tp \Big\| \|  \Psis \zs \| 
        \le \frac{2 \|  \Psis\| \| \zs \| \Deltars}{\lambda_1(\barMst) (\lambda_2(\barLst) - 2\Deltars)},
\end{align*}
where the last inequality is obtained from the following lemma with $A = \barMst$, $B = \barLst$, and $\zeta = (\lambda_2(\barLst) - 2\Deltars)/2$, which is a consequence of Theorem~5.5 in Chapter~I and Theorem~3.6 in Chapter~V of~\cite{stewart1990matrix}.
\begin{lemma}
    Let $A, B\in \RR^{n\times n}$ be symmetric, and $\mu$ with corresponding unit eigenvector $u$ (resp. $\nu$ with unit eigenvector $v$) be a simple eigenvalue of $A$ (resp. $B$). Denote $r = A v - \nu v$. If there exists $\zeta>0$ such that the eigenvalues of $A$ except $\mu$ lie outside the interval $[\nu-\zeta,\nu+\zeta]$, then $\|uu^\tp - vv^\tp \|\le \|r\|/\zeta \le \|A-B\|/\zeta$.
\end{lemma}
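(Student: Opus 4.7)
The plan is a standard sin-theta / spectral-gap argument. Since $A$ is symmetric, fix an orthonormal eigenbasis $u, u_2, \dots, u_n$ with corresponding eigenvalues $\mu, \mu_2, \dots, \mu_n$, and decompose $v = \alpha u + w$ with $w = \sum_{i=2}^n \beta_i u_i \in \tx{span}\{u\}^\perp$ and $\alpha^2 + \|w\|^2 = 1$. The strategy is first to bound $\|w\|$ by the residual $r$ via a spectral gap, and then to show that the projector difference $\|uu^\tp - vv^\tp\|$ is exactly $\|w\|$.

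For the residual bound, I would first observe that $Bv = \nu v$ gives $r = Av - Bv = (A - B)v$, so $\|r\| \le \|A - B\|\,\|v\| = \|A - B\|$; this is the second inequality of the lemma. For the first inequality, expand $r = (A - \nu I)v$ in the eigenbasis to obtain $r = \alpha(\mu - \nu) u + \sum_{i=2}^n \beta_i(\mu_i - \nu) u_i$. Orthonormality of the basis makes this a Pythagorean decomposition, so $\|r\|^2 = \alpha^2(\mu - \nu)^2 + \sum_{i=2}^n \beta_i^2(\mu_i - \nu)^2 \ge \zeta^2 \sum_{i=2}^n \beta_i^2 = \zeta^2 \|w\|^2$, using the hypothesis $|\mu_i - \nu| \ge \zeta$ for every $i \ge 2$. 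Hence $\|w\| \le \|r\|/\zeta$.

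The remaining step is the identity $\|uu^\tp - vv^\tp\| = \|w\|$, which I would prove by squaring. A direct expansion yields $(uu^\tp - vv^\tp)^2 = uu^\tp + vv^\tp - \alpha(uv^\tp + vu^\tp)$; substituting $v = \alpha u + w$ and using $u^\tp w = 0$ collapses this to $(1 - \alpha^2) uu^\tp + ww^\tp$. Since $u \perp w$, this operator is block-diagonal with both nonzero eigenvalues equal to $\|w\|^2$ (note $1 - \alpha^2 = \|w\|^2$), hence $\|uu^\tp - vv^\tp\| = \|w\| \le \|r\|/\zeta \le \|A - B\|/\zeta$; the degenerate case $w = 0$ forces $v = \pm u$ and both sides vanish. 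The main obstacle I expect is keeping this last identity clean: the ``square and simplify'' route above is preferable to picking a $2\times 2$ orthonormal basis for $\tx{span}\{u,v\}$ because it bypasses sign ambiguities in the $\sin\theta(u,v)$ formulation and handles the degenerate case uniformly.
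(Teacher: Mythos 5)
Your proof is correct, but it takes a genuinely different route from the paper: the paper does not prove this lemma from first principles at all, instead obtaining it as a consequence of Theorem~5.5 in Chapter~I and Theorem~3.6 in Chapter~V of Stewart and Sun's \emph{Matrix Perturbation Theory} (a residual bound for approximate eigenpairs combined with a $\sin\Theta$-type perturbation theorem for spectral projectors). What you have done is reprove the one-dimensional symmetric case of that machinery directly. All three of your steps check out: $r=(A-B)v$ with $\|v\|=1$ gives $\|r\|\le\|A-B\|$; the Pythagorean expansion of $r=(A-\nu I)v$ in an orthonormal eigenbasis of $A$ correctly yields $\|r\|^2\ge \sum_{i\ge 2}\beta_i^2(\mu_i-\nu)^2\ge\zeta^2\|w\|^2$, where simplicity of $\mu$ is what guarantees that the $\mu_i$, $i\ge 2$, are exactly ``the eigenvalues of $A$ except $\mu$'' and hence all satisfy $|\mu_i-\nu|\ge\zeta$; and the computation $(uu^\tp-vv^\tp)^2=(1-\alpha^2)uu^\tp+ww^\tp$ is right, so this symmetric operator has both nonzero eigenvalues equal to $\|w\|^2$ and $\|uu^\tp-vv^\tp\|=\|w\|$, which is precisely $\sin\theta(u,v)$. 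The trade-off is clear: the paper's citation is shorter and inherits the general multi-dimensional invariant-subspace versions of these results for free, whereas your argument is self-contained, elementary, makes the role of each hypothesis explicit (and incidentally shows that simplicity of $\nu$ is not actually needed for this rank-one statement), and handles the degenerate case $w=0$ uniformly.
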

For $(II)$, it holds that
\begin{align*}
	(II) &= \Big\|\sum\nolimits_{j=2}^{\nr} \frac{1}{\lambda_j(\barMst)} w^{(j)} (w^{(j)})^\tp  \Psis \zs \Big\| \\
	&= 
	\sqrt{\sum\nolimits_{j=2}^{\nr} \frac{[(w^{(j)})^\tp  \Psis \zs ]^2}{\lambda_j^2(\barMst)} } \\
	&\le 
	\frac{1}{\lambda_2(\barMst)} \sqrt{\sum\nolimits_{j=2}^{\nr} [(w^{(j)})^\tp  \Psis \zs ]^2} \\
	&=
	\frac{1}{\lambda_2(\barMst)} \Big\| \sum\nolimits_{j=2}^{\nr} w^{(j)}  (w^{(j)})^\tp  \Psis \zs \Big\|  \\
	&=
	\frac{\| (I - \xi\xi^\tp) \Psis \zs\|}{\lambda_2(\barMst)}   \le 
	\frac{2 \| \Psis \| \| \zs\|}{\lambda_2(\barMst)}.
\end{align*}
Let $\gamma_n = \bfl^{\tp}_{\nr} \Psis \zs/(\nr \lambda_1(\barMst))$. Then under the assumptions of the theorem,
\begin{align*}
    \|\xstn - \gamma_n \bfl_{\nr}\| &\le 
    \Big(\frac{2\Deltars \|\Psis\|}{\lambda_1(\barMst) (\lambda_2(\barLst) - 2 \Deltars)} + \frac{2\|\Psis\|}{\lambda_2(\barMst)} \Big) \|\zs\| \\
    &\le \Big(\frac{2\Deltars (\Deltars\vee\Deltasr)}{\lambda_1(\barMst) (\lambda_2(\barLst) - 2 \Deltars)} + \frac{2(\Deltars\vee\Deltasr)}{\lambda_2(\barLst) - 2\Deltars} \Big) \|\zs\| \\
    &= \Big(\frac{2\Deltars (\Deltars\vee\Deltasr)}{\lambda_1(\barMst) \lambda_2(\barLst)(1-o(1))} + \frac{2(\Deltars\vee\Deltasr)}{\lambda_2(\barLst)(1-o(1))} \Big) \|\zs\| = o(\|\zs\|).
\end{align*}
\indent\textbf{Proof of \Cref{prop:profilexst}~(i)$^\prime$.}
Note that $\deltars^+ > 0$ implies $\lambda_1(\barMst)>0$, so $(\barMst)^{-1}$ exists. Denote
\begin{small}
\begin{align*}
    ~\\
    (\barMst)^{-1} \Let~ \begin{NiceMatrix}[columns-width=2mm]
        \tilde{M}^{(11)} & \tilde{M}^{(12)}   &  ~~n_{\tx{r}1} \\
        \tilde{M}^{(21)} & \tilde{M}^{(22)} & ~~n_{\tx{r}2} 
        \CodeAfter
          \SubMatrix[{1-1}{2-2}]
          \OverBrace[shorten,yshift=2mm]{1-1}{1-1}{n_{\tx{r}1}}
          \OverBrace[shorten,yshift=2mm]{1-2}{1-2}{n_{\tx{r}2}}
          \SubMatrix{.}{1-1}{1-2}{\}}[xshift=2mm]
          \SubMatrix{.}{2-1}{2-2}{\}}[xshift=2mm]
    \end{NiceMatrix}, ~\tilde{M}^* = \tilde{M}^{(21)}.
\end{align*}
\end{small}
\hspace{-1mm}Then we have that
\begin{align*}
    \xstn - \begin{bmatrix}
        (\diag(\Psis_+ \bfl_{\ns}))^{-1} \Psis_+ \zs \\ \tilde{M}^{*} \Psis_+ \zs
    \end{bmatrix}
    = \begin{bmatrix}
        [\tilde{M}^{(11)} - (\diag(\Psis_+ \bfl_{\ns}))^{-1}] \Psis_+ \zs \\ \bfo
    \end{bmatrix},    
\end{align*}
so it suffices to bound $\|[\tilde{M}^{(11)} - (\diag(\Psis_+ \bfl_{\ns}))^{-1}] \Psis_+ \zs\|$. $\lambda_1(\bar{M}^{*(22)}) = \Omega(1)$ implies that $\bar{M}^{*(22)}$ is invertible. From the inverse formula of block matrices \cite{henderson1981deriving}, it follows that $\tilde{M}^{(11)} = [\bar{M}^{*(11)} - \bar{M}^{*(12)} (\bar{M}^{*(22)})^{-1} \bar{M}^{*(21)}]^{-1}$. Hence,
\begin{small}
\begin{align}\nonumber
    &\|\tilde{M}^{(11)} - (\bar{M}^{*(11)})^{-1}\| \\\nonumber
    &= \|[\bar{M}^{*(11)} - \bar{M}^{*(12)} (\bar{M}^{*(22)})^{-1} \bar{M}^{*(21)}]^{-1} - (\bar{M}^{*(11)})^{-1} \| \\\label{eq:appendC:from_invertdiffbound}
    &\le \|[\bar{M}^{*(11)} - \bar{M}^{*(12)} (\bar{M}^{*(22)})^{-1} \bar{M}^{*(21)}]^{-1}\| \| (\bar{M}^{*(11)})^{-1} \| \|\bar{M}^{*(12)} (\bar{M}^{*(22)})^{-1} \bar{M}^{*(21)} \|\\\nonumber
    &= \frac{\|\bar{M}^{*(12)} (\bar{M}^{*(22)})^{-1} \bar{M}^{*(21)} \|}{\lambda_1(\bar{M}^{*(11)}) \lambda_1(\bar{M}^{*(11)} - \bar{M}^{*(12)} (\bar{M}^{*(22)})^{-1} \bar{M}^{*(21)})}  \\\label{eq:appendC:from_weyl}
    &\le \frac{\|\bar{M}^{*(12)} (\bar{M}^{*(22)})^{-1} \bar{M}^{*(21)} \|}{\lambda_1(\bar{M}^{*(11)}) (\lambda_1(\bar{M}^{*(11)}) - \|\bar{M}^{*(12)} (\bar{M}^{*(22)})^{-1} \bar{M}^{*(21)} \|)} \\\label{eq:appendC:from_invertbound2}
    &\le \frac{\|\bar{M}^{*(21)}\|^2/\lambda_1(\bar{M}^{*(22)})}{\lambda_1(\bar{M}^{*(11)}) (\lambda_1(\bar{M}^{*(11)}) - \|\bar{M}^{*(21)}\|^2/\lambda_1(\bar{M}^{*(22)}))},
\end{align}
\end{small}
\hspace{-1mm}where~\eqref{eq:appendC:from_invertdiffbound} follows from~\eqref{eq:append_invertdiffbound}, 
and~\eqref{eq:appendC:from_weyl} from~\eqref{eq:append_weyl}. Similarly we obtain that
\begin{align}\label{eq:appendC:diagonalbound}
    \|(\bar{M}^{*(11)})^{-1} - (\diag(\Psis_+ \bfl_{\ns}))^{-1}\| \le \frac{\|\bar{M}^{*(11)} - \diag(\Psis_+ \bfl_{\ns})\|}{\lambda_1(\bar{M}^{*(11)}) \deltars^+}.
\end{align}
The Gershgorin theorem yields that $\lambda_1(\bar{M}^{*(11)}) \ge \deltars^+$ and $\|\bar{M}^{*(11)} - \diag(\Psis_+ \bfl_{\ns})\| \le 2\Deltarr^+$. The conclusion then follows from~\eqref{eq:appendC:from_invertbound2} and~\eqref{eq:appendC:diagonalbound}. $\hfill\square$  

\section{Proof of Theorem~\ref{thm:concentration_states_time}}\label{appen:thm:concentration_states_time}
Since 
we have derived a bound for $\|\xgn - \xstn\|$ in \Cref{thm:concentration_states},  it suffices to bound $\|\Smtcg(t) - \xgn\|$. To this end, we introduce the following concentration inequality (Lemma~1 of~\cite{xing2023community}) for a Markov chain.
\begin{lemma}[Concentration of time-averaged states]\label{lem:concentration_St_original}
	Consider a discrete-time Markov chain $\{X(t)\}$ taking values on a compact state space $\mtcx$ and having a unique stationary distribution $\pi$. For a function $f: \mtcx \to \RR$ and $\iota \Let \int_\mtcx f(x) \pi(dx)$, denote $g(x) \Let \sum_{t=0}^\infty \EE\{f(X(t)) - \iota|X(0) = x\}$ and $\|g\|_s \Let \sup\{|g(x)|: x\in \mtcx\}$.
	If $\|g\|_s < \infty$, then it holds for $S_f(t) \Let \frac1t \sum_{i=0}^{t-1} f(X(i))$, $\eps > 0$, and $t > 2 \|g\|_s/\eps$ that
	\begin{align*}
		\PP\{|S_f(t) - \iota| \ge \eps\} \le 2 \exp \{ -(t\eps-2\|g\|_s)^2/(2t\|g\|_s^2) \}.
	\end{align*}
\end{lemma}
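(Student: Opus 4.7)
The plan is to use the Poisson-equation trick: decompose the centered partial sum of $f(X(i))$ into a bounded boundary term plus a martingale with bounded increments, then invoke a Hoeffding--Azuma-type bound. Define the one-step operator $Pg(x) := \EE[g(X(1)) \mid X(0) = x]$. The first step is to verify that $g$ solves the Poisson equation $g(x) - Pg(x) = f(x) - \iota$. By the Markov property and a reindex of the sum in the definition of $g$,
\begin{align*}
    Pg(x) = \sum_{s=0}^\infty \EE[f(X(s+1)) - \iota \mid X(0) = x] = g(x) - (f(x) - \iota),
\end{align*}
where the interchange of sum and conditional expectation is justified by the uniform bound $\|g\|_s < \infty$ via dominated convergence.

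Next, I would substitute $f(X(i)) - \iota = g(X(i)) - \EE[g(X(i+1)) \mid X(i)]$ into $\sum_{i=0}^{t-1}(f(X(i)) - \iota)$ and reindex the second sum to produce a telescoping decomposition
\begin{align*}
    \sum_{i=0}^{t-1}(f(X(i)) - \iota) = g(X(0)) - \EE[g(X(t)) \mid X(t-1)] + \sum_{i=1}^{t-1} D_i,
\end{align*}
where $D_i := g(X(i)) - \EE[g(X(i)) \mid \mtcf_{i-1}]$ with $\mtcf_i := \sigma(X(0),\dots,X(i))$ is an $\mtcf_i$-martingale difference sequence. Since $|g| \le \|g\|_s$ pointwise, the boundary term $g(X(0)) - \EE[g(X(t)) \mid X(t-1)]$ is bounded in absolute value by $2\|g\|_s$. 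Hence on the event $|S_f(t) - \iota| \ge \eps$, i.e., $|\sum_{i=0}^{t-1}(f(X(i)) - \iota)| \ge t\eps$, the martingale sum must satisfy $|\sum_{i=1}^{t-1} D_i| \ge t\eps - 2\|g\|_s$, which is strictly positive by the assumption $t > 2\|g\|_s/\eps$.

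Finally, I would bound the conditional moment generating function of each $D_i$ via Hoeffding's lemma. Conditionally on $\mtcf_{i-1}$, $D_i$ is a zero-mean random variable supported in an interval of length at most $2\|g\|_s$ (since $g(X(i)) \in [-\|g\|_s, \|g\|_s]$), so $\EE[e^{s D_i} \mid \mtcf_{i-1}] \le \exp(s^2 \|g\|_s^2 / 2)$ for every $s \in \RR$. Iterating via the tower property and optimizing in $s$ (Chernoff) yields $\PP\{|\sum_{i=1}^{t-1} D_i| \ge a\} \le 2\exp\{-a^2/(2(t-1)\|g\|_s^2)\}$; setting $a = t\eps - 2\|g\|_s$ and loosening $t-1$ to $t$ in the denominator gives the stated inequality. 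The delicate point is choosing the right Hoeffding constant: exploiting that $g(X(i))$ (rather than $D_i$ itself) has range at most $2\|g\|_s$ produces the $\|g\|_s^2/2$ exponent, which is what ultimately yields the denominator $2t\|g\|_s^2$ in the final bound instead of the looser $8t\|g\|_s^2$ one would get from naively bounding $|D_i| \le 2\|g\|_s$ in Azuma's inequality.
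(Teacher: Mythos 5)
The paper does not actually prove this lemma: it is imported verbatim as Lemma~1 of the companion reference \cite{xing2023community}, which is in essence the classical Hoeffding-type bound for Markov chains of Glynn and Ormoneit. Your proof is correct and is precisely the standard argument behind that result: verify that $g$ solves the Poisson equation $g - Pg = f - \iota$, telescope the centered partial sum into a boundary term of magnitude at most $2\|g\|_s$ plus a martingale whose differences $D_i$ are conditionally supported in an interval of length $2\|g\|_s$, and conclude by Hoeffding's lemma, the tower property, and Chernoff optimization. The constants all check out: the conditional mgf bound $\exp(s^2\|g\|_s^2/2)$ (coming from the range of $g(X(i))$ rather than the cruder bound $|D_i|\le 2\|g\|_s$), the threshold $t\eps - 2\|g\|_s > 0$ guaranteed by $t > 2\|g\|_s/\eps$, and the relaxation of $t-1$ to $t$ in the denominator all reproduce the stated inequality exactly. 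The one soft spot is your justification of the interchange in $Pg = g - (f-\iota)$: dominated convergence requires a uniform bound on the partial sums $\sum_{t=0}^{N}\EE\{f(X(t))-\iota \mid X(0)=x\}$, which does not follow from $\|g\|_s<\infty$ alone. This is, however, a technicality already latent in the lemma's hypotheses (the series must converge for $g$ to be defined at all), and it is harmless in the paper's application, where $G^{\mtcg,n}(x)=\sum_{t\ge0}(\barQg)^t(x-\xgn)$ converges geometrically since $\rho(\barQg)<1$.
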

Conditioned on a graph $\mtcg$, $\rho(\barQg) < 1$ ensures that the gossip model has a well-defined unique stationary distribution with mean $\xgn$, which follows from a standard argument for gossip models (see \cite{acemouglu2013opinion,ravazzi2015ergodic,xing2023community}). \Cref{lem:bound_alpha_rhoQ} ensures that $\rho(\barQg) < 1$ holds with probability at least $1 - \eta_{Q,n}$, where the probability is over the randomness of $\mtcg$.

Now we derive a bound for $\|\Smtcg(t) - \xgn\|$ given a graph $\mtcg$ such that $\rho(\barQg) < 1$. 
Let $f_i(x) = x_i$, $\forall x \in \RR^{\nr}$, $1\le i \le \nr$, and \Cref{lem:concentration_St_original} ensures that for all $\eps > 0$ 
\begin{align}\label{eq:appen:S_preliminary}
	\PP_{\mtcg}\{|S_i^{\mtcg,n}(t) - \bfx_i^{\mtcg,n}| \ge \eps\} \le 2 \exp \{ - (t\eps - 2 \|g_i^{\mtcg,n}\|_s)^2/(2t\|g_i^{\mtcg,n}\|_s^2) \},
\end{align}
where $t >2\|g_i^{\mtcg,n}\|_s/\eps$ and $g_i^{\mtcg,n}$ is the $i$-th component of $G^{\mtcg,n}(x) = \sum_{t=0}^{\infty} \EEg\{X(t) - \xgn | X(0) = x\}$, $x \in \RR^{\nr}$.
Note that $\|g_i^{\mtcg,n}\|_s < \infty$ because for all $x \in \mtcx = [-c_x, c_x]^{\nr}$  
\begin{align*}
	\|G^{\mtcg,n}(x)\| 
	&\le \sum_{t=0}^{\infty} \Big\|(\barQg)^t x - \sum_{i=t}^{\infty} (\barQg)^i \barRg \zs \Big\| 
        = \sum_{t=0}^{\infty} \Big\| (\barQg)^t \Big( x - \sum_{i=0}^{\infty} (\barQg)^i \barRg \zs \Big) \Big\| \\
	&\le \sum_{t=0}^{\infty} \|\barQg\|^t \|x - \xgn\|  = \frac{\|x - \xgn\|}{1 - \rho(\barQg)} \le \frac{2\sqrt{\nr} c_x}{1 - \rho(\barQg)} \teL s_*^{\mtcg,n}.
\end{align*}
Hence $\|g_i^{\mtcg,n}\|_s = \sup_{x \in \mtcx}\{|G_i^{\mtcg,n}(x)|\} \le \sup_{x \in \mtcx}\{\|G^{\mtcg,n}(x)\|\} \le s_*^{\mtcg,n}$. Therefore, from the union bound and~\eqref{eq:appen:S_preliminary}, it follows that for all $\eps > 0$ and $t >2s_*^{\mtcg,n}/\eps$
\begin{align*} 
	\PP_{\mtcg}\{\|\Smtcg(t) - \xgn\| \ge \sqrt{\nr} \eps\} 
        \le  2 \nr \exp \{ - (t\eps - 2 s_*^{\mtcg,n})^2/[2t(s_*^{\mtcg,n})^2] \},
\end{align*}
\Cref{lem:bound_alpha_rhoQ} implies that
\begin{align*}
    \PP \{ \exp \{ - (t\eps - 2 s_*^{\mtcg,n})^2/[2t(s_*^{\mtcg,n})^2] \} \le \exp \{ - (t\eps - 2 \bar{s}_*)^2/[2t(\bar{s}_*)^2] \} \} &\ge 1 - \eta_{Q,n},
\end{align*}
where $\bar{s}_* = 12\sqrt{\nr}c_x \alphast/\deltars$ and $\eta_{Q,n} = r_0n^{1-\deltars/(8\log n)} + 2n^{-2/3}$ if $\deltars > 8\log n$, and $\bar{s}_* = 6\sqrt{\nr}c_x \alphast/(\lambda_1(\barMst)$ $ - \eps_{M,n})$ and $\eta_{Q,n} = \eta_{M,n} + 2n^{-1/8}$ if $\lambda_1(\barMst) > \eps_{M,n}$ and $\Delta_r \ge \log n$.
Denoting $\mtcs_1 = \{ \rho(\barQg) \le \eps_{Q,n}\}$, by the law of total probability,
\begin{align*}
	&\PP\{\|\Smtcg(t) - \xgn\| \ge \sqrt{\nr} \eps\} \\
	&= \PP\{\|\Smtcg(t) - \xgn\| \ge \sqrt{\nr} \eps | \mtcs_1 \} \PP\{\mtcs_1\} + \PP\{\|\Smtcg(t) - \xgn\| \ge \sqrt{\nr} \eps | \mtcs_1^{\tx{c}} \} \PP\{\mtcs_1^{\tx{c}}\} \\
	&\le 2\nr \exp\Big\{ - \frac{(t\eps - 2 \bar{s}_*)^2}{2t(\bar{s}_*)^2} \Big\} \PP\{\mtcs_1\} + \PP\{\mtcs_1^{\tx{c}}\} \le 2\nr \exp\Big\{ - \frac{(t\eps - 2 \bar{s}_*)^2}{2t(\bar{s}_*)^2} \Big\} + \eta_{Q,n}.
\end{align*}
Therefore, the conclusion follows from the above bound and \Cref{thm:concentration_states}.


\bibliographystyle{siamplain}
\bibliography{bibliography}
\end{document}